\newtheorem{theorem}{Theorem}%[section]
\newtheorem{corollary}{Corollary}
\newtheorem{lemma}{Lemma}
\newtheorem{proposition}{Proposition}
\newtheorem{definition}{Definition}
\newcommand{\eqn}[1]{(\ref{eqn:#1})}
\newcommand{\eq}[1]{(\ref{eq:#1})}
\newcommand{\thm}[1]{\hyperref[thm:#1]{Theorem~\ref*{thm:#1}}}
\newcommand{\cor}[1]{\hyperref[cor:#1]{Corollary~\ref*{cor:#1}}}
\newcommand{\defn}[1]{\hyperref[defn:#1]{Definition~\ref*{defn:#1}}}
\newcommand{\lem}[1]{\hyperref[lem:#1]{Lemma~\ref*{lem:#1}}}
\newcommand{\prop}[1]{\hyperref[prop:#1]{Proposition~\ref*{prop:#1}}}
\newcommand{\assum}[1]{\hyperref[assum:#1]{Assumption~\ref*{assum:#1}}}
\newcommand{\fig}[1]{\hyperref[fig:#1]{Figure~\ref*{fig:#1}}}
\newcommand{\sfig}[1]{\hyperref[sfig:#1]{Supplementary Fig.~\ref*{sfig:#1}}}
\newcommand{\tab}[1]{\hyperref[tab:#1]{Table~\ref*{tab:#1}}}
\newcommand{\algo}[1]{\hyperref[algo:#1]{Algorithm~\ref*{algo:#1}}}
\renewcommand{\sec}[1]{\hyperref[sec:#1]{Section~\ref*{sec:#1}}}
\newcommand{\append}[1]{\hyperref[app:#1]{Supplementary Note~\ref*{app:#1}}}
\newcommand{\fac}[1]{\hyperref[fac:#1]{Fact~\ref*{fac:#1}}}
\newcommand{\lin}[1]{\hyperref[lin:#1]{Line~\ref*{lin:#1}}}
\newcommand{\prob}[1]{\hyperref[prob:#1]{Problem~\ref*{prob:#1}}}
\newcommand{\conj}[1]{\hyperref[conj:#1]{Conjecture~\ref*{conj:#1}}}
\newcommand{\exam}[1]{\hyperref[exam:#1]{Example~\ref*{exam:#1}}}
\def\>{\rangle}
\def\<{\langle}
\newcommand{\R}{\mathbb{R}}
\newcommand{\M}{\mathcal{M}}
\newcommand{\eps}{\epsilon}
\DeclareMathOperator{\Var}{Var}
\DeclareMathOperator{\diag}{diag}
\DeclareMathOperator{\Haf}{Haf}
\renewcommand{\emptyset}{\varnothing}
\def\:{\hbox{\bf:}}
\def\mix{\operatorname{mix}}
\def\TV{\operatorname{TV}}
\newcommand{\blue}[1]{\textcolor{blue}{#1}}
\begin{document}
\title{Efficient Classical Sampling from Gaussian Boson Sampling Distributions on Unweighted Graphs}

\author{Yexin Zhang$^{1,2}$}
\thanks{Equal Contribution.}

\author{Shuo Zhou$^{1,2}$}
\thanks{Equal Contribution.}

\author{Xinzhao Wang$^{1,2}$}
\thanks{Equal Contribution.}

\author{Ziruo Wang$^{1,2}$,\\Ziyi Yang$^{1,2,3}$}

\author{Rui Yang$^{1,2}$}

\author{Yecheng Xue$^{1,2}$}

\author{Tongyang Li$^{1,2}$}

\thanks{Corresponding author. Email: tongyangli@pku.edu.cn}

\affiliation{\vspace{1em}\textsuperscript{1}Center on Frontiers of Computing Studies, Peking University, Beijing 100871, China}

\affiliation{\textsuperscript{2}School of Computer Science, Peking University, Beijing 100871, China}

\affiliation{\textsuperscript{3}School of Mathematical Sciences, Peking University, Beijing 100871, China}

\begin{abstract} 
Gaussian Boson Sampling (GBS) is a promising candidate for demonstrating quantum computational advantage and can be applied to solving graph-related problems. In this work, we propose Markov chain Monte Carlo-based algorithms to sample from GBS distributions on undirected, unweighted graphs. Our main contribution is a double-loop variant of Glauber dynamics, whose stationary distribution matches the GBS distribution. We further prove that it mixes in polynomial time for dense graphs using a refined canonical path argument. 
Numerically, we conduct experiments on unweighted graphs with 256 vertices, larger than the scales in former GBS experiments as well as classical simulations. In particular, we show that both the single-loop and double-loop Glauber dynamics improve the performance of original random search and simulated annealing algorithms for the max-Hafnian and densest $k$-subgraph problems up to 10$\times$. Overall, our approach offers both theoretical guarantees and practical advantages for efficient classical sampling from GBS distributions on unweighted graphs.
\end{abstract}

\maketitle

%%%%%%%%%%%%%%%%%%%%%%%%%%%%%%%%%%%%%%%%%%%%%%%%%%%%%%%%%%%%%%%%%%%

\section{Introduction}

Recent years have witnessed increasing efforts to demonstrate quantum computational advantage over classical computers using real quantum devices~\cite{Arute_2019,Zhong_2020}. In particular, Gaussian Boson Sampling (GBS), implemented with quantum photonics, has shown promise with experimental demonstrations achieving advantages over classical methods~\cite{Zhong_2020,PhysRevLett.127.180502,PhysRevLett.131.150601,madsen2022quantum,deshpande2022quantum}. GBS has been applied to graph-related problems, such as the densest $k$-subgraph problem~\cite{arrazola2018using} and max-Hafnian calculations~\cite{arrazola2018quantum} through numerical simulations. These applications are based on the encoding of a graph’s adjacency matrix into a Gaussian state, where the probability of measuring a specific photon-number pattern is proportional to the squared Hafnian of the corresponding submatrix~\cite{hamilton2017gaussian,PhysRevA.100.032326,bradler2018gaussian}. More recently,~\citet{PhysRevLett.130.190601} demonstrated GBS on a noisy quantum device, enhancing classical algorithms for solving graph problems.

On the other hand, classical simulation algorithms for GBS have been actively explored. Quesada and Arrazola~\cite{Quesada_2020} introduced an exact classical algorithm for simulating GBS by sequentially sampling the number of photons in each mode conditioned on the previously sampled modes, which runs in polynomial space and exponential time. \citet{oh2022classical} developed a classical algorithm for Boson sampling based on dynamic programming, which exploits the limited connectivity of the linear-optical circuit to improve efficiency. Separately, \citet{Oh_2024} proposed a tensor-network-based classical algorithm to simulate large-scale GBS experiments with photon loss, requiring relatively modest computational resources.  In the context of graph-theoretical applications of GBS, an important  property is that the adjacency matrix encoded in the Gaussian Boson sampler is non-negative, which is believed to make the problem more tractable than in the general case. The aforementioned result~\cite{Quesada_2020} reduced the simulation of GBS with non-negative matrices to the problem of estimating Hafnians of non-negative matrices, which can be done efficiently for the adjacency matrix of certain strongly expanding graphs \cite{rudelson2016hafnians}. \citet{PRXQuantum.5.020341} also took advantage of this non-negativity property to design a quantum-inspired classical algorithm for finding dense subgraphs and their numerical results suggest that the advantage offered by a Gaussian Boson sampler is not significant. However, an open question remains whether a classical algorithm for GBS on general graphs with provable performance guarantees can achieve a computational cost comparable to that of a Gaussian Boson sampler.

In this work, we adopt Markov chain Monte Carlo (MCMC) algorithms to sample from GBS distributions on unweighted graphs. MCMC is a standard class of sampling algorithms with well-established theoretical guarantees~\cite{jerrum2003counting,levin2017markov}. Among MCMC methods, Glauber dynamics~\cite{glauber1963time} is particularly widespread due to its simplicity and rigorous analytical foundations. Glauber dynamics generates samples from the matchings of an undirected and unweighted graph by iteratively adding or removing edges with biased transition probabilities. The resulting stationary distribution is proportional to the power of the number of edges in the matchings. For each matching sampled from Glauber dynamics, we consider its support vertex set as a sampled subset of vertices. The probability of sampling a given vertex set is further weighted by the number of perfect matchings within that set, which is also the Hafnian of the adjacency matrix of the graph. 

We propose a double-loop Glauber dynamics with a rigorous theoretical guarantee that its stationary distribution is identical to the sampling distribution of GBS on unweighted graphs. Specifically, unlike the standard single-loop Glauber dynamics, which yields a stationary distribution proportional to the Hafnian of subgraphs, the double-loop approach ensures a stationary distribution proportional to the square of the Hafnian, which coincides with the sampling distribution of GBS. Concretely, in the double-loop Glauber dynamics, when deciding whether to remove an edge, we run a secondary Markov chain to uniformly sample a perfect matching from the current subgraph. The removal decision is then based on whether this edge reappears in the newly sampled matching. Furthermore, for dense graphs, we prove that the double-loop Glauber dynamics has a polynomial mixing time, demonstrating its computational feasibility in these cases. The key to demonstrate the rapid mixing lies in the canonical path technique introduced in~\cite{jerrum2003counting}, which routed flows between every pair of matchings without creating particularly congested ``pipes'', and bounded the mixing time by estimating the maximum congestion of possible transitions. More specifically, for complete graphs, we establish an enhanced canonical path framework that permits multiple paths between any two matchings. By constructing specially designed paths with favorable symmetry properties to distribute flows efficiently, we ultimately estimate maximum congestion through calculating total congestion of symmetrical transitions.
For dense graphs, the maximum congestion is bounded by estimating the ratio between the congestion of dense graphs and complete graphs, thereby yielding a polynomial mixing time. 
This result is particularly significant because the dense graph represents the regime where classical methods are challenging. Sparse graphs often permit classical shortcuts that dense graphs lack. For instance, the Maximum Clique problem, a canonical task that GBS has been proposed to solve \cite{banchi2020molecular}, is NP-hard in general. However, its classical complexity is greatly reduced on sparse graphs, where algorithms can exploit structural properties such as low degeneracy to find maximal cliques efficiently \cite{eppstein2013listing}.

Our numerical simulations confirm that both single-loop and double-loop Glauber dynamics improve the performance of original random search and simulated annealing algorithms for the max-Hafnian and densest $k$-subgraph problems, providing empirical validations of our theoretical findings. Our experiments are conducted on unweighted graphs with 256 vertices, larger than the scales in former GBS experiments~\cite{PhysRevLett.130.190601} as well as classical simulations~\cite{Oh_2024}. In verification, the variants enhanced by Glauber dynamics are up to 3$\times$ better than the original classical algorithms. On random graphs, the enhanced variants achieve score advantage up to 4$\times$. On bipartite graphs, the enhanced variants are up to 10$\times$ better than the original classical algorithms. 
 
The rest of the paper is organized as follows. In \sec{GBS-defn}, we review the definition of GBS. We introduce the standard Glauber dynamics for sampling matchings in \sec{Glauber}, and then propose our double-loop Glauber dynamics for sampling from GBS distributions on unweighted graphs with provable guarantee in \sec{double-loop-Glauber}. We present all experimental results in \sec{experiments}.

%%%%%%%%%%%%%%%%%%%%%%%%%%%%%%%%%%%%%%%%%%%%%%%%%%%%%%%%%%%%%%%%%%%

\section{Results}
\subsection{Gaussian Boson Sampling for graph problems}\label{sec:GBS-defn}
Boson sampling is a quantum computing model where $N$ identical photons pass through a $M$-mode linear interferometer and are detected in output modes~\cite{10.1145/1993636.1993682}. In the standard Boson sampling paradigm, the probability of a given output configuration $\bar{n}$ is related to the permanent of a submatrix of the interferometer’s $M\times M$ unitary matrix $T$, which we call $T_S$:
\begin{align}
\operatorname{Pr}(\bar{n})=\left|\operatorname{Perm}\left(T_S\right)\right|^2=\left|\sum_{\sigma \in \mathcal{S}_N} \prod_{i=1}^N T_{S_{i, \sigma(i)}}\right|^2,
\end{align}
where $\mathcal{S}_N$ is the set of all permutations on $[N]:=\{1,2,\ldots,N\}$, and $T_S$ is a matrix composed of the intersecting elements of the columns and the rows of $T$  determined by the input positions and output $\bar{n}$, respectively.

Gaussian Boson Sampling (GBS) is a variant that uses Gaussian states with squeezing parameters $\{r_i\}_{i=1}^M$ as inputs instead of single photons. In GBS, the output photon-number distribution is determined by a matrix function called the Hafnian. The Hafnian of a $2n\times2n$ matrix $A$ is defined as 
\begin{align}\label{eq:Haf}
\operatorname{Haf}(A)=\frac{1}{2^n n!} \sum_{\sigma \in \mathcal{S}_{2 n}} \prod_{i=1}^n A_{\sigma(2 i-1), \sigma(2 i)}.
\end{align}
Specifically, the Hafnian of the adjacency matrix of an unweighted graph equals to the number of its perfect matchings~\cite{bradler2018gaussian}.
The probability of measuring a specific photon number pattern $\bar{n} = (n_1, n_2, \dots, n_M)$ in an $M$-mode GBS experiment can be expressed in closed-form as~\cite{hamilton2017gaussian,PhysRevA.100.032326}
\begin{align}\label{eq:prob}
\operatorname{Pr}(\bar{n})=\frac{1}{n_1!n_2!\cdots n_M!\sqrt{\det(\sigma+\mathbb{I}_{2 M} / 2)}} \operatorname{Haf}({A}_S),
\end{align}
where $\sigma$ is the $2M\times 2M$ covariance matrix of the Gaussian state and ${A}_S$ is the submatrix by selecting the intersection of columns and rows only according to output $\bar{n}$ from the sampling matrix ${A}={B}\oplus {B^*}$ with
\begin{align}\label{eqn:squeezing}
    {B} = T \diag \{\tanh r_1,\tanh r_2,\dots,\tanh r_M\} T^{\top}.
\end{align}

Given an arbitrary undirected graph with potentially complex-valued symmetric adjacency matrix $\Delta$, we aim to engineer $B=c\Delta$ with an appropriate rescaling parameter $c$. It is possible to find such a $T$ by the Takagi-Autonne decomposition (see~\cite{doi:10.1139/cjp-2024-0070} and Section 4.4 of~\cite{horn2012matrix}) when $0<c<1/({\max_j|\lambda_j|})$, where $\{\lambda_j\}$ is the eigenvalue set of $A$.\
Subsequently, the sampling matrix becomes $A=c \Delta\oplus c\Delta^*$. 

When $N=O(\sqrt{M})$, with dominating probability all the click number $n_i\leq 1$~\cite{10.1145/1993636.1993682}. Then, all the factorials in \eqref{eq:prob} disappear since $0!=1!=1$. On the other hand, the covariance matrix $\sigma$ and the sampling matrix $A$ is related by~\cite{PhysRevA.100.032326}
\begin{align}
    (\sigma+\mathbb{I}_{2 M} / 2)^{-1}=\mathbb{I}_{2 M}-\left(\begin{array}{ll}
0_M & \mathbb{I}_M \\
\mathbb{I}_M & 0_M
\end{array}\right)A.
\end{align}
Thus, the probability of outputting the subgraph with vertex set $S$ is given by 
\begin{align}
\operatorname{Pr}_{A}(S)& =\sqrt{\det[(\sigma+\mathbb{I}_{2 M} / 2)^{-1}]} \operatorname{Haf} [(c\Delta\oplus c\Delta^*)_S] \\
& = \sqrt{\det \left(\begin{array}{cc}
\mathbb{I}_M & -c\Delta^* \\
-c\Delta & \mathbb{I}_M
\end{array}\right)} \operatorname{Haf}(c\Delta_S)\cdot \operatorname{Haf}(c\Delta_S)^{*}\\
& = \prod_{j=1}^{2M} \sqrt{1-c^2 \lambda_j^2} c^{2|S|} |\operatorname{Haf} (\Delta_S)|^2.
\end{align}

In all, the task of sampling from GBS distributions on a graph with real-valued adjacency matrix is equivalent to developing algorithms that sample a subgraph with probability proportional to
\begin{align}
    \operatorname{Pr}(\Delta_S)\propto c^{2|S|} \operatorname{Haf}^2 (\Delta_S).
\end{align}
We remark that as a special case of the Hafnian, the permanent of a matrix with non-negative entries admits probabilistic polynomial-time approximation, in contrast to the \#P-hardness for general complex matrices \cite{10.1145/1993636.1993682, valiant1979complexity, jerrum2004polynomial}. This suggests that the applications of GBS for max-Hafnian and densest $k$-subgraph of nonnegative-weight graphs may be simpler than the complex-valued version, and efficient classical sampling algorithms from GBS distributions on such instances are worth investigation.
%%%%%%%%%%%%%%%%%%%%%%%%%%%%%%%%%%%%%%%%%%%%%%%%%%%%%%%%%%%%%%%%%%%

\subsection{Glauber dynamics for matching}\label{sec:Glauber}

Our algorithm is built upon the Glauber dynamics, a well-established Markov chain Monte Carlo method for sampling. In particular, the Glauber dynamics that samples across the space of all matchings of a graph is known as the monomer-dimer model. Given a graph $G=(V,E)$ and a fugacity parameter $\lambda>0$, we denote $\M$ to be the collection of all the matchings of $G$. We define the Gibbs distribution $\mu$ for the monomer-dimer model as $\mu(X) = w(X)/Z$ for $\forall X\in \M$, where the weight $w(X) = \lambda^{|X|}$ and $Z$ is a normalizing factor known as the partition function. (This partition function is known as the matching polynomial, which has rich study in MCMC literature~\cite{barvinok2016combinatorics}. The matching polynomial is closely related to the loop hafnian, a variant of the Hafnian that can be used to count all matchings in a graph \cite{10.1145/3325111,PhysRevA.105.052412}.)

In a step $t$ when the Glauber dynamics is at a matching $X_t\in \M$, it chooses an edge $e$ uniformly at random from $E$. If $X'\oplus \{e\}$ forms a new matching (either a larger matching or $e$ is in $X'$), then we let $X_{t+1} = X'$ with probability $w(X')/(w(X')+w(X_t))$ and otherwise let $X_{t+1} = X_t$. If $X'$ and $e$ do not form a new matching, simply set $X_{t+1} = X_t$. This Glauber dynamics for matchings is formally presented in \algo{glauber}. 

\begin{algorithm}[h]
        \SetAlgoLined 
        \KwIn{A graph $G=(V,E)$, number of steps $T$.}
        \KwOut{A sample of matching $X$ of $G$ such that
            $\Pr[X] \propto \lambda^{|X|}$. }
\vspace{1em}            
            Initialize $X_0$ as an arbitrary matching in $G$;\;
            Initialize $t\leftarrow 0$;\;
        
        \lWhile{$t < T$}\
            \Indp Choose a uniformly random edge $e$ from $E$;\; 
            \lIf{$e$ and $X_t$ form a new matching}\
            {\Indp Set $X_{t+1} = X_t \cup \{e\}$ with probability $\frac{\lambda}{1+\lambda}$ and otherwise set $X_{t+1} = X_t$;\;}
            
            \lElse 
            {\lIf{$e$ is in $X_t$}\
             {\Indp Set $X_{t+1} = X_t \backslash \{e\}$ with probability $\frac{1}{1+\lambda}$ and otherwise set $X_{t+1} = X_t$;\label{lin:glauber-8}\;
             \Indm\lElse\  
              { \quad \quad Set $X_{t+1} =X_t$;}}
              
             $t \leftarrow t+1$;}
     
     \Indm Output the matching $X_T$;\;
     \caption{Glauber Dynamics for Matchings}
     \label{algo:glauber}
\end{algorithm}

It is straightforward to verify that the Markov chain is ergodic, aperiodic, and irreducible, meaning that it converges to a unique stationary distribution~\cite{levin2017markov}. 
We can verify that the Glauber dynamics converges to the Gibbs distribution by checking that the detailed balance condition holds: 
For two matchings $X$ and $X\cup\{e\}$, we have
\begin{align}
    \Pr[X \to X\cup\{e\}] = \frac{\lambda}{1+\lambda}  \frac{1}{|E|},\\ \Pr[X\cup\{e\}\to X] = \frac{1}{1+\lambda}  \frac{1}{|E|}, 
\end{align}
and thus
\begin{align}
    \mu(X) \Pr[X \to X\cup\{e\}] = \mu(X\cup\{e\}) \Pr[X\cup\{e\} \to X].
\end{align}

Furthermore, the convergence speed of an MCMC to its stationary distribution is characterized by its mixing time, defined as the time required by the Markov chain to have sufficiently small distance to the stationary distribution. Formally, let $P_t(X_0,\cdot)$ denote the distribution of matchings after $t$ steps starting from $X_0$. The total variation distance between $P_t(X_0,\cdot)$ and the stationary distribution $\mu$ is defined as
\begin{align}
    \|P_t(X_0,\cdot) - \mu\|_{\TV} := \frac{1}{2} \sum_{X} |P_t(X_0,X) - \mu(X)|.
\end{align}
Thus we can define the mixing time of the Markov chain:
\begin{align}
    t_{\mix}(\eps)&:= \min\left\{t:\max_{X_0\in\M} \|P_t(X_0,\cdot) - \mu\|_{\TV}\leq \eps\right\},\\
    t_{\mix} &:= t_{\mix}({1}/{4}).
\end{align}

It is known that the Glauber dynamics for matchings has a polynomial mixing time:

\begin{theorem}[\cite{jerrum2003counting}]\label{thm:glauber-mixing}
    For a general graph $G$ with $n$ vertices and $m$ edges, the mixing time of the Glauber dynamics for the monomer-dimer model on $G$ with fugacity $\lambda>0$ is $O(n^2 m \log n)$. 
\end{theorem}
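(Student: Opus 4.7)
The plan is to use the canonical paths (multicommodity flow) technique of Sinclair, which bounds the mixing time of a reversible Markov chain in terms of the maximum ``congestion'' created by a system of paths routing unit flow between every ordered pair of states. Specifically, for each ordered pair of matchings $(I,F)\in\M\times\M$, I will define a single canonical path $\gamma_{IF}$ in the transition graph of the Glauber chain. Sinclair's theorem then gives
\begin{align}
    t_{\mix}(\eps)\leq \rho\,\log\!\bigl(1/(\eps\,\mu_{\min})\bigr),\qquad
    \rho=\max_{(M,M')}\frac{1}{\mu(M)P(M,M')}\sum_{\gamma_{IF}\ni(M,M')}\mu(I)\mu(F)\,|\gamma_{IF}|,
\end{align}
and since $\mu(X)=\lambda^{|X|}/Z$ with $|X|\leq n/2$, we have $\log(1/\mu_{\min})=O(n\log n)$ for constant $\lambda$. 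Thus it suffices to exhibit canonical paths with congestion $\rho=O(nm)$.

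To build $\gamma_{IF}$, I would first decompose the symmetric difference $I\oplus F$ into its connected components, which are alternating paths and even alternating cycles with edges alternating between $I$ and $F$. Fix once and for all an arbitrary total order on the vertices of $G$, which induces a canonical ordering of these components (by their smallest vertex) and a canonical starting edge/direction within each component. I then process the components in order and ``unwind'' each one edge by edge: alternately remove an $I$-edge and add the adjacent $F$-edge, sweeping along the component. At each intermediate matching $M$ along $\gamma_{IF}$ at most one vertex has a dangling ``hole'' where the unwinding is in progress, so $M$ is either a matching in $\M$ or differs from one by at most a single edge. The total length $|\gamma_{IF}|\leq |I\oplus F|\leq n$.

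The heart of the argument, and what I expect to be the main obstacle, is the encoding step used to bound how many pairs $(I,F)$ can route through a fixed transition $(M,M')$. Given such a transition on $\gamma_{IF}$, define the encoding $\eta=\eta_{IF}(M,M'):= I\oplus F\oplus (M\cup M')$. A careful case analysis on which phase of the unwinding procedure the transition falls into must show two things: (i) $\eta$ is itself an ``almost-matching'' (a matching in $\M$, or one with a single extra edge), so the number of possible encodings is at most $O(m)\cdot|\M|$; and (ii) from $(M,M')$ together with $\eta$, the pair $(I,F)$ can be uniquely reconstructed, because $I\cup F\subseteq M\cup M'\cup \eta$ and the canonical vertex ordering together with the ``current'' unwinding component (which is visible from the structure of $\eta\oplus M\oplus M'$) disambiguates which edges belong to $I$ versus $F$. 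Establishing this injectivity rigorously requires going through all the sub-cases of the unwinding rule and verifying that the partial state of the unwinding can always be read off from $\eta$ and the transition.

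Once injectivity is in hand, the routine calculation compares weights: one shows $\mu(I)\mu(F)\leq C\,\mu(\eta)\,\mu(M\cup M')$ for a constant depending only on $\lambda$, since $|I|+|F|$ and $|\eta|+|M\cup M'|$ differ by $O(1)$. Combined with $P(M,M')=\Omega(1/m)$, injectivity of the encoding into almost-matchings bounds the sum in the definition of $\rho$ by $O(n\cdot m)$, giving $\rho=O(nm)$ and hence $t_{\mix}=O(n^2 m\log n)$ after multiplying by the $O(n\log n)$ factor from $\log(1/\mu_{\min})$, yielding the claim.
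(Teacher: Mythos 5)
Your proposal is correct in outline and takes essentially the same route as the paper, which does not reprove this statement but cites Jerrum's canonical-path argument and describes exactly your construction: decompose $I\oplus F$ into alternating paths and even cycles ordered by a fixed vertex ordering, unwind them edge by edge, and bound the congestion via the injective encoding $\eta = I\oplus F\oplus(M\cup M')$ into (almost-)matchings, giving $\varrho = O(nm)$ and hence $t_{\mix} = O(n^2 m\log n)$ after the $\log\pi_{\min}^{-1} = O(n\log n)$ factor. The only caveats are minor: the case analysis establishing injectivity (which you flag but defer) is the substantive content of Jerrum's proof, and the intermediate states of the path are always genuine matchings for this add/remove chain — it is the encoding $\eta$, not the path states, that may fail to be a matching by one edge.
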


Now, we examine the Gibbs distribution from an alternative perspective. For a sampled matching $X$, we take its vertex set $S = V(X)$ as the final output, and let $\nu$ denote the stationary distribution of vertex sets. Since the number of perfect matchings on $S$ is given by the Hafnian of the subgraph $G_S$ induced by $S$, which we briefly write as $\Haf(S)$, the stationary probability of $S$ satisfies the following facts if we take $\lambda=c^2$:
\begin{align}
    \Pr[X] &\propto \lambda^{|X|}= c^{2|X|}=c^{|S|},\\
    \Pr[S] &\propto c^{|S|}\Haf(S). \label{eqn:Glauber-single-power}
\end{align}

Similarly, we can denote $P'_t(X_0,\cdot)$ as the distribution of vertex sets after $t$ steps starting from $X_0$, and define the total variation distance between $P_t'(X_0,\cdot)$ and the stationary distribution $\nu$ as
\begin{align}
    \|P_t'(X_0,\cdot) - \nu\|_{\TV} := \frac{1}{2} \sum_{S} |P'_t(X_0,S) - \nu(S)|.
\end{align}

The mixing time for vertex set sampling is defined as
\begin{align}
    t_{\mix}^S(\eps)&:= \min\left\{t:\max_{X_0\in\M} \|P'_t(X_0,\cdot) - \nu\|_{\TV}\leq \eps\right\},\\
    t_{\mix}^S &:= t_{\mix}^S({1}/{4}).
\end{align}

Since
\begin{align}
    &\|P'_t(X_0,\cdot) - \nu\|_{\TV} \\
    =& \frac{1}{2} \sum_{S} |P'_t(X_0,S) - \nu(S)|\\
    \leq& \frac{1}{2} \sum_{S} \sum_{X\text{ is a perfect matching of }S} |P_t(X_0,X) - \mu(X)|\\
    =& \|P_t(X_0,\cdot) - \mu\|_{\TV},
\end{align}
the mixing time of the Glauber dynamics for matchings is at most the mixing time of the Glauber dynamics for vertex sets.

We note that the distribution in Eq.~\eqn{Glauber-single-power} resembles the GBS distribution: The Gibbs distribution involves the Hafnian to the first power, whereas the GBS distribution weights each vertex set by the square of its Hafnian. This quadratic dependence amplifies the probability mass on larger vertex sets with numerous perfect matchings, resulting in a more concentrated distribution on such vertex sets that potentially gives favorable solutions to problems such as densest $k$-subgraph~\cite{arrazola2018using} and max-Hafnian~\cite{arrazola2018quantum}.

%%%%%%%%%%%%%%%%%%%%%%%%%%%%%%%%%%%%%%%%%%%%%%%%%%%%%%%%%%%%%%%%%%%

\subsection{Double-loop Glauber dynamics}
\label{sec:double-loop-Glauber}

Inspired by the standard Glauber dynamics, we develop several enhanced algorithms that achieve classical sampling from GBS distributions on unweighted graphs, i.e., sampling of a vertex set $S$ with distribution $\Pr(S)\propto c^{2|S|}\Haf^2(S)$. 

A simple idea is rejection sampling, which is a basic technique applied to generate samples from a target distribution by sampling from a proposal distribution and accepting or rejecting the samples based on a certain criterion. For sampling from GBS distributions, our rejection sampling algorithm works as follows:
\begin{itemize}
    \item Run two instances of the Glauber dynamics for matchings independently to sample two vertex sets $S_1,S_2$ with probability $\Pr[S]\propto {c}^{|S|}\Haf(S)$.
    \item Accept $S_1$ if $S_1 = S_2$, otherwise reject and repeat the process.
\end{itemize}
In this rejection sampling algorithm, the probability of accepting a vertex set $S$ is
\begin{align}
    \Pr[S\text{ is accepted}] &= \Pr[S_1 = S]\cdot \Pr[S_2 = S] \\
    &\propto ({c}^{|S|}\Haf(S))^2=c^{2|S|}\Haf^2(S).
\end{align}

However, this method is inefficient as it may incur a large number of rejections. For general graphs, since the number of possible vertex subsets grows exponentially, if no vertex subset dominates the weight distribution, the acceptance probability for each rejection sampling attempt can be exponentially small.

In this work, we propose a novel double-loop Glauber dynamics that directly samples from the distribution $\Pr[S] \propto c^{2|S|}\Haf^2(S) $, where $S$ is a vertex set of a graph $G$ and $c$ is a constant. Note that it is equivalent to realizing the sampling of matchings according to the distribution $\Pr[X]\propto (c^2)^{2|X|}\Haf(G_X)$ , where $G_X$ denotes the subgraph induced by the vertex set of the matching $X$. 

In contrast to the standard Glauber dynamics for matchings (\algo{glauber}), our approach introduces modified transition probabilities for edge removal, carefully calibrated to ensure the convergence to the desired stationary distribution. These probabilities are dynamically determined through an auxiliary inner Markov chain that operates in each step of the Glauber dynamics and samples a perfect matching from the subgraph induced by the current matching. Specifically, different from \lin{glauber-8} in \algo{glauber}, when the random edge $e$ is in the current matching $X_t$, we first uniformly sample a perfect matching $E_t$ in the subgraph $G_{X_t}$ induced by $X_t$. If $e$ is not in $E_t$, we keep the current matching $X_t$. Otherwise, we remove $e$ from $X_t$ with probability $1/(1+\lambda^2)$ and otherwise keep $X_t$. Our algorithm is formally presented in \algo{double-loop} with an illustration in \fig{double-loop}.

\begin{algorithm}[h]
    \SetAlgoLined 
	\KwIn{A graph $G=(V,E)$.}
	\KwOut{A sample of vertex set $S$ of $G$ such that
        $\Pr[S] \propto \lambda^{|S|}\Haf^2(S)$. }

        \vspace{1em}
        Initialize $X_0$ as an arbitrary matching in $G$;\
        
        Initialize $t\leftarrow 0$, set $T = \tilde{O}(n^6)$;\
	
	\lWhile{$t < T$}\
    {
        \Indp Choose a uniformly random edge $e$ from $E$; \label{lin:double-loop-4}\

        \lIf{$e$ and $X_t$ form a new matching}\
        {\Indp Set $X_{t+1} = X_t\cup\{e\}$ with probability $\frac{\lambda^2}{1+\lambda^2}$ and otherwise set $X_{t+1} = X_t$;\label{lin:double-loop-6}\;}
        
        \lElse 
        {\lIf{$e$ is in $X_t$}\
         {\Indp Uniformly sample a perfect matching $E_t$ in the subgraph induced by $X_t$ (by running another MCMC on $G_{X_t}$). If $e\not\in E_t$, set $X_{t+1} = X_t$. If $e\in E_t$, set $X_{t+1} = X_t \backslash \{e\}$ with probability $\frac{1}{1+\lambda^2}$ and otherwise set $X_{t+1} = X_t$;\label{lin:double-loop-8}\; \label{lin:inner}
         \Indm\lElse\
          {\quad \quad Set $X_{t+1} =X_t$;}}}
        $t \leftarrow t+1$;\;
        \caption{Double-loop Glauber Dynamics}
    \label{algo:double-loop}
 }
 Output the vertex set in matching $X_T$;\;
\end{algorithm}

\begin{figure*}[!htbp]
    \centering
        \includegraphics[width=\linewidth]{./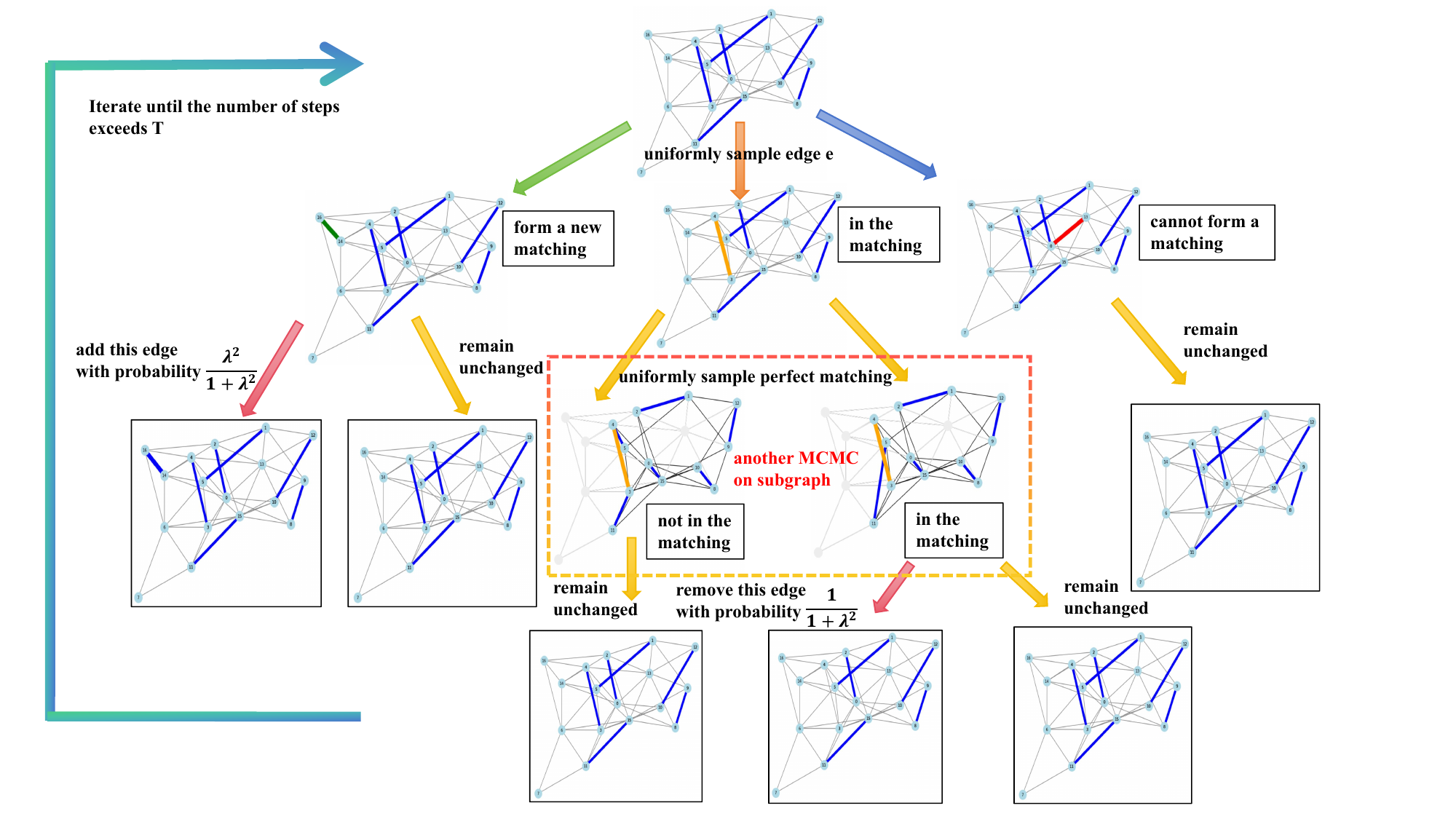}
        \caption{Flowchart of the Double-loop Glauber Dynamics.}
        \label{fig:double-loop}
\end{figure*}

We first verify the convergence to the desired distribution. For the edge sets of any two matchings $X$ and $X\cup\{e\}$, on the one hand, from \lin{double-loop-6} in \algo{double-loop},
\begin{align}
    \Pr[X\to X\cup\{e\}] = \frac{\lambda^2}{1+\lambda^2} \cdot \frac{1}{|E|}.
\end{align}
On the other hand, if $X_t = X\cup\{e\}$, the probability $\Pr[e\in E_t]$ in \lin{double-loop-8} in \algo{double-loop} is equal to the ratio of the number of perfect matchings of $G_{X\cup\{e\}}$ that contains $e$ to the number of perfect matchings of $G_{X\cup\{e\}}$. The numerator equals the number of perfect matchings of $G_{X}$. Thus we have
\begin{align}
    \Pr[X\cup\{e\} \to X] = \frac{1}{|E|} \cdot \frac{\Haf(G_X)}{\Haf(G_{X\cup\{e\}})} \cdot \frac{1}{1+\lambda^2},
\end{align}
where the second term comes from the probability of edge $e$ being in $E_t$. Recall that the stationary distribution of matchings should satisfy
\begin{align}
    \pi(X) \Pr[X\to X\cup\{e\}] = \pi(X\cup\{e\}) \Pr[X\cup\{e\}\to X],
\end{align}
thus if we take $\lambda=c^2$,
\begin{align}
    \pi(X) \propto \lambda^{2|X|} \Haf(G_X) = c^{4|X|} \Haf(G_X).
\end{align}
Notice that the number of perfect matchings of $G_X$ is $\Haf(G_X)$. Therefore, when we finally take the vertex set of the sampled matching as the output, the probability of the sampled set of vertices $S$ is:
\begin{align}
    \Pr[S] \propto c^{2|S|} \Haf^2(S),
\end{align}
which simulates the output distribution by GBS. We remark that our \algo{double-loop} works for any undirected and unweighted graph $G$. 

Next, we rigorously establish that the mixing time of the double-loop Glauber dynamics on dense graphs is at most a polynomial. Specifically, for dense bipartite graphs, we have:
\begin{theorem}\label{thm:bipartite-graph}
    Given a bipartite graph $G=(V_1,V_2,E)$ with $|V_1|=m$, $|V_2|=n$, and $m\geq n$. If the minimum degree of vertices in $V_1$ satisfies $\delta(V_1) \geq n - \xi$ and the minimum degree of vertices in $V_2$ satisfies $\delta(V_2) \geq m - \xi$ for some constant $\xi$, then for $\lambda>\frac{1}{4}$, the mixing time of the double-loop Glauber is polynomially bounded in $M$ and $n$, specifically $\tilde{O}(m^2n^{2\xi+4})$.
\end{theorem}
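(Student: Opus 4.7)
The plan is to adapt the canonical path method of Jerrum and Sinclair to the double-loop chain, working out the bound first on the complete bipartite graph $K_{m,n}$ and then transferring it to an arbitrary dense bipartite graph via a congestion-comparison argument. Recall that the stationary distribution on matchings is $\pi(X)\propto \lambda^{2|X|}\Haf(G_X)$, so the standard identity $\pi(X)P(X,X')=\pi(X')P(X',X)$ takes the form
\begin{align}
\pi(X)P(X,X\cup\{e\}) \;=\; \tfrac{1}{|E|}\,\tfrac{\lambda^2}{1+\lambda^2}\,\pi(X),
\end{align}
which will let me translate any bound on the total flow routed across a transition into a bound on the congestion $\rho$, and hence on the mixing time via $t_{\mix}=O(\rho\log(1/\pi_{\min}))$.

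First, for $K_{m,n}$, I would route flow of value $\pi(I)\pi(F)$ between every pair of matchings $I,F\in\M$ as follows. Decompose the symmetric difference $I\oplus F$ into alternating paths and alternating cycles; process these components in a fixed ordering and, inside each component, unwind the edges so that the trajectory passes through matchings whose weights are comparable to $\pi(I)\pi(F)/\pi(X)$. Following the enhanced framework described in the introduction, I would not fix a single canonical trajectory but instead distribute the flow uniformly over a symmetry-class of paths obtained by permuting the roles of $V_1$ and $V_2$ and the processing order. This uniformization gives an \emph{average} congestion per transition rather than a worst-case one, and, because every transition lies on the ``same'' number of symmetric trajectories, the average equals the maximum; this is the step where I can finally bound the congestion by a concrete polynomial in $m,n$ using the encoding trick (map each $(I,F,X)$ with $(X,X')\in\gamma_{I,F}$ to a pair $(\eta,\mathrm{aux})$ with $\eta\in\M$, and control the ratio $\pi(I)\pi(F)/\pi(X)\pi(\eta)$).

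Next, to pass from $K_{m,n}$ to a dense bipartite graph $G$ satisfying $\delta(V_1)\geq n-\xi$ and $\delta(V_2)\geq m-\xi$, I would reuse the same canonical paths whenever every edge they traverse exists in $G$, and otherwise reroute around the missing edges. A vertex in $V_1$ is missing at most $\xi$ incident edges in $G$, and similarly for $V_2$. Thus when encoding a transition of $G$, I only need $O(\xi\log n)$ extra bits to record the local detour, which inflates the congestion by a factor of at most $n^{O(\xi)}$ compared to the complete-bipartite bound. Combined with the weight-ratio bound $\pi(I)\pi(F)/\pi(X)\pi(\eta)=O(1)$ coming from $\lambda>\tfrac{1}{4}$ (the threshold that makes the Hafnian ratios across a single edge addition/removal of order unity), this yields $\rho=\tilde O(m^2n^{2\xi+3})$ and hence the claimed mixing time $\tilde O(m^2n^{2\xi+4})$ after the extra $\log(1/\pi_{\min})=\tilde O(n)$ factor.

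The main obstacle I expect is the interaction between the \emph{modified} down-transition probability $P(X\cup\{e\}\to X)=\tfrac{1}{|E|(1+\lambda^2)}\Haf(G_X)/\Haf(G_{X\cup\{e\}})$ coming from the inner MCMC, and the encoding step: standard Jerrum--Sinclair encodings control the ratio $\pi(I)\pi(F)/\pi(X)\pi(\eta)$ up to a constant, but here $\pi(X)$ already carries a $\Haf(G_X)$ factor, so the ``reversal gadget'' must show that replacing $X$ by the intermediate matching $\eta=(I\oplus F)\oplus X$ changes the Hafnian of the induced subgraph by at most a controlled multiplicative constant. I would establish this via a local injection between perfect matchings of $G_X$ and those of $G_\eta$ that modifies only the vertices touched by the currently-processed alternating component, paying a factor of $n^{O(\xi)}$ for the missing edges. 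Assembling these pieces yields the stated polynomial mixing time bound.
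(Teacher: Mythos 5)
There is a genuine gap at the heart of your plan: you propose to finish the complete-bipartite case with the Jerrum--Sinclair encoding trick, mapping $(I,F,X)$ to $\eta=(I\oplus F)\oplus X$ and claiming the ratio $\pi(I)\pi(F)/\bigl(\pi(X)\pi(\eta)\bigr)$ is $O(1)$ for $\lambda>\tfrac14$. With the double-loop chain the stationary weight is $\pi(X)\propto\lambda^{2|X|}\Haf(G_X)$, and this Hafnian factor breaks the encoding bound even on $K_{m,n}$: take $I$ a perfect matching and $F=\emptyset$, so $I\oplus F=I$ decomposes into single edges; midway along the path $|X|=x\approx n/2$ and $\eta=I\setminus X$, giving $\pi(I)\pi(F)/\bigl(\pi(X)\pi(\eta)\bigr)=n!/\bigl(x!\,(n-x)!\bigr)=\binom{n}{x}$, which is exponentially large. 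This is exactly the obstruction the paper identifies (``a multiplicative error term that grows exponentially in the worst case when using original injection constructions''). The paper avoids it by a different mechanism: it builds symmetric multiple canonical paths whose intermediate matching sizes stay between $\min\{|I|,|F|\}-1$ and $\max\{|I|,|F|\}+1$, observes that all transitions in $T^{\pm}(k)$ have identical congestion on the complete (bipartite) graph, and then bounds the congestion of one transition by computing the \emph{total} flow through the whole symmetry class directly, using the exact counts $N_p=\binom{n}{p}\binom{m}{p}p!$ and a geometric-series estimate on $\sum_l N_l\lambda^{2l}l!$ (this is also where the condition on $\lambda$ actually enters, via $\lambda^2(n-l)(m-l)\geq 4\lambda^2>1$). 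You gesture at the symmetry-averaging idea, but without the direct summation it does not by itself repair the encoding step.

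Your transfer to dense bipartite graphs also diverges from what is needed. No rerouting is required: the canonical paths between matchings $I,F$ of $G$ use only edges of $I\cup F$, so they are already valid in $G$, and the comparison with $K_{m,n}$ is made entirely through the weights. The paper controls these by a Hall-type bound (a balanced bipartite graph with minimum degree $d$ and a perfect matching has at least $d!$ perfect matchings), which yields $\Haf(M)/\Haf'(M)\geq n^{-\xi}$ for every matching-induced subgraph, hence $\varrho\leq n^{2\xi}\varrho'$; this is also precisely what tames the inner-loop factor $\Haf(G_X)/\Haf(G_{X\cup\{e\}})$ in the down-transition probability. Your ``$O(\xi\log n)$ extra bits for local detours'' and the proposed local injection between perfect matchings of $G_X$ and $G_\eta$ are not developed and, more importantly, do not address the place where the Hafnian ratio actually enters. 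To salvage your write-up you would need to replace the encoding step by the direct congestion computation over symmetry classes (or some other device immune to the $\binom{n}{x}$ blow-up) and replace the rerouting argument by the Hafnian-comparison via the dense-graph matching count.
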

For dense non-bipartite graphs, we have:
\begin{theorem}\label{thm:non-bipartite-graph}
    Given a non-bipartite graph $G=(V,E)$ with $|V|=2n$, if the minimum degree of $G$ satisfies $\delta(V) \geq 2n - \xi$ for some constant $\xi$, for $\lambda>\frac{1}{4}$, the mixing time of the double-loop Glauber is polynomially bounded in $n$, specifically $\tilde{O}(n^{2\xi+6})$.
\end{theorem}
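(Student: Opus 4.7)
The plan is to apply the multi-commodity flow (canonical path) machinery introduced by Jerrum and Sinclair to the outer chain of the double-loop Glauber dynamics on matchings $\M$, whose stationary distribution is $\pi(X)\propto\lambda^{2|X|}\Haf(G_X)$. Once a flow of total value $\pi(X)\pi(Y)$ is routed between every ordered pair $(X,Y)$ of matchings, the mixing time is bounded by $O(\rho\log(1/\pi_*))$, where $\rho$ is the maximum per-edge congestion. Because $|V|=2n$ and the ratio of extreme stationary weights is polynomially controlled, $\log(1/\pi_*)$ is polynomial in $n$, so it suffices to exhibit canonical paths on $G$ of polynomial congestion. The degree hypothesis $\delta(V)\geq 2n-\xi$ enters only when one passes from the complete graph $K_{2n}$ to a general dense $G$.

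I would first settle the case $G=K_{2n}$. For any pair $(X,Y)$ of matchings, decompose the symmetric difference $X\oplus Y$ into alternating paths and even cycles, and unwind each component one edge at a time, alternating a single addition and a single removal so that each intermediate configuration is again a matching. A single canonical path creates congested ``pipes'' near the middle of long cycle components, so I would instead spread the flow $\pi(X)\pi(Y)$ uniformly over the family of unwindings obtained by cyclically rotating the starting edge on every cycle component. Because $K_{2n}$ is vertex-transitive, these rotations induce bijections on $\M$ that preserve both $\pi$ and the transition probabilities, so the total contribution to the congestion at a fixed transition $(Z,Z')$ reduces to a sum over a symmetry orbit of a single contribution. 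Coupling this orbit-sum with the standard Jerrum--Sinclair injective encoding---recovering $(X,Y)$ from $(Z\cup Z')\oplus (X\oplus Y)$ together with a constant number of auxiliary bits---and using $\lambda>1/4$ to absorb the multiplicative change of $\lambda^{2|X_t|}$ along a single unwinding step, gives a polynomial bound $\rho_{K_{2n}}=\poly(n)$.

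For a general dense $G$ with $\delta(V)\geq 2n-\xi$, call an edge of $K_{2n}$ missing if it does not lie in $E(G)$. I reuse the same unwindings as on $K_{2n}$ but discard those that use a missing edge at an intermediate state. For each outer transition $(Z,Z')$ of the $G$-chain, the pairs $(X,Y)$ whose flow still passes through $(Z,Z')$ are exactly those whose symmetric-difference pattern is compatible with the fixed set of missing edges, and since each vertex has at most $\xi$ missing incidences, a counting argument caps the number of such encodings by a multiplicative factor $n^{O(\xi)}$. The Hafnian factor in $\pi_G$ is absorbed cleanly by the inner loop: since it samples a uniform perfect matching of $G_{X_t}$, the removal probability carries a factor $\Haf(G_Z)/\Haf(G_{Z\cup\{e\}})$ that exactly cancels the Hafnian ratio in $\pi(Z\cup\{e\})/\pi(Z)$ in detailed balance, so the flow ratios appearing in the canonical-path bound depend only on $\lambda^{2|X_t|}$ and on combinatorial counts. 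Aggregating $\rho_G\leq n^{O(\xi)}\rho_{K_{2n}}$ with the polynomial bound on $\log(1/\pi_*)$ yields the claimed $\tilde O(n^{2\xi+6})$.

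The main obstacle I anticipate is the coordination between the Hafnian weight $\Haf(G_X)$, the symmetry-based flow construction on $K_{2n}$, and the restriction imposed by missing edges. On $K_{2n}$ the cyclic rotations must be chosen so that the orbits of transitions under the path decomposition are all of comparable size; otherwise the ``sum over orbit'' step overcounts and the congestion exponent blows up. On a dense $G$, restricting to rotations that avoid missing edges destroys the full symmetry, so one must argue combinatorially that enough rotations survive to spread $\pi(X)\pi(Y)$ without overloading any single transition; this is where the $n^{O(\xi)}$ loss is extracted. The threshold $\lambda>1/4$ is exactly what keeps the edge-weight ratios in both the outer and inner loops multiplicative constants rather than dimension-dependent, and loosening it would require strictly finer control of the Hafnian ratios along each canonical path.
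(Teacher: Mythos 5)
Your high-level strategy matches the paper's: canonical paths, symmetrized into a family of multiple paths per pair to exploit the symmetry of $K_{2n}$, followed by a transfer to dense graphs with an $n^{O(\xi)}$ loss. However, two of your key steps do not work as described. First, on $K_{2n}$ you propose to close the argument with ``the standard Jerrum--Sinclair injective encoding,'' recovering $(X,Y)$ from $C=(Z\cup Z')\oplus(X\oplus Y)$ and bounding $\pi(X)\pi(Y)$ against $\pi(Z)\pi(C)$. For the double-loop chain the stationary weight is $\pi(X)\propto\lambda^{2|X|}\Haf(G_X)$, and on the complete graph $\Haf(G_X)=(2|X|-1)!!$, so the encoding requires controlling ratios like $(2a-1)!!\,(2b-1)!!\big/\bigl((2c-1)!!\,(2d-1)!!\bigr)$ with only $a+b\approx c+d$; these can be exponentially large (e.g.\ $a=b=n/2$ versus $c=n,d=0$ gives a ratio of order $2^{n}$). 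This is exactly the obstruction the paper identifies as the reason the original injection fails. The paper's actual route is to \emph{avoid} the encoding entirely: it designs the paths so that every intermediate matching has size between $\min\{|I|,|F|\}-1$ and $\max\{|I|,|F|\}+1$, proves that all transitions in $T^{+}(k)$ (resp.\ $T^{-}(k)$) carry identical congestion by vertex-transitivity, and then computes the \emph{total} congestion over that symmetry class exactly, using closed-form counts $N_p=\binom{2n}{2p}(2p-1)!!$ and a geometric-domination argument on the sums $\sum_l N_l\lambda^{2l}\Haf_l$ (this is where the lower bound on $\lambda$ enters). Your proposal needs this direct aggregate computation, or some substitute for it; the orbit-sum plus injection as written does not yield a polynomial bound.

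Second, your dense-graph transfer misidentifies where the degree hypothesis is used. Canonical paths between matchings $I,F$ of $G$ only ever add edges of $I\cup F\subseteq E(G)$, so no unwinding ``uses a missing edge'' and there is nothing to discard; the path family is literally the same in $G$ and $K_{2n}$. The genuine difficulty is that the weights change: $\Haf(G_X)$ can a priori be much smaller than $(2|X|-1)!!$, which would blow up the congestion denominator $\pi(\tilde M)\cdot\sum_X\lambda^{2|X|}\Haf(G_X)$. The paper's key lemma here is a lower bound on the number of perfect matchings of an induced subgraph on $2n'$ vertices of minimum degree $\geq 2n'-1-\xi$, obtained by repeatedly applying Ore's theorem: $\Haf(G_X)\geq(2n'-1-\xi)!!/(\xi+1)!!\geq(2n-1)!!\cdot(2n)^{-\xi}$-type control, whence every weight is within a factor $(2n)^{\xi}$ of its complete-graph counterpart and $\varrho\leq(2n)^{2\xi}\varrho'$. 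Your ``counting of surviving encodings'' does not supply this Hafnian lower bound, and without it the transfer step has no justification. (Separately, your claim that the Hafnian ``cancels in detailed balance'' is true for stationarity but irrelevant to the congestion bound, which compares $\pi(I)\pi(F)$ sums against a single transition's capacity.)
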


Our analysis primarily employs a proof technique in MCMC literature known as the canonical path method~\cite{jerrum2003counting}, which establishes mixing time bounds by constructing a proper multicommodity flow problem, and then selecting suitable transition paths between states. For a pair of initial state $I$ and final state $F$, we can conceptualize the problem as routing $\pi(I)\pi(F)$ units of distinguishable flow from state $I$ to state $F$, utilizing the Markov chain's transitions as ``pipes''. We can define an arbitrary canonical path from $I$ to $F$ for each pair $I,F\in\Omega$, and the corresponding ``congestion'' as follows:
\begin{align}\label{eq:congestion1}
\kern-2mm\varrho := \max_{t=(M_1,M_2)}\left\{\frac{\sum_{I,F:\gamma \text{ uses }t} \pi(I)\pi(F) |\gamma_{IF}|}{\pi(M_1)P(M_1,M_2)} \right\}
\end{align}
where $\gamma_{IF}$ is the path from $I$ to $F$, and $|\gamma_{IF}|$ denotes the length of $\gamma_{IF}$.  Ref.~\cite{jerrum2003counting} proved that the mixing time of Markov chain is bounded by 
\begin{align}
    t_{\mix}(\eps) = O\left(\varrho\left( \ln {\eps^{-1}} + \ln{\pi_{\min}(M)}^{-1}\right)\right).
\end{align}

The proof of \thm{glauber-mixing} in~\cite{jerrum2003counting} applied the canonical path method, in which the path from matching $I$ to matching $F$ is defined by decomposing $I\otimes F$ into a collection of paths and even-length cycles, and then processing these components with some specific order. The approximation of \eq{congestion1} is achieved by constructing an injective mapping from $(I,F)$ to another matching for each transition. 
However, the canonical path method is not directly applicable to the double-loop Glauber dynamics, as the inner Markov chain (\lin{inner}) introduces additional Hafnian terms to the transition process, leading to a multiplicative error term that grows exponentially in the worst case when using original injection constructions. To address this challenge, we develop an alternative proof technique that leverages symmetries between different transitions.

In the original canonical path method, if the graph has sufficient symmetry, we can directly compute the congestion of each transition instead of constructing an injective mapping to bound the congestion. 
We begin our analysis with the complete graph case, focusing on direct computation of the congestion for each individual transition. We can design special canonical path to ensure that for any transition originating from a matching of size $k$, only paths transitioning between smaller matchings $(|M| < k)$ and larger matchings $(|M| > k)$ will utilize this transition. This insight enables us to collectively compute the total congestion across all transitions originating from size $k$ matchings. Furthermore, we aim to design symmetric canonical paths such that the congestion values for symmetric transitions are the same in complete graphs. This symmetry enables us to compute the congestion of individual transitions by summing the contributions from all paths that pass through them.
However, the original canonical path construction, which determines the order of components of $I\oplus F$ based on fixed vertex orderings, fails to preserve the necessary symmetry properties if the graph is not complete. To relax this limitation, we introduce an enhanced canonical path framework that permits multiple distinct paths between each state pair $(I, F)$. Specifically, our solution involves considering all possible permutations of the connected components, systematically constructing transformation sequences for each ordering (for more details, see \append{canonical-path}).

For complete graphs and complete bipartite graphs, this symmetric construction leverages the matching enumeration properties specific to each graph, enables precise congestion calculations and ultimately yields a polynomial mixing time bound through careful analysis of the path distribution and transition probabilities. 
The complete technical proof is provided in \append{complete-bipartite} (for bipartite graphs) and \append{complete-non-bipartite} (for non-bipartite graphs). 
For sufficiently dense graphs where the Hafnian of each subgraph differs from the complete graph case by at most a polynomial factor, our analysis naturally extends to establishing polynomial mixing time bounds, as formalized in \thm{bipartite-graph} and \thm{non-bipartite-graph}. The complete technical proof is provided in \append{dense-bipartite} (for bipartite graphs) and \append{dense-non-bipartite} (for non-bipartite graphs).

Another crucial aspect of our framework is the implementation of the inner Markov chain for uniform sampling of perfect matchings in subgraphs. Ref.~\cite{jerrum2004polynomial} introduced an efficient algorithm that achieves an efficient approximation of uniform sampling of perfect matchings of an arbitrary balance bipartite graph:
\begin{lemma}[\cite{jerrum2004polynomial}]\label{lem:bigraph-uniform}
    Given a balance bipartite graph $G=(V_1,V_2,E)$ with $|V_1|=|V_2|=n$, then their exists an algorithm that achieves a uniform sampling of perfect matching of $G$ in time $O(n^{11}(\log n)^2(\log n + \log \eta^{-1}))$, with failure probability $\eta$. 
\end{lemma}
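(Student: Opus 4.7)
The plan is to follow the Jerrum--Sinclair--Vigoda approach: rather than sampling directly from the set of perfect matchings $\M_n$, enlarge the state space to $\Omega=\M_n\cup\M_{n-1}$, where $\M_{n-1}$ consists of near-perfect matchings that miss exactly one vertex $u\in V_1$ and one vertex $v\in V_2$. Endow $\Omega$ with a weight function $w$, taking $w(M)=1$ for $M\in\M_n$ and $w(M)=\lambda(u,v)$ for each $M\in\M_{n-1}$ missing $u,v$, and define a local Markov chain whose stationary distribution is $\pi(M)\propto w(M)$. The chain's moves, selected by picking a uniformly random edge $e=(u,v)\in E$, are: remove $e$ from a perfect matching containing it; add $e$ to a near-perfect matching whose holes are $u$ and $v$; and rotate, replacing $e'=(u',v)$ by $e=(u,v)$ whenever the near-perfect matching misses $u$ and some $v'\neq v$ and has $v$ matched to $u'$. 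Each proposal is accepted with the Metropolis probability enforcing detailed balance with respect to $\pi$.

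Next I would bound the mixing time of this chain by the canonical-path technique in the flavor of \thm{glauber-mixing}. For any pair $(I,F)\in\Omega\times\Omega$, one decomposes $I\oplus F$ into alternating paths and cycles and peels them off in a canonical order, each peeling step realized either as a single transition of the chain or as a sequence of rotations through intermediate near-perfect matchings. The standard encoding argument bounds the number of $(I,F)$-pairs routed through any given transition, and combining this count with the detailed-balance weights yields a polynomial congestion $\varrho$, provided the weights satisfy $\lambda(u,v)=\Theta(|\M(u,v)|/|\M_n|)$, where $\M(u,v)\subseteq\M_{n-1}$ denotes the near-perfect matchings with holes at $u,v$. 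Under this \emph{ideal weighting} condition, the stationary masses of $\M_n$ and of every hole-class $\M(u,v)$ agree up to constant factors, which is exactly what is needed so that the encoding of $(I,F)$ as (transition, auxiliary matching) does not lose an exponential factor.

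The principal obstacle, and the step responsible for the $n^{11}$ factor, is producing weights satisfying this ideal condition without access to the ratios $|\M(u,v)|/|\M_n|$ beforehand. I would resolve this by simulated annealing on an edge-weight parameter: start from a graph with the complete bipartite structure, where the ideal $\lambda(u,v)$ is known in closed form, and gradually morph the edge weights toward the target indicator of $E$ through $O(n^2\log n)$ stages, each changing the underlying distribution by only a constant factor. At every stage the chain mixes in $\tilde O(n^6)$ steps, and polynomially many samples suffice to refresh the empirical estimates of $\lambda(u,v)$ to within a constant multiplicative error with failure probability $\eta/\poly(n)$ via Chernoff bounds, so the weights remain near-ideal throughout the schedule. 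Once it reaches the target graph, a final run of the chain produces a sample from $\Omega$ with $\pi$ assigning $\Omega(1)$ mass to $\M_n$, and a uniform perfect matching is extracted by rejecting states in $\M_{n-1}$. Multiplying the per-stage cost by the number of stages and the per-sample accuracy $\log\eta^{-1}$ yields the stated bound $O(n^{11}(\log n)^2(\log n+\log\eta^{-1}))$.
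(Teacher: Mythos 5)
Your proposal follows the same route as the paper's own treatment of this lemma, which is essentially a summary of the Jerrum--Sinclair--Vigoda algorithm (see \append{uniform-sampling-bipartite}): augment the state space with near-perfect matchings, reweight each hole pattern so that all $n^2+1$ patterns carry comparable stationary mass, bound the mixing time by canonical paths, and learn the unknown weights by annealing from the complete bipartite graph, where the ideal weights are known in closed form, while bootstrapping estimates from samples at each stage. Structurally, you and the paper agree.

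Two concrete statements in your sketch are wrong as written, and the first would break the argument if taken literally. Your ``ideal weighting'' condition is inverted: if each near-perfect matching with holes $u,v$ carries weight $\lambda(u,v)$, then the stationary mass of the hole class $\M(u,v)$ is $|\M(u,v)|\cdot\lambda(u,v)$, so equalizing it with the mass of the perfect matchings requires $\lambda(u,v)=\Theta\left(|\M_n|/|\M(u,v)|\right)$ --- the reciprocal of your $\Theta\left(|\M(u,v)|/|\M_n|\right)$; this matches the paper's $w^*(u,v)=\lambda(\M)/\lambda(\M(u,v))$. As stated, your condition suppresses exactly the rare hole patterns that the reweighting is supposed to boost, and the congestion/encoding bound fails. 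Second, under the correct equal weighting of the $n^2+1$ patterns the stationary mass of $\M_n$ is only $\Omega(1/n^2)$, not $\Omega(1)$, so the final rejection step costs an additional $O(n^2)$ factor (times $\log\eta^{-1}$), which is part of where the $n^{11}$ in the stated bound comes from; your closing bookkeeping glosses over this. A smaller point: the paper/JSV chain keeps all $n^2$ vertex pairs present as edges of $K_{n,n}$ with tiny activities ($\approx 1/n!$) on non-edges of $G$, which is what makes ``morphing from the complete graph'' well-defined; your write-up conflates these edge activities with the hole weights by denoting both $\lambda(u,v)$, so you should separate the two roles as the paper does ($\lambda$ for edge activities, $w$ for hole weights).
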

For non-bipartite graphs, Ref.~\cite{jerrum1989approximating} provided a polynomial-time algorithm that approximates uniform sampling of perfect matchings in a dense graph:

\begin{lemma}[\cite{jerrum1989approximating}]\label{lem:non-bigraph-uniform}
    Given a non-bipartite graph $G=(V,E)$ with $|V|=2n$, if the minimum degree of vertices in $V$ satisfies $\delta(V)\geq n$, then their exists an algorithm that achieves a uniform sampling of perfect matching of $G$ in time $ \tilde{O}(n^{14}(\ln {\eta^{-1}})^2)$, with failure probability $\eta$. 
\end{lemma}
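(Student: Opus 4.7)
The plan is to follow the Jerrum--Sinclair Markov chain approach: construct a Markov chain $MC$ on the enlarged state space $\Omega = \mathcal{P}(G)\cup\mathcal{N}(G)$, where $\mathcal{P}(G)$ is the set of perfect matchings and $\mathcal{N}(G)$ is the set of near-perfect matchings (those missing exactly two vertices), show that its stationary distribution is uniform on $\Omega$ and that it mixes in $\poly(n)$ time, and finally extract a uniform sample from $\mathcal{P}(G)$ by rejection.

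First, I would define the local moves of $MC$. Given the current state $M$, pick an edge $e=(u,v)$ uniformly at random from $E$ and apply whichever of three moves is legal (otherwise stay, with a Metropolis/lazy correction enforcing reversibility with uniform stationary distribution): \textbf{deletion}, if $M\in\mathcal{P}(G)$ and $e\in M$, transition to $M\setminus\{e\}$; \textbf{addition}, if $M\in\mathcal{N}(G)$ with both $u,v$ unmatched, transition to $M\cup\{e\}$; \textbf{rotation}, if $M\in\mathcal{N}(G)$ with $u$ unmatched and $v$ matched to $w$, transition to $(M\cup\{e\})\setminus\{(v,w)\}$. Reversibility and the uniform stationary measure follow directly from this edge-symmetric construction.

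Second, I would bound the mixing time by the canonical-path / multicommodity-flow technique already invoked around \eq{congestion1}. For every ordered pair $(I,F)\in\Omega^2$, decompose the symmetric difference $I\oplus F$ into vertex-disjoint alternating paths and even-length alternating cycles, fix an ordering of these components and a traversal within each, and concatenate the resulting sequence of rotations, together with at most one initial deletion and one final addition, to form the canonical path $\gamma_{IF}$. For each transition $t=(M,M')$ on $\gamma_{IF}$ I would define an injective encoding $\eta_t(I,F)\in\Omega$ from which $I\oplus F$ can be reconstructed given $(M,M')$; this bounds the number of $(I,F)$ routed through $t$ by $|\Omega|$, yielding $\varrho=\poly(n)$ and hence mixing time $\tilde{O}(\poly(n))$ by the bound cited in the excerpt. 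The density hypothesis $\delta(V)\geq n$ enters critically here: the encoding generally needs a ``patch-up'' edge so that $\eta_t(I,F)$ still lies in $\Omega$, and by pigeonhole at most $2n-1-\delta(V)\leq n-1$ potential partners can be forbidden, so such an edge is guaranteed to exist inside $G$.

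Third, the same density hypothesis yields $|\mathcal{N}(G)|/|\mathcal{P}(G)|\leq\poly(n)$ by a shift-counting argument, so rejecting every state outside $\mathcal{P}(G)$ in the final output succeeds with probability $1/\poly(n)$; boosting to failure probability $\eta$ contributes an extra $O(\ln\eta^{-1})$ factor, and driving the TV distance of the inner chain below $1/\poly(n)$ before rejection adds another $O(\ln\eta^{-1})$ factor. Propagating the polynomial exponents through the canonical-path bound, the rejection loop, and this final amplification reproduces the stated runtime $\tilde{O}(n^{14}(\ln\eta^{-1})^2)$. The main obstacle is the canonical-path construction together with the injective encoding: one has to choose the component ordering and the intra-component processing rule carefully enough that the intermediate matching $M$ on every transition retains enough information to reconstruct $I\oplus F$ up to a single patch edge, and then prove that this patch edge always exists in $G$ using $\delta(V)\geq n$. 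Verifying reversibility and analyzing the rejection loop are comparatively mechanical, so I would expect nearly all of the combinatorial effort to sit in this encoding step.
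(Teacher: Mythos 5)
Your proposal follows the same route as the paper: the identical Jerrum--Sinclair chain on perfect and near-perfect matchings, a density-based mapping argument showing $|M_n(G)|/|M_{n-1}(G)|\geq 1/(2n^2)$ so that rejection onto perfect matchings succeeds with probability $\Omega(1/n^2)$, and the same two-factor amplification in $\ln\eta^{-1}$ yielding $\tilde{O}(n^{14}(\ln\eta^{-1})^2)$. The only real difference is that you sketch the canonical-path/injective-encoding proof of rapid mixing, whereas the paper cites the conductance bound and the resulting $\tilde{O}(n^{12}\ln\eps^{-1})$ mixing time from \cite{jerrum1989approximating} as a black box; be aware that in that reference the density hypothesis enters the mixing bound through ratios of deficient to perfect matchings rather than through the existence of a single ``patch-up'' edge as you describe.
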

We briefly discuss these approximate uniform sampling methods for perfect matchings in \append{uniform-sampling-pm}.

By integrating the outer Markov chain framework and the inner uniform sampling mechanism along with comprehensive error analysis, we establish our main theoretical results. On dense bipartite graphs, we have:

\begin{theorem}\label{thm:main-results-bigraph} 
    Given a bipartite graph $G=(V_1,V_2,E)$ with $|V_1|=m, |V_2|=n$ and $m\geq n$. If the minimum degree of vertices in $V_1$ satisfies $\delta_1(G) \geq n - \xi$, and the minimum degree of vertices in $V_2$ satisfies $\delta_2(G) \geq m - \xi$ for some constant $\xi$, then for $\lambda>\frac{1}{4}$, given error $\eps$, we can achieve a sampling in time $\tilde{O}{(m^2n^{(15+2\xi)}(\log\eps^{-1})^2)}$ such that the total variation distance between the sampling distribution and the ideal stationary distribution is at most $\epsilon$.  
\end{theorem}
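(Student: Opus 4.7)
The plan is to combine the mixing-time bound for the idealized double-loop Glauber dynamics (\thm{bipartite-graph}) with the approximate uniform sampler for perfect matchings (\lem{bigraph-uniform}) invoked in each inner step, and to split the total variation budget $\epsilon$ into an outer convergence error $\epsilon/2$ and an accumulated inner-failure error $\epsilon/2$.

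First I would set the number of outer steps. By \thm{bipartite-graph} the idealized double-loop chain has mixing time $t_{\mix} = \tilde{O}(m^2 n^{2\xi+4})$, so the standard doubling bound $t_{\mix}(\epsilon/2) = O(t_{\mix}\log \epsilon^{-1})$ tells us that
\begin{align}
T = \tilde{O}\bigl(m^2 n^{2\xi+4}\log \epsilon^{-1}\bigr)
\end{align}
steps of the ideal chain produce a distribution within TV distance $\epsilon/2$ of the target $\Pr[S]\propto c^{2|S|}\Haf^2(S)$.

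Next I would handle the inner sampler. At every outer step the induced subgraph $G_{X_t}$ is balanced bipartite with at most $n$ vertices per side, because each edge of the matching $X_t$ contributes exactly one endpoint to each side of the original bipartition; hence \lem{bigraph-uniform} applies. Choosing its failure probability as $\eta = \epsilon/(2T)$, the per-call cost becomes $O\bigl(n^{11}(\log n)^2(\log n + \log(T/\epsilon))\bigr)$, and a union bound over the $T$ outer steps shows that with probability at least $1-\epsilon/2$ every inner draw is exactly uniform. Conditioning on this event the realized algorithm is indistinguishable from the ideal double-loop chain, so coupling gives total TV distance from the target at most $\epsilon/2 + \epsilon/2 = \epsilon$.

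Multiplying the number of outer steps by the per-step inner cost, and noting that $\log(T/\epsilon) = \tilde{O}(\log \epsilon^{-1})$, yields total runtime
\begin{align}
T \cdot O\bigl(n^{11}(\log n)^2 \log(T/\epsilon)\bigr) = \tilde{O}\bigl(m^2 n^{15+2\xi}(\log \epsilon^{-1})^2\bigr),
\end{align}
matching the claim. The main technical obstacle is the perturbation step: one must argue that replacing the ideal uniform inner draw by an approximate sampler does not compound multiplicatively across the $T$ outer steps. The failure-probability formulation of \lem{bigraph-uniform} is what makes this clean, since conditional on no inner failure the whole trajectory couples exactly to the ideal chain; had the guarantee been stated only as TV-closeness, one would instead need a contraction or $\chi^2$-type argument for the perturbed nested chain, and verifying that the outer Dirichlet form is not degraded would require revisiting the canonical-path analysis underlying \thm{bipartite-graph}.
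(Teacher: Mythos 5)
Your proposal is correct and follows essentially the same route as the paper: invoke the mixing-time bound of \thm{bipartite-graph} to fix $T=\tilde{O}(m^2n^{2\xi+4}\log\eps^{-1})$ outer steps, set the inner sampler's failure probability to $\eps/(2T)$ via \lem{bigraph-uniform} so the accumulated inner error is at most $\eps/2$, and multiply the step count by the per-call cost $\tilde{O}(n^{11})$ to get the stated runtime. Your explicit coupling-conditioned-on-no-failure framing is just a slightly more careful phrasing of the paper's additive error accounting $\eps/2+\eta_{\eps/2}\cdot r_{\eps/2}=\eps$.
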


On dense non-bipartite graphs, we have:
\begin{theorem}\label{thm:main-results-non-bigraph} 
    Given a non-bipartite graph $G=(V,E)$ with $|V|=2n$, If the minimum degree of vertices in $V$ satisfies $\delta_1(G) \geq 2n - 1 - \xi$, for some constant $\xi$, then for $\lambda>\frac{1}{4}$,
    given error $\eps$, we can achieve a sampling in time $\tilde{O}(n^{2\xi+20}(\log \epsilon^{-1})^3)$ such that the total variation distance between the sampling distribution and the ideal stationary distribution is at most $\epsilon$. 
\end{theorem}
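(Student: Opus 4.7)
The plan is to combine the outer-chain mixing time bound from \thm{non-bipartite-graph} with the inner-loop approximate uniform perfect matching sampler of \lem{non-bigraph-uniform}, then carry out a careful error budget so that the two sources of error together contribute at most $\epsilon$ in total variation. By \thm{non-bipartite-graph}, the idealized double-loop chain (one with a perfectly uniform inner sampler) mixes in $t_{\mix} = \tilde{O}(n^{2\xi+6})$; standard relaxation arguments give $t_{\mix}(\epsilon/2) = O(t_{\mix}\log \epsilon^{-1}) = \tilde{O}(n^{2\xi+6}\log\epsilon^{-1})$, which determines the outer step count $T$. The total runtime will then be $T$ times the per-step cost of the inner sampler, and the job is to show that this product is $\tilde{O}(n^{2\xi+20}(\log\epsilon^{-1})^3)$.

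Next I would verify that \lem{non-bigraph-uniform} is actually applicable at every outer step. At step $t$, the inner sampler is invoked on the subgraph $G_{X_t}$ induced by the vertex set of the current matching $X_t$. If $|X_t| = k$ so that $|V(G_{X_t})| = 2k$, then because each vertex of $G$ misses at most $\xi$ neighbors in $V$, each vertex retains degree at least $2k-1-\xi$ inside $G_{X_t}$, which is $\geq k$ once $k \geq \xi+1$; for the finitely many smaller cases the number of perfect matchings can be enumerated in $O(1)$ time with no sampling error. Thus \lem{non-bigraph-uniform} applies to each relevant inner invocation with per-call cost $\tilde{O}(n^{14}(\log\eta^{-1})^2)$ and failure probability $\eta$.

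For the error accounting, I would couple the implemented double-loop chain with the idealized one step by step. At each outer step, the inner sampler produces a distribution that is within statistical distance $\eta$ of uniform (with the failure event absorbed into $\eta$), which translates to a total-variation deviation of at most $\eta$ in that step's transition kernel. A standard coupling/union-bound argument over $T$ outer steps shows that the overall distribution is within $T\eta$ of the idealized double-loop distribution, which is in turn within $\epsilon/2$ of the true stationary distribution. Setting $\eta = \epsilon/(2T)$ forces $\log\eta^{-1} = \tilde{O}(\log(n/\epsilon))$, so each inner call costs $\tilde{O}(n^{14}\log^2\epsilon^{-1})$, and the total runtime becomes
\begin{equation}
T \cdot \tilde{O}\bigl(n^{14}\log^2\epsilon^{-1}\bigr) = \tilde{O}\bigl(n^{2\xi+6}\log\epsilon^{-1}\bigr)\cdot \tilde{O}\bigl(n^{14}\log^2\epsilon^{-1}\bigr) = \tilde{O}\bigl(n^{2\xi+20}(\log\epsilon^{-1})^3\bigr),
\end{equation}
as claimed.

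The main obstacle I anticipate is the step-by-step error propagation: naively, errors in the inner sampler could interact with the mixing dynamics in a subtle way, because the Hafnian ratios that appear in the detailed-balance computation are themselves expressed through inner sampling probabilities. The cleanest way around this is to treat the implemented chain and the idealized chain as running with coupled randomness for the edge choice and coupled inner samples whenever the inner distributions agree, so that the only disagreement events contribute $\leq \eta$ per step and aggregate linearly; one must then verify that the outer mixing guarantee of \thm{non-bipartite-graph} is preserved up to this $T\eta$ slack, which is immediate once the implemented chain is written as a $T\eta$-close perturbation of the ideal one in the induced metric on trajectories.
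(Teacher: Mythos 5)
Your proposal is correct and follows essentially the same route as the paper: take the outer mixing bound $\tilde{O}(n^{2\xi+6})$ from \thm{non-bipartite-graph}, invoke \lem{non-bigraph-uniform} for each inner sampling step with per-call failure probability $\eta=\epsilon/(2T)$, union-bound the inner errors over the $T$ outer steps, and multiply the step count by the $\tilde{O}(n^{14}(\log\eta^{-1})^2)$ inner cost to get $\tilde{O}(n^{2\xi+20}(\log\epsilon^{-1})^3)$. Your explicit check that the induced subgraph $G_{X_t}$ meets the minimum-degree hypothesis of \lem{non-bigraph-uniform} (degree $\geq 2k-1-\xi\geq k$ once $k\geq\xi+1$) and your coupling argument for the error propagation are details the paper leaves implicit, but they do not change the argument.
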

We note that our theoretical performance bound in Theorem \ref{thm:main-results-bigraph} is stronger for bipartite graphs. This advantage stems from the fact that the subroutine of uniformly sampling perfect matchings is known to be classically more efficient for bipartite graphs \cite{jerrum2004polynomial}. For context, the BipartiteGBS technique also leverages the unique properties of bipartite graphs, though for the different goal of encoding arbitrary matrices for hardness proofs~\cite{Grier_2022}.

%%%%%%%%%%%%%%%%%%%%%%%%%%%%%%%%%%%%%%%%%%%%%%%%%%%%%%%%%%%%%%%%%%%

\subsection{Numerical experiments}\label{sec:experiments}
We conduct experiments to compare our algorithms with prior approaches. All results and plots are obtained by numerical simulations on a 10-core Apple M2 Pro chip with 16 GB memory and an NVIDIA L4 chip with 24 GB GPU memory via python 3.11.0. We use `networkx' library~\cite{hagberg2008exploring} to store and manipulate graphs and `thewalrus' library~\cite{Gupt2019} to calculate Hafnian from adjacency matrix. Specifically, the graph problems we aim to solve are defined as follows:
\begin{itemize}
    \item Max-Hafnian: Given an undirected graph with non-negative adjacent matrix and target subgraph size $k$, find the subgraph of size $k$ with the maximum Hafnian value defined as \eq{Haf}.
    \item Densest $k$-subgraph: Given an undirected graph with non-negative adjacent matrix and target subgraph size $k$, find the subgraph of size $k$ with the maximum density value. Density denotes the number of edges divided by the number of vertices.
\end{itemize}

We design various unweighted graphs and set their total number of vertices $n=256$, which is larger than the system size of the 144-mode quantum photonic device \textit{Ji\v{u}zhāng}~\cite{PhysRevLett.130.190601} used in the GBS experiment solving graph problems as well as classical simulations. Specifically, the graphs we use are presented as follows: 

\begin{itemize}
    \item $G_1$ for max-Hafnian: The edge set contains a 16-vertex complete graph and all remaining vertex pairs have an edge with probability $0.2$. As a result, finding the 16-vertex subgraph with maximal Hafnian essentially finds the 16-vertex complete graph.
    
    \item $G_2$ for densest $k$-subgraph: Vertex $i$ has edges to vertices $0,~1,~…,~n-1-i$, which means the degree of each vertex is decreasing. As a result, finding the 80-vertex densest $k$-subgraph essentially finds the induced subgraph of the first 80 vertices.
    
    \item $G_3$ for score advantage: An Erd\"{o}s-Rényi graph as one of the most representative  random graphs. For each vertex pair there exists an edge with fixed probability 0.4.
    
    \item $G_4$ for double-loop Glauber dynamics: A random bipartite graph with 128 vertices in each part, and there is an edge on each vertex pair between two parts with fixed probability 0.2.

    \item $G_5$ for sparse graph experiments: A random bipartite graph with 128 vertices in each part, and there are $10\cdot 256$ edges chosen among all the $128^2$ possible pairs uniformly at random.
\end{itemize}

The classical algorithms and their variants enhanced by Glauber dynamics are presented in detail in \append{algorithm}. We empirically choose sufficient and appropriate mixing time and post-select the edge sets with right size from the iterative process of Glauber dynamics. 

In addition to \algo{glauber} and \algo{double-loop} introduced in the previous sections, we also implement another typical MCMC algorithm by Jerrum~\cite{jerrum2003counting} (\algo{jerrum-glauber}) and a quantum-inspired classical algorithm by~\citet{PRXQuantum.5.020341} for comparison.

\begin{algorithm}[h]
%        \SetAlgoLined 
        
        \KwIn{A graph $G=(V,E)$, number of steps $T$.}
        \KwOut{A sample of matching $X$ of $G$ such that
            $\Pr[X] \propto \lambda^{|X|}$.}
\vspace{1em}
            Initialize $X_0$ as an arbitrary matching in $G$; 

            Initialize $t\leftarrow 0$;
        
        \lWhile{$t < T$}\
            \Indp Choose a uniformly random edge $e$ from $E$;
            
            \lIf{$e$ and $X_t$ form a new matching}\
            {\quad \quad Set $M=X_t\cup\{e\}$;}
            
            \lElse 
            {\lIf{$e$ is in $X_t$}\
             {\Indp \hspace{.1em} Set $M=X_t\backslash\{e\}$;

            \Indm \lElse {\lIf {$e$ has a common vertex $v$ with $e'$ in $X_t$ and the remaining vertex $u$ is not covered by $X_t$}\
            \Indp \hspace{.1em} Set $M=X_t\cup \{e\} \backslash \{e'\};$}
            
             \Indm\lElse\  
              {\quad \quad Set $M =X_t$;}}}

             Set $X_{t+1}=M$ with prob.~$\min\{1,\lambda^{|M|-|X|}\}$ and otherwise set $X_{t+1}=X_t$;
             
             $t \leftarrow t+1$;

     \Indm Output the matching $X_T$;

    \caption{Jerrum's Approach: Glauber Dynamics for Matchings~\cite{jerrum2003counting}} 
     
    \label{algo:jerrum-glauber}
\end{algorithm}

\subsubsection{Glauber dynamics verification and comparison}\label{sec:5A}
We first verify that the Glauber dynamics can enhance random search and simulated annealing by outperforming their original algorithms as corresponding baselines. We plot our experimental results in \fig{2}. The instance $G_1$ serves as a test on the classic planted clique problem \cite{alon1998finding}, while $G_2$ acts as a sanity check with a predictable density structure. The objective is to confirm that our algorithms can correctly identify optimal solutions.

We note that the quantum-inspired classical algorithm in~\citet{PRXQuantum.5.020341} requires collision-free outcomes for a given target click number, which appears with very low probability when $k$ is large. Therefore, it is not applied to the densest $k$-subgraph problem in our experiments. This to some extent demonstrates the advantage of the Glauber dynamics for problem settings with an arbitrarily high click number $k$, since adding or decreasing edges poses no limitation on collision.

It can be seen from \fig{2} that compared to the original random search and simulated annealing algorithms, substituting uniform update in each iteration with \algo{glauber}, \algo{jerrum-glauber}, and the quantum-inspired classical algorithm all significantly improve their performance in terms of Hafnian and density values up to 3$\times$. These results confirm the correctness of our methods and their effectiveness in locating ground-truth solutions.

\begin{figure*}[!htbp]
    \centering
    \includegraphics[width=\linewidth]{./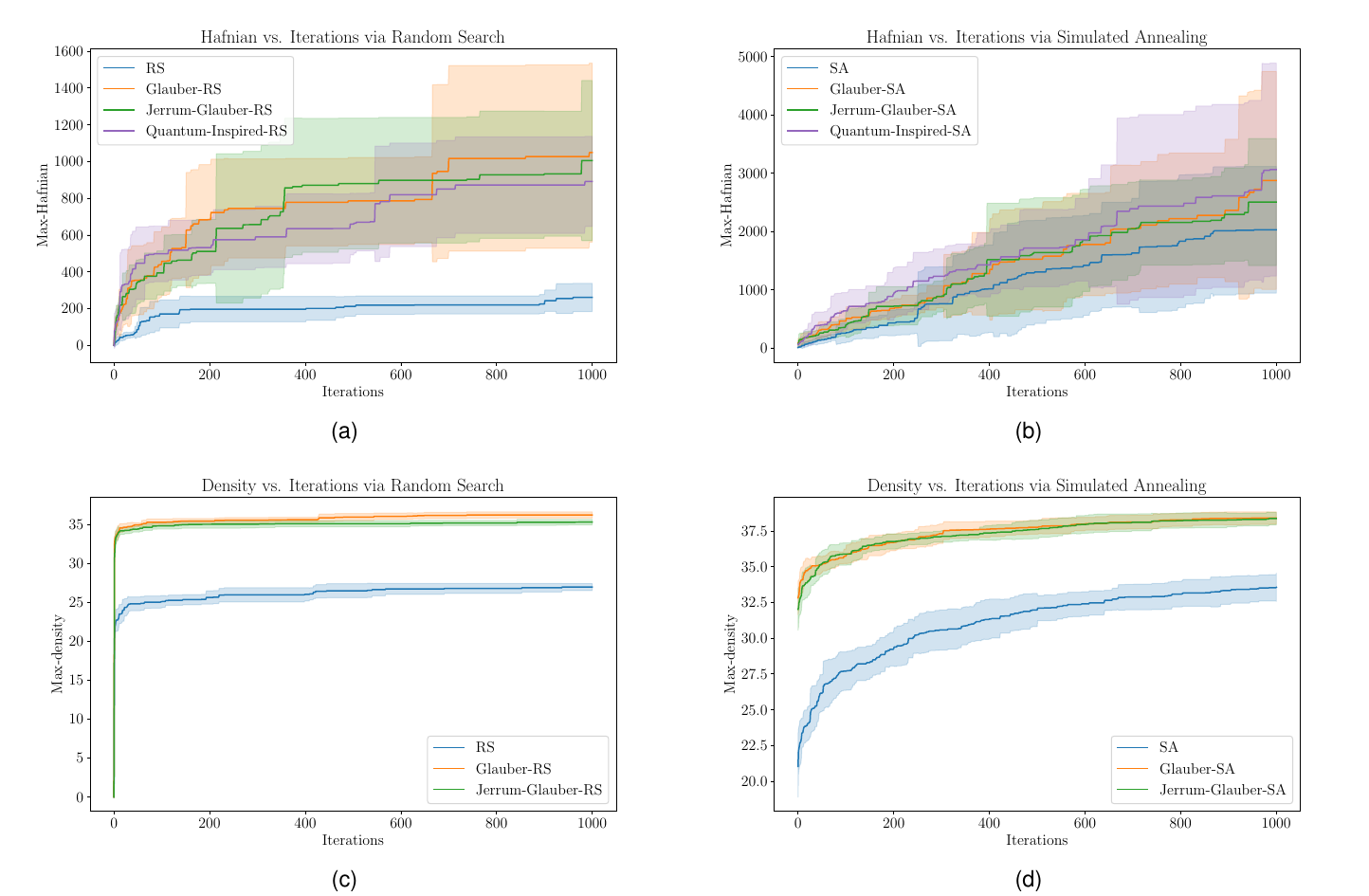}
    \caption{Glauber Dynamics Verification and Comparison. Iterations $=1000$, mixing time $\leq 10000$, fugacity $\lambda=c^2$, annealing parameter $\gamma=0.95$. (a) Hafnian Random Search on $G_1$ with $k=16$, $c=0.1$. The three enhanced variants are on average $240-300\%$ higher than the original classical algorithm, and the upper confidential intervals are about 350\% better. (b) Hafnian Simulated Annealing on $G_1$ with $k=16$, $c=0.1$. The three enhanced variants are on average $20-50\%$ higher than the original classical algorithm, and the upper confidential intervals are about 55\% better. (c) Density Random Search on $G_2$ with $k=80$, $c=0.4$. The two enhanced variants are on average $30-35\%$ higher than the original classical algorithm, and the upper confidential intervals are about 32\% better. (d) Density Simulated Annealing on $G_2$ with $k=80$, $c=0.4$. The two enhanced variants are on average $10-15\%$ higher than the original classical algorithm, and the upper confidential intervals are about 12\% better.}
    \label{fig:2}
\end{figure*}

\subsubsection{Score advantage comparison with GBS}
To evaluate performance on a more challenging average-case scenario, we use the Erd\"{o}s-Rényi random graph $G_3$. The score advantage is defined as the ratio of the maximum Hafnian/density value acquired by random search enhanced by the double-loop Glauber dynamics to that acquired by the original random search at the end of the stage with 100 iterations. We remark that too many iterations may result in collapse of score advantage to 1 since even the original approach can acquire the same optimal solution as the variants by the Glauber dynamics. We plot our experimental results enhanced by \algo{double-loop} in \fig{3}, whose probability distribution on unweighted graph is comparable to GBS in~\cite{PhysRevLett.130.190601}.

Ref.~\cite{PhysRevLett.130.190601} used a 144-vertex complete graph with random complex weights, whereas our numerical experiment is conducted on the aforementioned 256-vertex Erd\"{o}s-Rényi graph $G_3$ with $p=0.4$. Since all weights of the edges in the graph are real, we do not need the square of modulus. These may explain the difference between our score advantage and Fig. 3 of~\cite{PhysRevLett.130.190601}. As shown in the left side of our \fig{3}, our Hafnian score advantage up to 4$\times$ is lower than the GBS experiments~\cite{PhysRevLett.130.190601}, whereas in the right side of \fig{3} our density score advantage is comparable to theirs. The key message is that our method is effective and robust for finding dense, high-Hafnian subgraphs within generic, average-case random graphs.

\begin{figure*}[!htbp]
    \centering
    \includegraphics[width=\linewidth]{./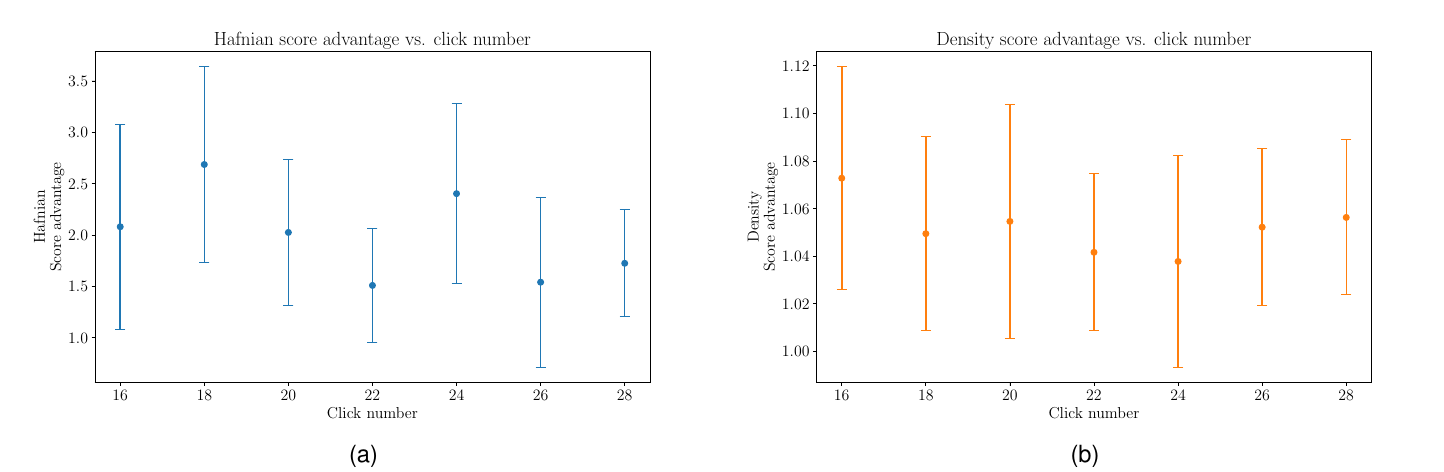}
    \caption{Score advantage vs.~click number: RS enhanced by \algo{double-loop} / original RS. Iterations $=100$, mixing time $\leq1000$, fugacity $\lambda=c^2$. (a) The left side reflects the Hafnian score advantage up to 4$\times$ on $G_3$ with $c=0.6$. (b) The right side reflects the density score advantage up to 120\% on $G_3$ with $c=0.6$.}
    \label{fig:3}
\end{figure*}

\begin{figure*}[!htbp]
    \centering
    \includegraphics[width=\linewidth]{./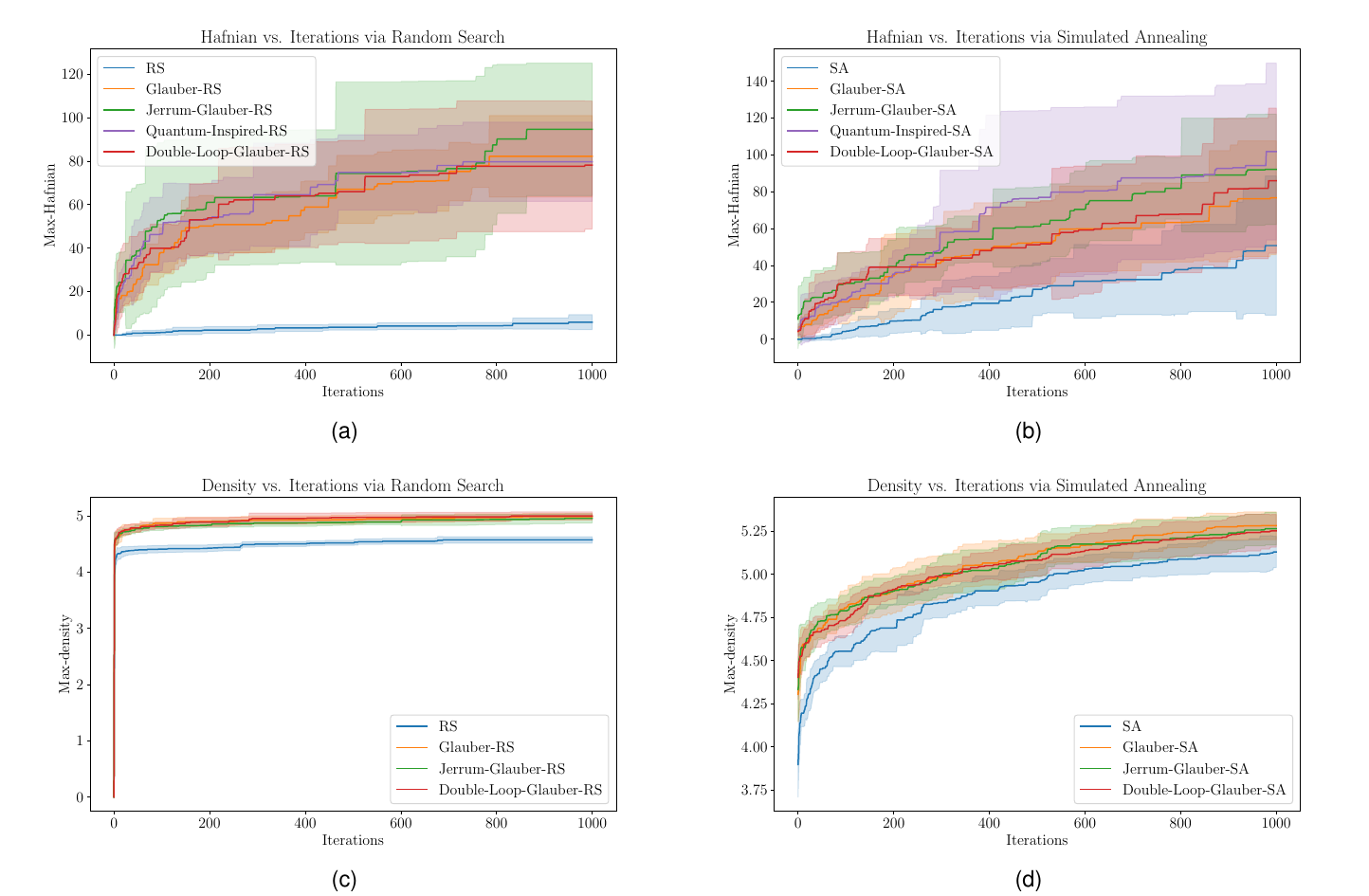}
    \caption{Double-loop Glauber Dynamics. Iterations $=1000$, mixing time $\leq 1000$, fugacity $\lambda=c^2$, annealing parameter $\gamma=0.95$. (a) Hafnian Random Search on $G_4$ with $k=16$, $c=0.4$. The four enhanced variants are on average $12\times-15\times$ higher than the original classical algorithm, and the upper confidential intervals are about 10$\times$ better. (b) Hafnian Simulated Annealing on $G_4$ with $k=16$, $c=0.4$. The four enhanced variants are on average $50-100\%$ higher than the original classical algorithm, and the upper confidential intervals are about 70\% better. (c) Density Random Search on $G_4$ with $k=80$, $c=0.8$. The three enhanced variants are on average 8\% higher than the original classical algorithm, and the upper confidential intervals are about 9\% better. (d) Density Simulated Annealing on $G_4$ with $k=80$, $c=0.8$. The three enhanced variants are on average $3\%$ higher than the original classical algorithm, and the upper confidential intervals are about 2\% better.}
    \label{fig:4}
\end{figure*}

\subsubsection{Double-loop Glauber dynamics on bipartite graphs}
In this section, we further conduct experiments on a randomly connected bipartite graph $G_4$ to verify our theory on double-loop Glauber dynamics in \sec{double-loop-Glauber}. This experiment is crucial because bipartite graphs are common in applications and are structures where our uniform perfect matching sampling subroutine is known to be efficient \cite{jerrum2004polynomial}. For the similar reason stated in \sec{5A}, we do not apply the quantum-inspired classical algorithm in~\cite{PRXQuantum.5.020341} to the densest $k$-subgraph problem. We plot our experimental results in \fig{4}.

It can be seen from \fig{4} that compared to the original random search and simulated annealing algorithms, substituting the uniform update in each iteration with \algo{glauber}, \algo{double-loop}, \algo{jerrum-glauber}, and the quantum-inspired classical algorithm all comparably improve their performance in terms of Hafnian and density values up to 10$\times$. This substantial improvement, especially from the \algo{double-loop} variant, validates our theoretical findings in \sec{double-loop-Glauber}.

\begin{figure*}[!htbp]
    \centering
    \includegraphics[width=\linewidth]{./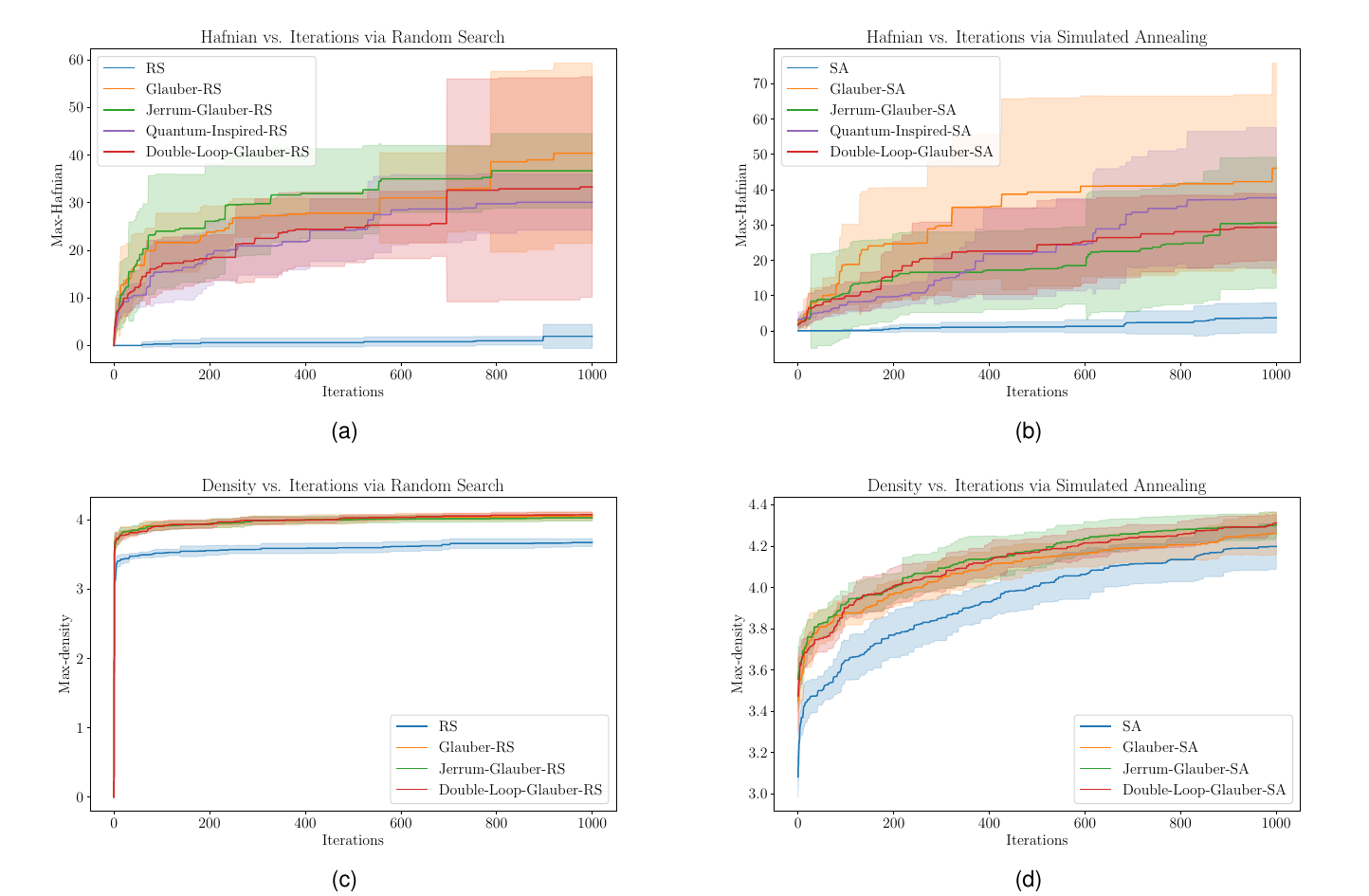}
    \caption{Sparse Graph Experiments. Iterations $=1000$, mixing time $\leq1000$, fugacity $\lambda=c^2$, annealing parameter $\gamma=0.95$. (a) Hafnian Random Search on $G_5$ with $k=16$, $c=0.6$. (b) Hafnian Simulated Annealing on $G_5$ with $k=16$, $c=0.6$. In both cases the four enhanced variants are capable of finding subgraphs with considerable Hafnian values, while the original random search and simulated annealing almost completely fail to find any perfect matching. (c) Density Random Search on $G_5$ with $k=80$, $c=0.8$. (d) Density Simulated Annealing on $G_5$ with $k=80$, $c=0.8$. Regarding the density values, the three enhanced variants are still $10\%$ higher than the performance of original algorithms with the same stopping time as in dense graphs, demonstrating the convergence of our MCMC-based algorithms on this sparse graph.}
    \label{fig:5}
\end{figure*}

\subsubsection{Sparse graph}
%%%%%%%%%%%%%%%%%%%%%%%%%%%%%%%%%%%%%%%%%%%%%%%%%%%%%%%%%%%%%%%%%%%
In this work, we only provide theoretical guarantee of~\algo{double-loop} for dense graphs in terms of polynomial mixing time. As a complementary from the practical perspective, in this subsection we conduct experiments on a sparse random bipartite graph $G_5$ whose number of edges $m=O(n)$, while for $G_1$, $G_2$, $G_3$, $G_4$ the parameter $m$ scales with $n^2$.

It can be seen from \fig{5} that the original random search and simulated annealing algorithms almost completely fail to find any subgraph with perfect matching and the acquired Hafnian value is nearly zero. In contrast, substituting the uniform update in each iteration with \algo{glauber}, \algo{double-loop}, and \algo{jerrum-glauber}, the classical sampling algorithms are still capable of acquiring considerable Hafnian values.

Regarding the density values, the three enhanced variants are still $10\%$ higher than the performance of original algorithms with the same stopping time as in dense graphs, demonstrating the convergence of our MCMC-based algorithms on this sparse graph.

\section{Discussion}

In this work, we have proposed the double-loop Glauber dynamics to sample from GBS distributions on unweighted graphs and theoretically prove that it mixes in polynomial time on dense graphs. 
Numerically, we conduct experiments on graphs with 256 vertices, larger than the scales in former GBS experiments as well as classical simulations. In particular, we show that both the single-loop and double-loop Glauber dynamics outperform original classical algorithms for the max-Hafnian and densest $k$-subgraph problems with up to 10$\times$ improvement.

Our theoretical results for the rapid mixing time in double-loop algorithms leave several natural directions for future research. 
We have only proved polynomial mixing time results for dense graphs. 
A natural next step is to establish polynomial mixing time bounds for more general graphs 
(e.g., planar graphs, Erd\"{o}s-Rényi graphs, etc.) to better support our empirical observations.  
In particular, we identify the following key challenges for further investigation:
\begin{itemize}
    \item For non-dense subgraphs, the ratio of the Hafnian to the corresponding complete subgraph's Hafnian may become exponentially large, preventing direct application of the complete graph analysis, as is discussed in \append{non-dense-graph}. Can we apply our enhanced canonical path method to deriving mixing time upper bounds for more general graphs, such as graphs with high symmetry?
    
    \item 
    For the analysis of mixing time for non-dense non-bipartite graphs, another technical barrier lies in the uniform sampling of perfect matchings on such graphs, whose mixing time is an open problem and directly impacts the convergence guarantees of our inner-layer MCMC procedure.

    \item Our double-loop MCMC algorithm can also be extended to non-negative weighted graphs by slightly adjusting the transition probabilities of the Markov chain (see \algo{random-search-enhanced} in \append{algorithm}). However, the analysis of their mixing time becomes significantly more complicated and remains an open problem.
\end{itemize}

\section*{Acknowledgments}
We thank Weiming Feng and Yu-Hao Deng for helpful suggestions. We acknowledge the Bohrium platform developed by DP Technology for supporting our numerical experiments.

This work was supported by the National Natural Science Foundation of China (Grant Numbers 92365117 and 62372006).

\section*{Data Availability}
The data and source code are available on our Github page \hyperlink{https://github.com/Qubit-Fernand/GBS-MCMC}{https://github.com/Qubit-Fernand/GBS-MCMC}. In addition, they are also available at Zenodo repository~\cite{shuo_zhou_2025_17046671}. 

\section*{Author contributions}
T.L.~conceived the project. Y.Z.~provided the theoretical proof of the mixing time of MCMC-based algorithms with assistance from Z.Y. 
S.Z.~conducted the numerical experiments with assistance from Z.W., R.Y., and Y.X. 
X.W.~proposed the framework of the double-loop Glauber dynamics. 
All authors contributed to writing the paper. Y.Z., S.Z., X.W., and T.L.~contributed to the review process.

\section*{Competing interests}
The authors declare no competing interests.

%%%%%%%%%%%%%%%%%%%%%%%%%%%%%%%%%%%%%%%%%%%%%%%%%%%%%%%%%%%%%%%%%%%

%

%%%%%%%%%%%%%%%%%%%%%%%%%%%%%%%%%%%%%%%%%%%%%%%%%%%%%%%%%%%%%%%%%%
\appendix
\onecolumngrid

\newpage

\setcounter{figure}{0}
\setcounter{theorem}{0}
\setcounter{algocf}{0}

\renewcommand{\figurename}{Supplementary Fig.}
\renewcommand{\thefigure}{\arabic{figure}}

\captionsetup[figure]{labelfont=bf}
\captionsetup[figure]{labelsep=bar}

\renewcommand{\thetheorem}{S\arabic{theorem}} 
\renewcommand{\thelemma}{S\arabic{lemma}} 
\renewcommand{\thecorollary}{S\arabic{corollary}}
\renewcommand{\thedefinition}{S\arabic{definition}}
\renewcommand{\theproposition}{S\arabic{proposition}}
\renewcommand{\thealgocf}{S\arabic{algocf}}

\renewcommand{\appendixname}{Supplementary Note}
\appendix
\renewcommand{\thesection}{\arabic{section}}
\renewcommand{\thesubsection}{\Alph{subsection}}

\renewcommand{\theequation}{S\arabic{equation}}
\setcounter{equation}{0}

\section*{Supplementary Information}
In this Supplementary Information, we present detailed proofs of the propositions, theorems, and corollaries presented in the manuscript ``Efficient Classical Sampling from Gaussian Boson Sampling Distributions on Unweighted Graphs''. The methods used in our proofs are structured as follows. In \append{canonical-path}, we review the canonical path method, a well-established technique in MCMC literature. \append{uniform-sampling-pm} summarizes existing results on uniform sampling for perfect matchings. In \append{dense-graph-proof}, we present the proof for the polynomial mixing time of our double-loop Glauber dynamics on dense graphs, which is our core contribution. In \append{double-loop-weighted}, we describe how to adapt our double-loop Glauber dynamics to weighted graphs. Finally, we present details of the algorithms in our numerical experiments with pseudocodes in \append{algorithm}.

\section{Canonical Path Method}\label{app:canonical-path}

The key to demonstrating a rapid mixing of a Markov chain using the canonical path technique lies in setting up a suitable multicommodity flow. For a pair $I,F\in \Omega$, we can imagine that by using the transitions of the Markov chain, we hope to route $\pi(I)\times \pi(F)$ units of fluid from $I$ to $F$. Specifically, we can define a canonical path $\gamma_{IF}:= (M_0 = I,M_1,\dots, M_l = F)$ from $I$ to $F$ through pairs $(M_i,M_{i+1})$ of states adjacent in the Markov chain. Let
\begin{align}
    \Gamma := \{\gamma_{IF}\mid I,F\in\Omega\}
\end{align}
be the set of all canonical paths. Then we can define the congestion $\varrho = \varrho(\Gamma)$ of the chain:
\begin{align}\label{eq:congestion}
    \varrho(\Gamma) := \max_{t=(M,M')}\left\{\frac{1}{\pi(M)P(M,M')} \sum_{I,F:\gamma \text{ uses }t} \pi(I)\pi(F) |\gamma_{IF}|\right\}
\end{align}
where $t$ runs over all transitions, i.e., all pairs of adjacent state of the chain; and $|\gamma_{IF}|$ denotes the length $l$ of the path $\gamma_{IF}$. In the analysis in~\cite{jerrum2003counting}, Jerrum modified the chain by making it ``lazy'', i.e., with probability $1/2$ the chain stays in the same state, and otherwise makes the transition specified in the original chain. This laziness doubles the mixing time, but ensures that the eigenvalues of the transition matrix are all non-negative, which helps the analysis. The relation between the congestion and the mixing time of a lazy MC is formalized as follows:
\begin{theorem}[\cite{jerrum2003counting}]\label{thm:canonical-mixing}
    The mixing time of the lazy MC is bounded by
$\tau_{\mix}(\varepsilon) \leq 2 \varrho\left(2 \ln \varepsilon^{-1}+\ln \pi_{\min}^{-1}\right)$,
where  $\varrho=\varrho(\Gamma)$ is the congestion defined in \eq{congestion} with respect to any set of canonical paths $\Gamma$, and $\pi_{\min}=\min_{M\in\Omega}\pi(M)$ is the minimum value of the stationary distribution.
\end{theorem}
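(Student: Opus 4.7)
The plan is to follow Sinclair's canonical-path argument, which bounds the mixing time by first proving a Poincar\'e-type inequality relating the variance of any observable to the Dirichlet form of $P$, and then converting the resulting spectral gap into a total-variation bound. The starting point is the identity $\operatorname{Var}_\pi(f) = \tfrac{1}{2}\sum_{I,F\in\Omega}\pi(I)\pi(F)(f(I)-f(F))^2$ valid for any real function $f$ on $\Omega$. I would then decompose each difference $f(I)-f(F)$ telescopically along the canonical path $\gamma_{IF}=(M_0=I,M_1,\ldots,M_l=F)$ and apply the Cauchy--Schwarz inequality with weight $|\gamma_{IF}|$ to get $(f(I)-f(F))^2 \leq |\gamma_{IF}|\sum_{(M,M')\in\gamma_{IF}}(f(M)-f(M'))^2$.

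The key reorganization is to swap the order of summation so that the outer sum runs over transitions $t=(M,M')$ while the inner sum gathers all canonical paths that cross $t$. Multiplying and dividing by $\pi(M)P(M,M')$ makes the inner coefficient coincide with the quantity maximized in the definition \eq{congestion} of $\varrho$, so one obtains the Poincar\'e inequality $\operatorname{Var}_\pi(f) \leq \varrho\cdot\mathcal{E}(f,f)$, where $\mathcal{E}(f,f) = \tfrac{1}{2}\sum_{(M,M')}\pi(M)P(M,M')(f(M)-f(M'))^2$ is the Dirichlet form associated to $P$. By the variational characterization of the second eigenvalue of a reversible chain, this inequality is equivalent to the spectral-gap bound $1-\lambda_2 \geq 1/\varrho$.

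Finally, I would translate the spectral gap into a total-variation mixing time. Laziness forces all eigenvalues of $P$ into $[0,1]$, so the relevant second-largest-magnitude eigenvalue is exactly $\lambda_2$. Combining the standard reversible-chain bound $\|P^t(x,\cdot)-\pi\|_{\TV} \leq \tfrac{1}{2}\pi_{\min}^{-1/2}\lambda_2^t$ with the estimate $\lambda_2 \leq 1 - 1/\varrho \leq \exp(-1/\varrho)$ and solving for $t$ gives the claimed bound; the leading factor of $2$ in front of $\varrho$ absorbs slack in the spectral-to-TV conversion (equivalently, the factor picked up when passing to the lazy chain). No genuine technical obstacle arises---the argument is essentially routine once canonical paths are chosen---and the only step demanding real care is the bookkeeping in the Cauchy--Schwarz reordering, where one must reproduce the definition \eq{congestion} cleanly, including the length factor $|\gamma_{IF}|$ and the normalizer $\pi(M)P(M,M')$. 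The real work of the paper is not in invoking this theorem but in designing canonical paths whose congestion $\varrho$ is polynomially small, which is addressed separately in the appendices.
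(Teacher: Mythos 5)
Your proposal is correct and follows essentially the same route as the paper (and as Jerrum/Sinclair): the telescoping decomposition along canonical paths, Cauchy--Schwarz with weight $|\gamma_{IF}|$, and the summation swap to recover the congestion normalizer $\pi(M)P(M,M')$ are exactly the Poincar\'e-inequality argument the paper isolates as \lem{canonical-mixing-variation} before converting the spectral gap to a total-variation bound via laziness. The only difference is cosmetic slack in the final spectral-to-TV conversion, which your factor of $2$ (and the paper's) absorbs.
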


Here we present a strengthened version of the multiple canonical path framework~\cite{sinclair1992improved,guo2017random}, which allows for multiple distinct paths between each pair $(I, F)$. Each path is associated with a selection probability $p^{(i)}_{IF}$, forming a complete probability distribution over the path set from $I$ to $F$. Intuitively, this framework can be viewed as a generalization of the original canonical path method, where for each pair $(I,F)$ we consider selecting a path from a collection with specific probabilities instead of a single canonical path.

Formally, we can define this multiple canonical path framework as follows:
\begin{definition}[Multiple canonical path framework]
Consider a Markov chain with state space $\Omega$.
    For any pair of state $(I,F)\in \Omega\times \Omega$, let $\Gamma_{IF} := \{(\gamma_{IF}^{(i)}, p_{IF}^{(i)})\mid i=1,2,\dots, l(I,F)\}$ be a collection of paths from $I$ to $F$ with size $l(I,F)$, where $\gamma_{IF}^{(i)}$ is the $i$-th path and $p_{IF}^{(i)}$ is the selection probability of the $i$-th path, satisfying the following conditions:
    \begin{itemize}
        \item Each $\gamma_{IF}^{(i)}$ is a path from $I$ to $F$ in the Markov chain, i.e., there exists a sequence of states $M_0 = I, M_1, \ldots, M_l = F$ such that $(M_j, M_{j+1})$ is a transition in the Markov chain for all $j=0,\ldots,l-1$.
        \item The selection probabilities $p_{IF}^{(i)}$ are non-negative and sum to 1: $p_{IF}^{(i)} \geq 0$ for all $i=1,\ldots,l(I,F)$, and $\sum_{i=1}^{l(I,F)} p^{(i)}_{IF} = 1$.
        
    \end{itemize}
    The length of each path $\gamma_{IF}^{(i)}$ is denoted as $|\gamma_{IF}^{(i)}|$, and for convenience we define $|\gamma_{IF}| = \max_{i=1,\ldots,l(I,F)} |\gamma_{IF}^{(i)}|$.

    The multiple canonical path framework is denoted as $\Gamma = \{\Gamma_{IF}\mid I,F\in\Omega\}$, and the congestion $\varrho$ is defined as follows:
    \begin{align}
         \varrho(\Gamma) := &\max_{t=(M,M')}\bigg\{\frac{1}{\pi(M)P(M,M')} \cdot 
        \sum_{I,F\in \Omega} \sum_{i=1}^{l(I,F)} \textbf{1}_{\{\gamma_{IF}^{(i)}\text{ uses }t\}} \cdot\pi(I)\pi(F) |\gamma_{IF}^{(i)}|\cdot p^{(i)}_{IF}\bigg\}.
    \end{align} 
\end{definition}

Similarly, \thm{canonical-mixing} remains valid for the enhanced multiple canonical path framework. In subsequent proofs in \append{dense-graph-proof}, we will employ this enhanced canonical path framework, as its multiple-path construction enables more effective utilization of symmetries between different transitions, thereby providing greater flexibility in establishing conductance bounds and mixing time estimates.

The key step in the proof in \thm{canonical-mixing} is constructing the relations between the congestion $\varrho$ as well as local and global variations:
\begin{lemma}\label{lem:canonical-mixing-variation}
    Let $\pi$ be the stationary distribution of a Markov chain with state space $\Omega$ and transition matrix $P$.
    For any function $f: \Omega\to\R$, define the variance of $f$ with respect to $\pi$ as
    \begin{align}
        \Var_\pi(f) := \sum_{M\in\Omega} \pi(M)\left(f(M) - E_\pi f\right)^2 = \frac{1}{2} \sum_{I,F\in \Omega} \pi(I)\pi(F) \left(f(I) - f(F)\right)^2
    \end{align}
    and the ``global variation'' of function $f$ with respect to the transition as
    \begin{align}
        \mathcal{E}_\pi(f,f) := \frac{1}{2} \sum_{M,M'\in \Omega} \pi(M)P(M,M')\left(f(M) - f(M')\right)^2.
    \end{align}
    Then for any canonical path framework $\Gamma$ of the MC, we have

    \begin{align}\label{eq:eq3}
        \mathcal{E}_\pi(f,f) \geq \frac{1}{\varrho(\Gamma)} \Var_\pi f.
    \end{align}
\end{lemma}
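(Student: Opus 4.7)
The plan is to establish the inequality by starting from the variance, routing its contributions along the canonical paths, and then applying Cauchy--Schwarz to pass from pairwise differences $f(I)-f(F)$ to edge-wise differences $f(M)-f(M')$ along transitions. Concretely, I would begin with the second expression for $\Var_\pi f$ and use the fact that $\sum_i p_{IF}^{(i)} = 1$ to insert a partition of unity over the available paths:
\begin{align}
\Var_\pi f = \frac{1}{2}\sum_{I,F\in\Omega}\sum_{i=1}^{l(I,F)} \pi(I)\pi(F)\,p_{IF}^{(i)}\bigl(f(I)-f(F)\bigr)^2.
\end{align}

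Next, for each path $\gamma_{IF}^{(i)} = (M_0 = I, M_1, \ldots, M_{|\gamma_{IF}^{(i)}|} = F)$, I would write $f(I) - f(F)$ as the telescoping sum $\sum_{j} (f(M_j) - f(M_{j+1}))$ and apply the Cauchy--Schwarz inequality to get
\begin{align}
\bigl(f(I)-f(F)\bigr)^2 \;\leq\; |\gamma_{IF}^{(i)}| \sum_{t=(M,M')\in \gamma_{IF}^{(i)}}\bigl(f(M)-f(M')\bigr)^2.
\end{align}
Substituting this bound into the preceding identity and swapping the order of summation to group terms by the transition $t=(M,M')$ rather than by the endpoint pair $(I,F)$, I obtain
\begin{align}
\Var_\pi f \;\leq\; \frac{1}{2}\sum_{t=(M,M')}\bigl(f(M)-f(M')\bigr)^2 \sum_{I,F,i\,:\,\gamma_{IF}^{(i)}\text{ uses }t} \pi(I)\pi(F)\,|\gamma_{IF}^{(i)}|\,p_{IF}^{(i)}.
\end{align}

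The inner sum is exactly the numerator appearing in the definition of $\varrho(\Gamma)$ for the transition $t$, so by that definition it is bounded by $\varrho(\Gamma)\cdot \pi(M)P(M,M')$. Plugging in and recognizing the resulting expression as $\varrho(\Gamma)\cdot \mathcal{E}_\pi(f,f)$ completes the argument. The proof is essentially algebraic and the classical single-path argument of Sinclair carries over with no new ideas; the only point requiring a moment of care is that the extension to multiple paths is harmless because the selection probabilities $p_{IF}^{(i)}$ form a distribution, so inserting $\sum_i p_{IF}^{(i)} = 1$ preserves the variance exactly before Cauchy--Schwarz is applied. I therefore do not expect any real obstacle; the estimate is tight up to the Cauchy--Schwarz step, which is standard.
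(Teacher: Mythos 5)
Your proposal is correct and follows essentially the same route as the paper's own proof: insert the partition of unity $\sum_i p_{IF}^{(i)}=1$, telescope $f(I)-f(F)$ along each path, apply Cauchy--Schwarz with the path length $|\gamma_{IF}^{(i)}|$, regroup the sum by transition $t=(M,M')$, and bound the resulting per-transition weight by $\varrho(\Gamma)\,\pi(M)P(M,M')$ via the definition of congestion. There are no gaps.
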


    To extend \thm{canonical-mixing} to the multiple canonical path scenario, it suffices to prove that \lem{canonical-mixing-variation} remains valid under the generalized framework: 
    \begin{proof}[Proof of \lem{canonical-mixing-variation} for multiple canonical path framework]
        \begin{align}
            2\operatorname{Var}_{\pi}f & =\sum_{I,F\in\Omega}\pi(I)\pi(F)\left(f(I)-f(F)\right)^{2} \\ &=\sum_{I,F\in\Omega}\pi(I)\pi(F)\sum_{i=1}^{l(I,F)}p^{(i)}_{IF}\cdot\left(\sum_{(M,M')\in\gamma^{(i)}_{IF}}1\cdot\left(f(M)-f(M')\right)\right)^2 \\
        &\leq\sum_{I,F\in\Omega}\pi(I)\pi(F)\sum_{i=1}^{l(I,F)}p^{(i)}_{IF}\cdot\left|\gamma^{(i)}_{IF}\right|\sum_{(M,M')\in\gamma^{(i)}_{IF}}\left(f(M)-f(M')\right)^{2} \\
         & =\sum_{M,M'\in\Omega}\sum_{I,F\in\Omega}\pi(I)\pi(F)\cdot \sum_{i=1}^{l(I,F)}\textbf{1}_{\{\gamma_{IF}^{(i)}\text{ uses }(M,M')\}}p^{(i)}_{IF}\left|\gamma^{(i)}_{IF}\right|\left(f(M)-f(M')\right)^2 \\
         & =\sum_{M,M'\in\Omega}\left(f(M)-f(M')\right)^{2}\cdot  \sum_{I,F\in\Omega}\pi(I)\pi(F) \sum_{i=1}^{l(I,F)}\textbf{1}_{\{\gamma_{IF}^{(i)}\text{ uses }(M,M')\}}p^{(i)}_{IF}\left|\gamma^{(i)}_{IF}\right| \\
         & \leq\sum_{M,M'\in\Omega}\left(f(M)-f(M')\right)^{2}\pi(M)P(M,M')\varrho \\
         & =2\varrho\mathcal{E}_{\pi}(f,f).
        \end{align}
    \end{proof}

Building upon the analytical framework in~\cite{jerrum2003counting}, we have extended \thm{canonical-mixing} to accommodate multiple canonical paths, achieving a stronger result:
\begin{theorem}\label{thm:canonical-mixing-multiple}
    For any Markov chain, the mixing time of its lazy MC is bounded by
$\tau_{\mix}(\varepsilon) \leq 2 \varrho\left(2 \ln \varepsilon^{-1}+\ln \pi_{\min}^{-1}\right)$,
where  $\varrho=\varrho(\Gamma)$ is the congestion defined in \eq{congestion} with respect to any multiple canonical path framework $\Gamma$, and $\pi_{\min}=\min_{M\in\Omega}\pi(M)$ is the minimum value of the stationary distribution.
\end{theorem}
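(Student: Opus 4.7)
The plan is to reduce the result to the classical spectral-gap/mixing-time machinery for reversible Markov chains, exploiting the already-generalized Poincar\'e inequality of \lem{canonical-mixing-variation}. The only substantive modification of the single-path argument lies in that lemma, which the excerpt has just extended to the multiple-path setting; everything downstream is path-agnostic and proceeds verbatim as in the analysis of~\cite{jerrum2003counting}.

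First I would observe that the lazy version of the chain is reversible with respect to $\pi$ and has all eigenvalues in $[0,1]$, so its spectral gap $\lambda_* := 1 - \lambda_2$ admits the variational characterization
\begin{equation*}
\lambda_* \;=\; \inf_{f \colon \Var_\pi(f) > 0} \frac{\mathcal{E}_\pi(f,f)}{\Var_\pi(f)}.
\end{equation*}
Combining this identity with the multiple-path version of \lem{canonical-mixing-variation} proved above immediately yields the spectral-gap lower bound $\lambda_* \geq 1/\varrho(\Gamma)$.

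Next I would invoke the standard $L^2$-contraction estimate for reversible lazy chains: for every starting state $M$,
\begin{equation*}
\|P^t(M,\cdot) - \pi\|_{\TV} \;\leq\; \tfrac{1}{2}(1-\lambda_*)^t \, \pi(M)^{-1/2} \;\leq\; \tfrac{1}{2}\, e^{-t/\varrho}\, \pi_{\min}^{-1/2}.
\end{equation*}
Requiring the right-hand side to be at most $\varepsilon$ and solving for $t$ gives an upper bound $\varrho\bigl(\ln \varepsilon^{-1} + \tfrac{1}{2}\ln \pi_{\min}^{-1}\bigr)$ on the mixing time, which is majorized (with loose constants) by the announced expression $2\varrho(2\ln\varepsilon^{-1} + \ln\pi_{\min}^{-1})$.

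The only real obstacle is to confirm that the single-path hypothesis enters the classical derivation \emph{only} through the Poincar\'e inequality. Since the variational formula for the spectral gap, the $L^2$-to-TV conversion, and the non-negativity of eigenvalues for the lazy chain all depend on the reversible chain itself and not on how one routes flows, every step after the Poincar\'e inequality carries over without modification. Hence the theorem reduces to a black-box application of the classical canonical-paths mixing-time theorem, with the single-path congestion replaced throughout by the multiple-path congestion $\varrho(\Gamma)$.
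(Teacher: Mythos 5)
Your proposal is correct and follows essentially the same route as the paper: the paper likewise reduces \thm{canonical-mixing-multiple} to the multiple-path version of the Poincar\'e inequality in \lem{canonical-mixing-variation} and then invokes the unchanged downstream spectral-gap machinery of~\cite{jerrum2003counting}. Your explicit spelling-out of the variational characterization and the $L^2$-to-TV step (modulo the factor of $2$ from the lazy chain's Dirichlet form, which your loose constants absorb) matches what the paper treats as a black box.
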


%================================================================
\section{Uniform Sampling for Perfect Matching}\label{app:uniform-sampling-pm}

In this section, we briefly introduce algorithms for uniform sampling of perfect matchings. We consider graphs $G = (V, E)$, where $V$ is a vertex set of size $2n$ and $E$ is an edge set of size $m$. In \append{uniform-sampling-dense}, we focus on the case of general graphs and give a MC method corresponding to \lem{non-bigraph-uniform}; in \append{uniform-sampling-bipartite}, we focus on the case of bipartite graphs and give a MC method corresponding to \lem{bigraph-uniform}. Unless otherwise specified, $G$ may refer to a non-bipartite graph. 

\subsection{Uniform Sampling for Perfect Matchings on Dense Graphs}\label{app:uniform-sampling-dense}
We begin by presenting a Markov Chain Monte Carlo (MCMC) method, as proposed by~\cite{jerrum1989approximating}, for uniform sampling of perfect matchings in general dense graphs. This method provides a fully polynomial randomized approximation scheme (FPRAS) to approximate the number of perfect matchings in a balanced bipartite graph, where the method is based on sampling perfect matchings uniformly at random. The sampling method also works for general non-bipartite graphs. For a general graph $G$, let $M_k(G)$ denote the set of matchings of size $k$ in $G$. Note that $M_n(G)$ is the set of perfect matchings in $G$, and $M_{n-1}(G)$ consists of matchings that have exactly one edge fewer than a perfect matching. For convenience, we refer to these as ``near-perfect'' matchings in $G$.
Let $\mathcal{N}$ be the set $M_n(G)\cup M_{n-1}(G)$.

Jerrum proposed a Markov chain to sample a matching $M$ from $\mathcal{N}$ uniformly at random, which is specified as follows: In any state $M\in\mathcal{N}$, we proceed by selecting an edge $e=(u,v)\in E$ uniformly at random. The transition is determined as follows:
\begin{itemize}
    \item If $M\in M_n(G)$ and $e\in M$, then move to state $M' = M \backslash \{e\}$;
    \item If $M\in M_{n-1}(G)$ and $u,v$ are unmatched in $M$, then move to state $M' = M\cup \{e\}$;
    \item If $M\in M_{n-1}(G)$, $u$ is matched to $w$ in $M$ and $v$ is unmatched in $M$, then move to state $M' = M \cup \{e\} \backslash \{(w,u)\}$; symmetrically, if $v$ is matched to $w$ and $u$ is unmatched, then move to state $M' = M \cup \{e\} \backslash \{(w,v)\}$;
    \item In all other cases, do nothing.
\end{itemize}

Note that the Markov chain is ergodic, aperiodic, and irreducible, meaning that it converges to a unique stationary distribution~\cite{levin2017markov}. 
For two near-perfect matchings $M$ and $M'$ such that $M'= M\cup \{(u,v)\} \backslash \{(w,u)\}$, the corresponding transition probability is formulated as
\begin{align}
    \Pr[M\to M'] = \frac{1}{|E|};\quad \Pr[M'\to M] = \frac{1}{|E|}.
\end{align}
For a perfect matching $M$ and a near-perfect matching $M'$ such that $M'= M\backslash \{(u,v)\}$, the corresponding transition probability is formulated as
\begin{align}
    \Pr[M\to M'] = \frac{1}{|E|};\quad \Pr[M'\to M] = \frac{1}{|E|}.
\end{align}
Therefore the stationary distribution of the Markov chain is given by $\pi(M) = \frac{1}{|\mathcal{N}|}.$

To bound the mixing time of this Markov chain, we introduce the concept of the conductance of a Markov chain, which is also known as bottleneck ratio: 
\begin{align}
    \Phi := \min_{S:0< \sum_{i\in S} \pi_i\leq \frac{1}{2}} \frac{\sum_{i\in S, j\not\in S}p_{ij}\pi_i}{\sum_{i\in S}\pi_i}\label{eq:conductance}
\end{align}
where $p_{ij}$ is the transition probability from state $i$ to state $j$, and $\pi_i$ is the stationary probability of state $i$.

By utilizing the relationship between the conductance $\Phi$ and the second eigenvalue of the transition matrix, we can bound the mixing time $t_{\mix}(\eps)$ of the Markov chain as follows:
\begin{lemma}[\cite{jerrum1989approximating,sinclair1989approximate}]
    If the given Markov chain is irreducible, reversible, and ``lazy'', i.e., $\min_{i} p_{ii}\geq \frac{1}{2}$, we have the following relation between the conductance $\Phi$ and the mixing time $t_{\mix}(\eps)$:
    \begin{align}
        t_{\mix}(\eps) \leq \frac{2}{\Phi^2} \left(\ln {\pi_{\operatorname{min}}}^{-1}+\ln \eps^{-1}\right).
    \end{align}
\end{lemma}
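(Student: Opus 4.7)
The plan is to derive the stated mixing-time bound from Cheeger's inequality for reversible Markov chains, following the classical route through the spectral gap. Because the chain is lazy, $\min_i p_{ii}\geq 1/2$, every eigenvalue of the transition matrix $P$ lies in $[0,1]$; in particular the eigenvalues can be ordered as $1=\lambda_1>\lambda_2\geq\dots\geq\lambda_{|\Omega|}\geq 0$, so the spectral-gap quantity that governs convergence reduces to $1-\lambda_2$ (no extra work is needed to handle a negative eigenvalue near $-1$, which is exactly the benefit of laziness).

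The first main step is the classical Cheeger inequality for reversible chains: $1-\lambda_2\geq \Phi^2/2$, where $\Phi$ is the conductance in \eq{conductance}. I would prove this by the standard variational characterization $1-\lambda_2=\min_{f\perp \mathbf{1}}\mathcal{E}_\pi(f,f)/\Var_\pi(f)$, choosing a test function given by a thresholded indicator and decomposing the numerator and denominator via the co-area identity, then applying Cauchy--Schwarz to obtain the $\Phi^2/2$ factor. This yields $1/(1-\lambda_2)\leq 2/\Phi^2$.

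The second step is the standard spectral bound on the total variation distance. Using reversibility to diagonalise $P$ in $L^2(\pi)$ and expanding $P^t(x,\cdot)/\pi(\cdot)-1$ in the orthonormal eigenbasis, one gets
\begin{align}
\|P^t(x,\cdot)-\pi\|_{\TV}\leq \tfrac{1}{2}\sqrt{\pi(x)^{-1}}\,\lambda_2^{\,t}\leq \tfrac{1}{2}\sqrt{\pi_{\min}^{-1}}\,e^{-(1-\lambda_2)t}.
\end{align}
Requiring the right-hand side to be $\leq \eps$ and solving for $t$ gives $t_{\mix}(\eps)\leq (1-\lambda_2)^{-1}\bigl(\tfrac{1}{2}\ln\pi_{\min}^{-1}+\ln\eps^{-1}+\ln\tfrac{1}{2}\bigr)$, which after dropping the harmless $\ln(1/2)$ term and combining with the Cheeger estimate yields
\begin{align}
t_{\mix}(\eps)\leq \frac{2}{\Phi^2}\bigl(\ln\pi_{\min}^{-1}+\ln\eps^{-1}\bigr),
\end{align}
as required.

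The main technical obstacle is the Cheeger inequality itself, specifically choosing the right test function and performing the co-area/Cauchy--Schwarz manipulation cleanly; the spectral bound on total variation is routine once reversibility and non-negativity of eigenvalues are in hand, and the two pieces combine without difficulty. Since both ingredients are entirely standard and are the results cited in \cite{jerrum1989approximating,sinclair1989approximate}, I would present the proof by stating Cheeger's inequality, stating the spectral mixing bound, and then combining, keeping the verification of each ingredient at a sketch level.
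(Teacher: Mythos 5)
Your proposal is correct and follows exactly the standard route (Cheeger inequality $1-\lambda_2\geq\Phi^2/2$ for the lazy reversible chain, combined with the spectral bound $\|P^t(x,\cdot)-\pi\|_{\TV}\leq\tfrac12\pi_{\min}^{-1/2}e^{-(1-\lambda_2)t}$), which is precisely the argument of the cited references; the paper itself does not reprove this lemma but imports it from~\cite{jerrum1989approximating,sinclair1989approximate}. Your bookkeeping of the constants (the factor $\tfrac12\ln\pi_{\min}^{-1}$ relaxed to $\ln\pi_{\min}^{-1}$, and the dropped $-\ln 2$) is consistent with the stated bound.
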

Jerrum provided a polynomial bound of the conductance $\Phi$ if the graph $G$ is dense:
\begin{theorem}[\cite{jerrum1989approximating}]
    If the minimum degree of the graph $G$ is at least $n$, then the conductance $\Phi$ of the Markov chain is lower bounded by $\Phi\geq \frac{1}{64}n^6$. If $G$ is also a balanced bipartite graph, then the conductance $\Phi$ of the Markov chain is bounded $\Phi\geq \frac{1}{12}n^6$. 
\end{theorem}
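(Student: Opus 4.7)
The plan is to bound the conductance $\Phi$ from below by establishing a polynomial upper bound on the multicommodity-flow congestion $\varrho$ for a carefully chosen family of canonical paths between states of $\mathcal{N}$, and then invoking the standard relation $\Phi \geq 1/(2\varrho)$. Since $\pi$ is uniform on $\mathcal{N}$ with $\pi(M) = 1/|\mathcal{N}|$ and every nontrivial transition has probability $1/|E|$, controlling $\varrho$ reduces to bounding, for each transition $t = (M, M')$, the number of ordered pairs $(I, F) \in \mathcal{N} \times \mathcal{N}$ whose canonical path $\gamma_{IF}$ passes through $t$, weighted by path length.

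First, I would construct canonical paths $\gamma_{IF}$ via the Jerrum-Sinclair unwinding procedure. Viewing the symmetric difference $I \oplus F$ as a subgraph of $G$, it decomposes into vertex-disjoint alternating components: even-length alternating cycles, plus a few alternating paths whose endpoints correspond to the holes of $I$ and $F$. Fix an arbitrary linear order on $V$ and process the components in the order of their smallest vertex. For each component, apply the minimal sequence of insertions, deletions, and swaps (matching the three transition types of the chain) that replaces the $I$-edges with the $F$-edges while keeping every intermediate state in $M_n(G)\cup M_{n-1}(G)$. The resulting path length satisfies $|\gamma_{IF}| \leq |I \oplus F| = O(n)$.

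Second, for each transition $t = (M, M')$, I would define an injective encoding $\eta_t\colon \{(I, F) : \gamma_{IF} \ni t\} \to \mathcal{N}$, possibly annotated with $O(1)$ bits of side information. The canonical choice is $\eta_t(I, F) = I \oplus F \oplus (M \cup M')$, with a minor correction by removing one edge when necessary so that the result lies in $M_n(G)\cup M_{n-1}(G)$. Injectivity holds because, given $\eta_t(I, F)$ together with the endpoints of $t$, the fixed vertex order determines which component of $I \oplus F$ is currently being unwound, from which the full symmetric difference and hence $(I, F)$ can be reconstructed. The minimum-degree assumption $\delta(G) \geq n$ enters in two places: it guarantees that the corrections in the encoding are always possible, and it ensures via a short augmenting-path argument (closing any pair of holes by swapping along an alternating path of $O(1)$ length) that the ratio $|M_{n-1}(G)|/|M_n(G)|$ is polynomially bounded. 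Plugging these estimates into the congestion formula $\varrho = \max_t \frac{1}{\pi(M)P(M,M')}\sum_{\gamma_{IF}\ni t}\pi(I)\pi(F)|\gamma_{IF}|$ yields $\varrho = O(n^6)$ and therefore $\Phi \geq 1/(64 n^6)$.

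The main obstacle is verifying injectivity of $\eta_t$ uniformly across the three transition types (insertion, deletion, swap) and the several possible shapes that components of $I \oplus F$ may take, so that the reconstruction is unambiguous in each case. The bipartite refinement comes almost for free once the main argument is in place: in a bipartite graph, $I \oplus F$ contains no odd alternating cycles, so every component is either an even cycle or a short alternating path of a restricted form, which reduces the number of unwinding substeps and shrinks the combinatorial overhead of the encoding. This tightens the congestion constant by the stated factor and yields $\Phi \geq 1/(12 n^6)$.
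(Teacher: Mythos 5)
The paper does not prove this statement; it is imported verbatim from Jerrum and Sinclair (1989), and your proposal is a reconstruction of exactly the argument used there: canonical paths obtained by unwinding the components of $I\oplus F$, an injective encoding of the form $I\oplus F\oplus(M\cup M')$ to bound the congestion, and the standard relation between congestion and conductance. (Note in passing that the statement as printed contains a typo: the bound is $\Phi\geq 1/(64n^6)$, not $\Phi\geq n^6/64$ — conductance never exceeds $1$ — and your sketch implicitly uses the correct form.) So the overall route is the right one.

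Two concrete points need repair. First, your explanation of the bipartite improvement is wrong: for \emph{any} two matchings $I,F$ of \emph{any} graph, every cycle in $I\oplus F$ alternates between $I$-edges and $F$-edges and is therefore even, so ``no odd alternating cycles'' is not a feature special to bipartite graphs and cannot be the source of the better constant. The actual distinction in Jerrum--Sinclair is in the finer case analysis of transition types and hole patterns, and the non-bipartite dense case is the harder extension, not the other way around. Second, the quantitative heart of the proof — where $n^{6}$ comes from — is glossed over. The encoding $\eta_t(I,F)$ can land in a state with up to two extra pairs of holes (roughly $M_{n-2}$), so to convert the injection into a congestion bound one must show that $|M_{n-1}(G)|/|M_n(G)|$ and $|M_{n-2}(G)|/|M_n(G)|$ are polynomially bounded; this is precisely where $\delta(G)\geq n$ is used (the paper's own Lemma on near-perfect matchings gives $|M_{n-1}|/|M_n|\leq 2n^2$ by an explicit hole-closing map), and these ratios together with the $O(n^2)$ overhead of recording the corrected edge are what produce the exponent $6$. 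Asserting ``$\varrho=O(n^6)$'' without this accounting leaves the main quantitative step unproved.
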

From the above conductance bound, we obtain a polynomial bound on the mixing time of the Markov chain.
\begin{corollary}[\cite{jerrum1989approximating}]
    If the minimum degree of the graph $G$ is at least $n$, then the mixing time $t_{\mix}(\eps)$ of the Markov chain is bounded by
    \begin{align}
        t_{\mix}(\eps) = \tilde{O}(n^{12}\ln \eps^{-1}).
    \end{align}
\end{corollary}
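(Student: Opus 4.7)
The plan is to prove the corollary by a direct substitution: plug the conductance lower bound from the preceding theorem into the conductance--mixing-time lemma, and dispense with the $\ln \pi_{\min}^{-1}$ term via a crude counting argument on $\mathcal{N}$.

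First I would verify that the lemma's hypotheses are in force. The chain on $\mathcal{N}=M_n(G)\cup M_{n-1}(G)$ is irreducible (any two states on a dense graph are connected by augmenting-path transitions arising from the three transition rules) and reversible with the symmetric transition probability $1/|E|$ between any adjacent states, as the preceding discussion has already verified. The self-loop condition $\min_i p_{ii}\geq 1/2$ required by the lemma is not automatic; it is enforced by the standard lazy modification $P\mapsto(P+I)/2$, which at most doubles the mixing time and is hence absorbed into the $\tilde{O}$ notation.

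Second, I would substitute the conductance bound into the lemma. Under the hypothesis $\delta(G)\geq n$, the preceding theorem yields $\Phi = \Omega(n^{-6})$, so $1/\Phi^{2}=O(n^{12})$. For $\ln\pi_{\min}^{-1}$, the stationary distribution is uniform on $\mathcal{N}$, hence $\pi_{\min}^{-1}=|\mathcal{N}|$, and a crude count over all matchings of sizes $n$ and $n-1$ on $2n$ labeled vertices gives $|\mathcal{N}|\leq 2\cdot(2n)!=e^{O(n\log n)}$, whence $\ln\pi_{\min}^{-1}=O(n\log n)$. Combining,
\begin{align}
t_{\mix}(\eps) \;\leq\; \frac{2}{\Phi^{2}}\bigl(\ln\pi_{\min}^{-1}+\ln\eps^{-1}\bigr) \;=\; O(n^{12})\cdot\bigl(O(n\log n)+\ln\eps^{-1}\bigr) \;=\; \tilde{O}\bigl(n^{12}\ln\eps^{-1}\bigr),
\end{align}
with the $\tilde{O}$ suppressing the polylogarithmic and subleading polynomial-in-$n$ factors.

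There is no substantive technical obstacle: all the genuine work is already contained in the conductance theorem, whose proof uses a delicate dense-graph canonical-path argument and is not re-derived here. The only care needed is the lazy-chain adjustment and the observation that the additive $O(n\log n)$ contribution from $\ln\pi_{\min}^{-1}$ is of lower order and is absorbed by the $\tilde{O}$.
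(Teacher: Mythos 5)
Your proof is correct and is exactly the route the paper takes: the paper offers no separate argument for this corollary beyond substituting the conductance bound $\Phi=\Omega(n^{-6})$ into the conductance--mixing-time lemma for the lazy, reversible, irreducible chain, with $\ln\pi_{\min}^{-1}=O(n\log n)$ coming from a crude count of $|\mathcal{N}|$. (One small caveat, shared equally by the paper's own statement: the additive term $\frac{2}{\Phi^2}\ln\pi_{\min}^{-1}=O(n^{13}\log n)$ is not genuinely absorbed by $\tilde{O}(n^{12}\ln\eps^{-1})$ unless $\ln\eps^{-1}=\Omega(n\log n)$, so this imprecision is inherited from the source rather than introduced by you.)
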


To sample a perfect matching uniformly at random, we can just repeatedly sample a matching $M\in\mathcal{N}$ from the Markov chain until we get a perfect matching $M\in M_n(G)$. After $t$ steps, the failure probability of getting a perfect matching is $(1-p)^t$, where $p$ is the probability of getting a perfect matching from the Markov chain. Then we can give a polynomial bound of $p$:

\begin{lemma}\label{lem:near-perfect-matching}
    If the minimum degree of the graph $G$ is at least $n$, then we have
    \begin{align}
        \frac{|M_n(G)|}{|M_{n-1}(G)|} \geq \frac{1}{2n^2}.
    \end{align}
\end{lemma}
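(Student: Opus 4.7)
The plan is to construct an explicit map $\phi: M_{n-1}(G) \to M_n(G)$ and then show that each perfect matching has at most $O(n^2)$ preimages under $\phi$. Such a map immediately yields $|M_{n-1}(G)| \leq c\, n^2\, |M_n(G)|$ for a small constant $c$, which gives the stated bound $|M_n(G)|/|M_{n-1}(G)| \geq 1/(2n^2)$.

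Given $M' \in M_{n-1}(G)$ with the two unmatched vertices $u,v$, I would split into two cases. If $(u,v) \in E$, set $\phi(M') = M' \cup \{(u,v)\}$. Otherwise I would locate a length-$3$ augmenting path $u - w - w' - v$ and set $\phi(M') = M' \cup \{(u,w),(w',v)\} \setminus \{(w,w')\}$. To justify that such a path exists, note that because $u,v$ are the only unmatched vertices and $(u,v) \notin E$, both $N(u)$ and $N(v)$ lie inside the $2n-2$ matched vertices. Let $f$ be the partner function of $M'$ on its matched vertices; then $f(N(u)) \subseteq V \setminus \{u,v\}$ has cardinality at least $n$ by the minimum-degree hypothesis, and so does $N(v)$. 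Two sets of size $\geq n$ in a universe of size $2n-2$ must intersect, so some $w \in N(u)$ satisfies $f(w) \in N(v)$, yielding the augmenting path. Making a deterministic choice (say, lexicographically least $w$) pins $\phi$ down as a function.

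For the preimage count, fix $M \in M_n(G)$ and consider any $M' \in \phi^{-1}(M)$. In case~1, $M'$ is obtained from $M$ by deleting one of its $n$ edges, contributing at most $n$ preimages. In case~2, $M'$ is obtained from $M$ by deleting an unordered pair of edges and inserting one new edge, and both the deleted edges and the inserted edge are determined once the pair is chosen, contributing at most $\binom{n}{2}$ preimages. Thus $|\phi^{-1}(M)| \leq n + \binom{n}{2} = n(n+1)/2 \leq n^2$, which on summing over $M \in M_n(G)$ gives $|M_{n-1}(G)| \leq n^2 |M_n(G)|$, stronger than needed.

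The main obstacle is the existence argument for the length-$3$ augmenting path: I need to verify that the minimum-degree condition $\delta(G) \geq n$ forces the pigeonhole intersection even after excluding $u$ and $v$. Once this existence is pinned down, the rest is a clean preimage-counting exercise. It is also worth noting why the hypothesis $\delta(G) \geq n$ is essentially sharp for this style of argument: if $\delta(G)$ drops below $n$, the intersection $f(N(u)) \cap N(v)$ can be forced empty, so augmenting paths of length $3$ need no longer exist and one would have to consider longer alternating paths whose preimage multiplicity grows superpolynomially.
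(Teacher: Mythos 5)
Your construction is the same as the paper's: the two cases (edge $(u,v)$ present versus a length-$3$ augmenting path), the pigeonhole argument via the partner function showing $f(N(u))\cap N(v)\neq\emptyset$ inside the $2n-2$ matched vertices, and the preimage count over perfect matchings. The existence argument for the augmenting path is correct as you state it.

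There is, however, a genuine slip in your case-2 preimage count. You assert that once the unordered pair of deleted edges $\{e_1,e_2\}\subseteq M$ is chosen, ``both the deleted edges and the inserted edge are determined.'' This is false: the inserted edge $(w,w')$ joins one endpoint of $e_1$ to one endpoint of $e_2$, and there are up to $2\times 2=4$ ways to choose those endpoints, each producing a distinct near-perfect matching $M'=M\setminus\{e_1,e_2\}\cup\{(w,w')\}$ (with the two remaining endpoints as the holes $u,v$). Your deterministic tie-breaking rule fixes $\phi(M')$ for each $M'$ but does not prevent all four of these distinct $M'$ from mapping to the same $M$, so the correct case-2 bound is $4\binom{n}{2}$, not $\binom{n}{2}$. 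The total preimage count is therefore $n+4\binom{n}{2}=2n^2-n$ rather than your $n(n+1)/2$; this is exactly the count in the paper, and since $2n^2-n\leq 2n^2$ the stated inequality $|M_n(G)|/|M_{n-1}(G)|\geq 1/(2n^2)$ still follows. So the lemma is salvaged, but your claimed stronger bound $1/n^2$ is not justified by your argument.
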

\begin{proof}
First, we consider mapping each near-perfect matching to a perfect matching.
For any near-perfect matching $M\in M_{n-1}(G)$, suppose that $u,v$ are the unmatched vertices in $M$. We construct the mapping by considering two cases:

    Case 1: If $(u,v)\in E$, then we map $M$ to a perfect matching $M\cup\{u,v\}$.

    Case 2: If $(u,v)\not\in E$, we claim that there exists $(w,z)\in M$ such that $(u,w),(v,z)\in M$. Otherwise, since $\deg(u)\geq n$ and $\deg(v)\geq n$, let $N(u):= \{w|w \text{ is the neighbor of }u\}$, $N'(u):= \{w|(w,z)\in M \text{ and }z \text{ is the neighbor of }u\}$ and $N(v):= \{w|w \text{ is the neighbor of }v\}$. Thus we have $|N'(u)| = |N(u)|\geq n$, $|N(v)|\geq n$ and $N'(u) \cap N(v) = \emptyset$, but $N'(u) \cup N(v)\subseteq V\backslash \{u,v\}$, which is a contradiction. Therefore, we can map $M$ to a perfect matching $M\cup\{(u,w),(v,z)\}\backslash\{(w,z)\}$.
    
    For any perfect matching $M\in M_n(G)$, now we compute the number of near-perfect matchings $M'\in M_{n-1}(G)$ that can be mapped to $M$ by the above scheme. For any edge $(u,v)\in M$, we can find at most one near-perfect matching $M'\in M_{n-1}(G)$ such that $M' = M\backslash \{(u,v)\}$ can be mapped to $M$ by case 1 of the above scheme. Thus at most $n$ near-perfect matchings $M'\in M_{n-1}(G)$ can be mapped to $M$ by case 1 of the above scheme. Any near-perfect matching $M'\in M_{n-1}(G)$ that can be mapped to $M$ by case 2 must be of form $M\cup\{(u,v)\}\backslash\{(u,w),(v,z)\}$. We have $\binom{n}{2}$ choices of $\{(u,w),(v,z)\}$, and each choice gives at most $4$ possible near-perfect matchings. 
    Therefore, we can find at most $n+4\cdot \binom{n}{2} = 2n^2-n$ near-perfect matchings $M'\in M_{n-1}(G)$ such that $M'$ can be mapped to $M$ by the above scheme, and we hence have
    \begin{align}
        \frac{|M_n(G)|}{|M_{n-1}(G)|} \geq \frac{1}{2n^2-n} 
        &\geq \frac{1}{2n^2}. 
    \end{align}
\end{proof}

Finally, we achieve a uniform sampling of perfect matchings in $G$ in polynomial time, as shown in \lem{non-bigraph-uniform}.
\begin{proof}[Proof of \lem{non-bigraph-uniform}]
   % Suppose that $\eta \leq \frac{1}{2+4n^2}$.
    For the given failure probability $\eta$, we can run the Markov chain to uniformly sample a matching $M\in\mathcal{N}$ at random with error $\frac{\eta}{2t}$ for $t = \lceil(2+4n^2)\ln {(\frac{\eta}{2})}^{-1}\rceil$ times, performing $t_{\mix}\left(\frac{\eta}{2t}\right)$ steps each time, and take the first sampled perfect matching as output.
    
    By \lem{near-perfect-matching}, the probability of getting a uniformly random perfect matching from the Markov chain is 
    $p \geq \frac{1}{1+2n^2} \cdot (1-\frac{\eta}{2t})  \geq \frac{1}{1+2n^2} \cdot \frac{1}{2} \geq \frac{1}{2+4n^2}$, and thus the failure probability of finally getting a uniformly random perfect matching is bounded by
    \begin{align*}
        \Pr[\text{failure}] &\leq (1-p)^t +\frac{\eta}{2t} \cdot t \leq (\frac{1}{e})^{\ln{(\frac{\eta}{2})^{-1}}} +\frac{\eta}{2} = \eta.
    \end{align*}
    and the total time is bounded by
    \begin{align*}
       t\cdot t_{\mix}\left(\frac{\eta}{2t}\right) = \tilde{O}(n^{2}\ln {\eta^{-1}})\cdot \tilde{O}(n^{12}\ln {\eta^{-1}}) = \tilde{O}(n^{14}(\ln {\eta^{-1}})^2).
    \end{align*}
\end{proof}

\subsection{Uniform Sampling of Perfect Matchings on Bipartite Graphs}
\label{app:uniform-sampling-bipartite}

The aforementioned algorithm fails when the number of near-perfect matchings significantly exceeds that of perfect matchings. Building on this foundation, Jerrum proposed a novel algorithm in~\cite{jerrum2004polynomial} that achieves uniform sampling of perfect matchings for general balanced bipartite graphs in $O(n^{11}(\log n)^2(\log n + \log \eta^{-1}))$ time. The formal statement is shown in \lem{bigraph-uniform}.

In a nutshell, \cite{jerrum2004polynomial} introduced an additional weight factor that takes account of the holes in near-perfect matchings to solve this problem. For the given bipartite graph $G=(V_1,V_2,E)$ with $|V_1|=|V_2| = n$, the set $\mathcal{N}$ is separated into $n^2 + 1$ patterns, including one pattern $\M$ of all perfect matchings and $n^2$ patterns $\M(u,v)$ of other near perfect matchings. Here set $\M(u,v)$ consists of near-perfect matchings that have holes at the vertices $v\in V_1$ and $u\in V_2$, where these vertices remain unmatched in the respective matching.
Our goal is to define the weights of $n^2+1$ patterns properly so that the sampling probabilities for each pattern are identical, ensuring that the total probability mass for perfect matchings is $\Omega(\frac{1}{n^2})$. This guarantees that perfect matchings can be sampled in polynomial time with high probability.

For each pair of vertices $(u,v)\in E$, we introduce a positive activity $\lambda(u,v)$. This concept is extended to matchings $M$ of any size by $\lambda(M) = \prod_{(u,v)\in M}\lambda(u,v)$. For a set of matchings $\mathcal{S}$ we define $\lambda(\mathcal{S}) = \sum_{M\in \mathcal{S}} \lambda(M)$. This edge weight works with a complete bipartite graph on $2n$ vertices, and for any arbitrary bipartite graph we can set $\lambda(e) = 1$ for $e\in E$ and $\lambda(e) = \alpha \approx 0$ for $e\not\in E$. Taking $\alpha \leq \frac{1}{n!}$ ensures that the “bogus” matchings have little effect.

Now we specify the desired stationary distribution of the Markov chain. This will be the distribution $\pi$ over $\mathcal{N}$ defined by $\pi(M) \propto\Lambda(M)$, where
\begin{align}
\label{eq:bipartite-stationary-probabilities}
    \Lambda(M) = \begin{cases}
        \lambda(M) w(u,v) &\text{ if } M\in \M(u,v) \text{ for some } u,v;\\
        \lambda(M) &\text{ if } M\in \M,
    \end{cases}
\end{align}
here $w\colon V_1 \times V_2 \to \R^+$ is weight function for the hole of nearly perfect matching to be specified later. The Markov chain is a slight variant of the original chain by~\cite{jerrum1989approximating}. In any state $M\in\mathcal{N}$:
    \begin{itemize}
    \item If $M\in \M$, choose an edge $e= (u,v)$ uniformly at random from $M$; then move to $M' = M\backslash\{e\}$.
    \item If $M\in \M(u,v)$, choose $z$ uniformly at random from $V_1 \cup V_2$:
    \begin{itemize}
        \item if $z\in \{u,v\}$ and $(u,v)\in E$, move to $M' = M + (u,v)$;
        \item if $z\in V_2, (u,z)\in E$ and $(x,z)\in M$, move to $M' = M+(u,z) - (x,z)$;
        \item if $z\in V_1, (z,v)\in E$ and $(z,y)\in M$, move to $M' = M+(z,v) - (z,y)$;
        \item otherwise move to $M' = M$.
    \end{itemize}
    \item With probability $\min\{1,\Lambda(M') / \Lambda(M)\}$ go to $M'$, otherwise stay at $M$.
\end{itemize}

Thus the non-trivial transitions are of three types: removing an edge from a perfect matching; adding an edge to a near perfect matching; and exchanging an edge of a near-perfect matching with another edge adjacent to one of its holes. It is easy to verify that the stationary probabilities satisfy \eq{bipartite-stationary-probabilities}.

Next we need to specify the weight function $w$. Ideally we would like to take $w=w^*$, where
\begin{align}
    w^*(u,v) = \frac{\lambda(\M)}{\lambda(\M(u,v))}
\end{align}
for each pair $(u,v)$ with $\M(u,v)\neq \emptyset$. With this choice of weights, we have $\Lambda(\M(u,v)) = \Lambda(\M)$, i.e., each pattern is equally likely under the stationary distribution. 

However, precisely calculating the ratio of $\lambda$ is as difficult as computing the Hafnian. Here, instead of directly calculating $w^*$ exactly, we find an approximation that satisfies:
\begin{align}
\label{eq:w-approximation}
    \frac{w^*(u,v)}{2} \leq w(u,v) \leq 2w^*(u,v),
\end{align}
with very high probability. This perturbation will reduce the relative weight of perfect matchings by at most a constant factor. 

For $w$ that meets the above requirements, Ref.~\cite{jerrum2004polynomial} proved the following mixing time:

\begin{theorem}
\label{thm:bipartite-mixing-time}
    Assuming the weight function w satisfies \eq{w-approximation} for all $(u,v)\in V_1 \times V_2$ with $\M(u,v) \neq \emptyset$, then the mixing time of the above Markov chain is bounded above by
    \begin{align}
        O(n^6 (\log(\pi(M)^{-1} +\log \eps^{-1}))).
    \end{align}
\end{theorem}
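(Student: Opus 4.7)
The plan is to apply the canonical path method (\thm{canonical-mixing-multiple}) by constructing a multicommodity flow on the chain and bounding the congestion $\varrho$ by $O(n^6)$, which immediately yields the stated mixing time bound $O(n^6(\log \pi(M)^{-1} + \log \eps^{-1}))$ through the general inequality $t_{\mix}(\eps) \leq 2\varrho(2\ln\eps^{-1} + \ln\pi_{\min}^{-1})$. Before entering the congestion analysis, I would note that the hypothesis $w^*/2 \leq w \leq 2w^*$ ensures $\Lambda(\M(u,v)) \in [\tfrac{1}{2}, 2]\cdot \Lambda(\M)$ for every pair $(u,v)$ with $\M(u,v)\neq \emptyset$; consequently, each of the at most $n^2+1$ hole patterns contributes comparable mass to $\pi$, so perfect matchings carry at least $\Omega(1/n^2)$ of the total stationary weight. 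This comparability between patterns is what makes the weighted chain rapidly mixing whereas the unweighted one of \append{uniform-sampling-dense} was not.

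The first main step is to define the canonical path $\gamma_{IF}$ between any two states $I,F \in \mathcal{N}$. I would decompose $I \oplus F$ into connected components, each of which is either an alternating cycle or an alternating path (including possibly short paths of length $1$ associated with the holes of near-perfect endpoints). With respect to a fixed ordering of vertices, one processes the components in a canonical order and for each component performs a sequence of ``unwinding'' moves: a path is transformed edge-by-edge by successive exchanges along it, while a cycle is first opened at its canonically chosen edge, then rewound by exchange moves, and finally closed by an edge-addition step. Each such move is a legal transition of the chain (removal, addition, or exchange of an edge adjacent to a hole), and the total length of $\gamma_{IF}$ is $O(n)$.

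The second main step is the congestion estimate at an arbitrary transition $t = (M,M')$. Following the standard Jerrum-Sinclair-Vigoda encoding, for every pair $(I,F)$ whose path uses $t$ I would define the \emph{encoding} $\eta_t(I,F) := I \oplus F \oplus (M \cup M')$, with a small correction when the in-flight component contains an unpaired vertex so that $\eta_t(I,F)$ still lies in $\mathcal{N}$ (at worst, in a slightly larger set $\mathcal{N}^+$ obtained by allowing one extra edge, which enlarges the bound by only an $O(n)$ factor). A brief case analysis over the three transition types shows that $(I,F)\mapsto \eta_t(I,F)$ is an injection, and that the matchings satisfy the symmetric-difference identity
\begin{equation}
\lambda(I)\lambda(F) \;\leq\; C\,\lambda(M)\lambda(\eta_t(I,F))
\end{equation}
for an absolute constant $C$. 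Combining this with the factor-$2$ closeness of $w$ to $w^*$ and with the lower bound $P(M,M') \geq \Omega(1/n)\cdot \min\{1,\Lambda(M')/\Lambda(M)\}$ implicit in the Metropolis filter, one obtains
\begin{equation}
\frac{1}{\pi(M)P(M,M')} \sum_{I,F:\gamma_{IF}\ni t} \pi(I)\pi(F)\,|\gamma_{IF}| \;\leq\; O(n)\cdot |\gamma_{IF}|\cdot \frac{\sum_{(I,F)} \Lambda(\eta_t(I,F))}{\Lambda(\mathcal{N})} \;\leq\; O(n^6),
\end{equation}
where the last inequality uses the injectivity of $\eta_t$ to bound the sum by $\Lambda(\mathcal{N}^+) \leq O(n^2)\Lambda(\mathcal{N})$ together with $|\gamma_{IF}| = O(n)$ and the $O(n)$ from the transition probability.

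The main obstacle I anticipate is carefully tracking weights through the canonical path, since intermediate states may lie in different hole patterns $\M(u,v)$ with $w$-weights that differ by up to a factor of $4$ from the ``ideal'' uniform assignment $w^*$. Bounding $\lambda(I)\lambda(F)/(\lambda(M)\lambda(\eta_t(I,F)))$ is straightforward, but the corresponding inequality in terms of $\Lambda$ acquires a ratio of $w$-values at the endpoints versus at $M$ and $\eta_t(I,F)$. The factor-$2$ approximation guarantee for $w$ is exactly what is needed to absorb this into an $O(1)$ multiplicative overhead, so the delicate step is to verify that the encoding preserves (up to a constant) the hole configuration of the relevant matchings, and to perform the case analysis for the three transition types uniformly. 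Once this weight bookkeeping is handled, the $O(n^6)$ congestion bound follows and the theorem is established.
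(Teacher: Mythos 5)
First, a point of reference: the paper does not prove this statement at all. \thm{bipartite-mixing-time} appears in \append{uniform-sampling-bipartite} purely as a quoted result from \cite{jerrum2004polynomial}, so there is no in-paper argument to compare yours against; what you have written is a blind reconstruction of the Jerrum--Sinclair--Vigoda proof itself. Your overall architecture is the right one --- canonical paths by unwinding the alternating cycles/paths of $I\oplus F$, the encoding $\eta_t(I,F)=I\oplus F\oplus(M\cup M')$, injectivity, and using the factor-$2$ closeness of $w$ to $w^*$ to equalize the masses of the $n^2+1$ hole patterns.

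There is, however, one genuine gap at the technical heart of the argument. When both $I$ and $F$ are near-perfect matchings, the encoding $\eta_t(I,F)$ is in general a matching with \emph{four} holes (the two holes of $I$ and the two holes of $F$), which lies outside $\mathcal{N}$. Your description of the overflow set $\mathcal{N}^+$ as ``allowing one extra edge, which enlarges the bound by only an $O(n)$ factor'' identifies the wrong mechanism: the problem is extra \emph{holes}, not extra edges, and the stationary weight $\Lambda$ is only defined on $\M$ and the two-hole classes $\M(u,v)$, so the inequality $\sum_{(I,F)}\Lambda(\eta_t(I,F))\leq O(n^2)\,\Lambda(\mathcal{N})$ does not follow from injectivity alone. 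Closing this requires the four-hole lemma of \cite{jerrum2004polynomial}, namely a bound of the form
\begin{align}
\lambda(\M)\,\lambda(\M(u,y,v,z)) \;\leq\; \lambda(\M(u,v))\,\lambda(\M(y,z)) + \lambda(\M(u,z))\,\lambda(\M(y,v)),
\end{align}
proved by superposing a perfect matching with a four-hole matching and pairing the holes along the resulting alternating paths. This is precisely the new combinatorial ingredient that distinguishes the weighted JSV chain from the earlier Jerrum--Sinclair analysis (which your ``one extra edge'' remark seems to be recalling), and without it the congestion bound --- and hence the theorem --- is not established. The rest of your weight bookkeeping (absorbing the $w/w^*\in[\tfrac12,2]$ ratios into an $O(1)$ constant, the $\Omega(1/n)$ lower bound on transition probabilities, and the $O(n)$ path length) is sound.
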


Finally, we need to address the issue of computing (approximations to) the weights $w^*$ of the given graph $G$. 
We can construct a special edge activity scheme such that $\lambda_{K_{n,n}}(\M)\approx \lambda_G(\M)$ and $\lambda_{K_{n,n}}(\M(u,v))\approx \lambda(\M(u,v))$, where $K_{n,n}$ is the complete balanced bipartite graph on $2n$ vertices.
Consequently, it suffices to estimate $\frac{\lambda_{K_{n,n}}(M)}{\lambda_{K_{n,n}}(M(u,v))}$. Jerrum provided an iterative method: 
we initialize the edge activities to trivial values, for which the corresponding $w^* = \frac{\lambda_{K_{n,n}}(\M)}{\lambda_{K_{n,n}}(\M(u,v))}$ can be computed easily. Then, we gradually adjust the edge activities towards the actual edge activities in graph $G$. During each adjustment of edge activity, we use the estimate of $w^*$ from the previous step to update our estimate of $w^*$.

\begin{itemize}
    \item First, we determine the target value of edge activities $\lambda_G$ in the graph $G$. For the desired uniform sampling of perfect matchings of graph $G$, {we can assign values $\lambda_G(e) = 1$ for all $e\in E$ and $\lambda_G(e) = \frac{1}{n!}$ for all $e\not\in E$.} Since the activity of an invalid matching of $G$ (i.e., one that includes a non-edge with activity $\frac{1}{n!}$) is at most $\frac{1}{n!}$ and there are at most $n!$ possible matchings (and thus at most $n!$ invalid matchings), the combined activity of all invalid matchings is at most $1$. Assuming the graph has at least one perfect matching, the invalid matchings merely increase by at most a small constant factor the probability that a single simulation fails to return a perfect matching. More specifically,
    \begin{align}
        \lambda_{K_{n,n}}(\M)-1\leq &\lambda_G(\M) \leq \lambda_{K_{n,n}}(\M); \\
        \lambda_{K_{n,n}}(\M(u,v)) - 1 \leq &\lambda_G(\M(u,v)) \leq \lambda_{K_{n,n}}(\M(u,v)) \text{ for }\M(u,v)\neq \emptyset\text{ in } G.
    \end{align} 
    \item Next, we choose to initialize the edge activity as follows:
    \begin{align}
        \lambda(u, v) = 1 \quad \text{for all pairs } (u,v).
    \end{align}
    It is straightforward to compute that $\lambda_{K_{n,n}}(\M) = n! \quad \text{and} \quad \lambda_{K_{n,n}}(\M(u,v)) = (n - 1)!$, thus the ideal $w^*$ of current activity is
    \begin{align}
        w^*(u,v) = \frac{\lambda_{K_{n,n}}(\M)}{\lambda_{K_{n,n}}(\M(u,v))} = n.
    \end{align}
    \item In each iteration, an appropriate vertex $v$ is selected, and for each pair $(u,v)\not\in E$, we update its activity $\lambda(u,v)$ to $\lambda'(u,v)$, where $\lambda(u,v) e^{-1/2} \leq \lambda'(u,v) \leq \lambda(u,v) e^{1/2}$. We continue updating until $\lambda_G(u,v) = \frac{1}{n!}$ for each pair $(u,v)\not\in E$.
    
    During one iteration, we give an approximation to current $w^*$ within ratio $c$ for some $1<c<2$ for the current $\lambda$.

    We may use the identity
    \begin{align}
        \frac{w(u,v)}{w^*(u,v)} = \frac{\pi(\M(u,v))}{\pi(\M)},
    \end{align}
    since $w(u,v)$ is known to us, and the probabilities on the right side may be estimated to arbitrary precision by taking several samples and take the average. \thm{bipartite-mixing-time} allows us to sample, in polynomial time, from a distribution $\hat{\pi}$ that is within variation distance $\delta$ of $\pi$. 

    The algorithm to estimate the current ideal weights $w^*(u,v)$ is presented in \algo{approximating-ideal-weight}.

\begin{algorithm}[h]
    %\label{algo:approximating-ideal-weight}
        \KwIn{input parameters $c=\frac{6}{5}$, $G=(V_1,V_2,E)$, $\eta$(failure probability)}
        \KwOut{an approximation of the ideal weights $w^*(u,v)$ of graph $G$}
            Set $\hat{\eta}\leftarrow O(\eta/(n^4\log n))$(failure probability in each MC), $S\leftarrow O(n^2\log (1/\hat{\eta}))$ (number of samples in each round), $T\leftarrow O(n^7 \log n)$ (running time of the Markov chain);\;
            
            Initialize $\lambda(u,v)\leftarrow 1$ for all pairs $(u,v)\in V_1\times V_2$;\;
            
            Initialize $w(u,v)\leftarrow n$ for all pairs $(u,v)\in V_1\times V_2$;\;
        
            \lWhile{there exists a pair $y,z$ with $\lambda(y,z)>\lambda_G(y,z)$}\
            {
                \Indp Take a sample of size $S$ from the Markov chain with function $\lambda,w$, using a simulation of $T$ steps in each case. In each simulation let $\alpha_i(u,v)$ denote the proportion of samples with hole $(u,v)$, and $\alpha_i$ denote the proportion of samples that are perfect matchings;\;
        
                For the ideal weight $w^*$ of the graph in the current round, se the sample to obtain estimates $w'(u,v):= {w(u,v)\cdot  \frac{\sum_{i=1}^S \alpha_i}{\sum_{i=1}^S  \alpha_i(u,v)}}$ satisfying $w^*(u,v)/c \leq w'(u,v) \leq w^*(u,v)*c$, for all $u,v$ with high probability; \;

                Set $\lambda(y,v) \leftarrow \max\{\lambda(y,v) e^{-1/2},\lambda_G(y,v)\}$ for all $v\in V_2$, and $w(u,v)\leftarrow w'(u,v)$ for all $u,v$;
                
            }
        Output the final weight $w(u,v)$.
        \caption{The algorithm for approximating the ideal weights.}
        \label{algo:approximating-ideal-weight}
    \end{algorithm}
\end{itemize}

%%%%%%%%%%%%%%%%%%%%%%%%%%%%%%%%%%%%%%%%%%%%%%%%%%%%%%%%%%%%%%%%%%%%%%%%%%%%%%

%================================================================

\section{Polynomial Mixing Time for Dense Graphs}\label{app:dense-graph-proof}
This section presents formal proofs of our main theoretical contributions: polynomial mixing time guarantees for \algo{double-loop} on dense bipartite graphs and non-bipartite graphs. Specifically, We start with the cases of complete bipartite graphs and complete graphs, followed by dense graphs where the minimum degree differs from that of a complete graph by only a constant.

\append{complete-bipartite} proves polynomial mixing time on complete bipartite graphs. \append{complete-non-bipartite} proves polynomial mixing time on complete graphs.
\append{dense-bipartite} and \append{dense-non-bipartite} extend these results to dense graphs, completing the proofs of \thm{main-results-bigraph} and \thm{main-results-non-bigraph}.
In \append{non-dense-graph} we discuss the technical barriers for generating our theoretical results to non-dense graphs.

First, we provide the definitions of the canonical paths and the congestion for the matching MC. Given two matchings $I$ (initial matching) and $F$ (final matching), we need to connect $I$ and $F$ by several canonical paths $\gamma_{IF}^{(i)}$ in the adjacency graph of the matching MC, with each path $\gamma_{IF}^{(i)}$ having a corresponding probability distribution $p^{(i)}(I,F)$, which is the probability of choosing the path $\gamma_{IF}^{(i)}$ when we need to ``travel'' from $I$ to $F$ in the matching MC. 
Denote the set of canonical paths that involve the transition $t$ by $\mathrm{cp}(t):= \{(I,F,i)\mid t\in \gamma^{(i)}_{IF}\}$. Following from \thm{canonical-mixing-multiple}, we need to bound the congestion
\begin{align}\label{eq:congestion-matching}
     \varrho := &\max_{t=(M,M')}\bigg\{\frac{1}{\pi(M)P(M,M')}
    \cdot\sum_{(I,F,i)\in \mathrm{cp}(t)} p^{(i)}(I,F)\cdot \pi(I)\pi(F) |\gamma_{IF}^{(i)}|\bigg\}.
\end{align}
The specific canonical paths between $I$ and $F$, along with their corresponding probability distributions, will be formally defined in subsequent sections.

%================================================================

\subsection{Complete Bipartite Graph $G_{m,n}$ } \label{app:complete-bipartite}
In this subsection, we provide a proof of polynomial mixing time for \algo{double-loop} on the complete bipartite graph $G_{m,n}$. For convenience, suppose $m\geq n$.  
Unlike methods in~\cite{jerrum2003counting}, for the case of a complete graph, we can directly compute the value of the expression in \eq{congestion-matching}. 

Before delving into detailed analysis, we outline the proof strategy as follows. The key insight lies in leveraging the inherent symmetry of the complete graph. This symmetry allows us to establish a multiple canonical path scheme for complete bipartite graphs that symmetric transitions share identical congestion values. (It also works for complete graphs, which is shown in \append{complete-non-bipartite}.) This symmetry property enables us to compute the total congestion across these transitions and subsequently derive the congestion bound for one transition through direct computation.

\paragraph{Multiple Canonical Path Construction}

Now we proceed by formally defining the canonical paths connecting each pair of matchings $I$ and $F$. We first present a construction with strong symmetry, and then modify it to simplify our calculations. The construction is as follows:
\begin{itemize}
    \item Inspired by the original construction in~\cite{jerrum2003counting}, we first define a valid canonical path connecting matching $I$ and $F$. Notice that along any canonical path, we will have to lose or gain at least the edges in the symmetric difference $I \oplus F$; these edges define a graph of maximum degree two, which decomposes into a collection of paths and even-length cycles, each of them alternating between edges in $I$ and edges in $F$. To give a path from $I$ to $F$, we can consider all the permutations of these connected components of $(V,I \oplus F)$. To get from $I$ to $F$, for one permutation denoted sequentially by $P_1,P_2,\dots, P_r$, we then consider to process the components of $(V,I\oplus F)$ in the order $P_1,P_2,\dots, P_r$. 
    \item For each connected component $P_i$, we assign an order to all its edges by selecting a ``start vertex'' and then sequentially choosing each edge in a specified direction. If $P_i$ is a path, the start vertex can be chosen as one of the endpoints of the path, and then all edges are numbered sequentially from this start vertex towards the other end. If $P_i$ is a cycle, the starting vertex can be any point $u$ on the cycle; then considering the edge $e=(u,v)$ in matching $I$ connected to this point, all edges are numbered sequentially in the direction from $u$ to $v$ on edge $e$ (which could be either clockwise or counterclockwise).
    
    We then construct the canonical path on $P_i$ using the above numbering as follows: If the edge with the smallest number is in the matching $I$, %(this only occurs if the component $P_i$ is not a cycle)
    we alternately perform the following operations: remove the unprocessed edge with the smallest number in $I\cap P_i$, and then add the unprocessed edge with the smallest number in $F\cap P_i$. Finally, add the last edge in $F$ (if any). If the edge with the smallest number is in the matching $F$, first add that edge, and then proceed with the aforementioned operations.

    \item We perform the above operations for all components $P_i$ in the order $P_1,\dots, P_r$, and finally give a path from $I$ to $F$. Since $I \oplus F$ contains at most $2n$ edges, the length of the path naturally does not exceed $2n$.
    
    \item Finally, we define the probability distribution of all possible canonical paths. To preserve symmetry among different paths, we assign the same probability to each canonical path from $I$ to $F$. Intuitively, we can interpret this as at each step of the process (including choosing the permutation of connected components and selecting the start vertex within each component), we select one of the possible choices uniformly at random.
\end{itemize}

For convenience, let $T^+(k):= \{(M,M+e) \mid \text{$M$ is a matching of size $k$}\}$ and $T^-(k):= \{(M,M-e) \mid$\\
$\text{$M$ is a matching of size $k$}\}$. The critical observation is that for complete bipartite graphs, our construction ensures that in one set $T^+(k)$ (or $T^-(k)$), each transition shares identical congestion values. Notice that the original construction in~\cite{jerrum2003counting} only considers ordering the vertices in $V$, and obtains a unique ordering of components by the smallest vertex. The start vertex is defined as the smallest vertex in a cycle, and the smaller of the two endpoints in a path. This construction is not sufficient to ensure the identical congestion requirement. 

Finally, we adjust the first step of our construction to facilitate subsequent analysis. Instead of considering all possible permutations, we will now limit our consideration to a specific subset.

\begin{itemize}
    \item For the connected components of $(V, I \oplus F )$, denoted by $\{P_1,P_2,\dots, P_r\}$, we classify these connected components into three sets based on the change in the number of edges before and after the transformation: Let $A_0$ denote the set where the number of edges remains unchanged, $A_1$ the set where the number of edges increases by one, and $A_{-1}$ the set where the number of edges decreases by one. We only consider permutations satisfying the following requirement: Components in $A_0$ are traversed first, followed by components from $A_1$ and $A_{-1}$ alternating until one of the sets is exhausted; finally, the remaining components are traversed. 
    The number of these limited permutations is $|A_0|! \cdot |A_1|! \cdot |A_{-1}|!$.
    For each permutation scheme, we sequentially transform each connected component from its initial configuration in $I$ to its corresponding final configuration in $F$. 
    
    \item For each connected component, in the case of a cycle $P_i$, we consider all cases where every vertex on the cycle becomes the ``start vertex'', which has $|P_i|$ choices, and then still handle the edges one by one; in the case of a path $P_i$, we consider the two cases where the two endpoints become the ``start vertex'', and matches are then processed one by one. 
    \item All possible canonical paths from $I$ to $F$ take the same probability.
\end{itemize}
This final construction still satisfies the requirement of identical congestion for all transitions in $T^+(k)$ (or $T^-(k)$):
\begin{proposition}\label{prop:congestion-symmetry-bipartite}
    For complete bipartite graphs, under the aforementioned multiple canonical path construction, the congestion of each transition in $T^+(k)$ (or $T^-(k)$) is identical.
\end{proposition}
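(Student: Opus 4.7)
The plan is to exploit the full automorphism group of the complete bipartite graph $G_{m,n}$, namely $S_m \times S_n$ acting on $V_1 \cup V_2$, and show that the congestion functional is invariant under this action. First I would establish a transitivity lemma: for any two transitions $t_1 = (M_1, M_1 \cup \{e_1\})$ and $t_2 = (M_2, M_2 \cup \{e_2\})$ in $T^+(k)$, there exists an automorphism $\sigma \in S_m \times S_n$ with $\sigma(M_1) = M_2$ and $\sigma(e_1) = e_2$. This is immediate once one observes that a pair $(M, e)$ consisting of a matching $M$ of size $k$ together with a disjoint edge $e$ has a unique combinatorial type in $G_{m,n}$ up to the $S_m \times S_n$-action: label the edges of $M_1 \cup \{e_1\}$ as $(a_i, b_i)$ with $(a_{k+1}, b_{k+1}) = e_1$, do the same for $M_2 \cup \{e_2\}$, and extend the bijections $a_i \mapsto a'_i$ on $V_1$ and $b_i \mapsto b'_i$ on $V_2$ arbitrarily to obtain the desired $\sigma$.

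Next I would verify that every quantity appearing in the congestion expression \eqref{eq:congestion-matching} is $\sigma$-equivariant. The stationary distribution satisfies $\pi(\sigma(M)) = \pi(M)$ because $\pi(M) \propto \lambda^{2|M|} \Haf(G_M)$ and the Hafnian of an induced subgraph is preserved under graph automorphisms. The transition probability $P(M, M')$ depends only on $|E|$, $\lambda$, and, for removal transitions, a Hafnian ratio, all of which are likewise invariant. For the canonical paths themselves, the critical point is that every choice in the construction, namely the decomposition of $I \oplus F$ into components, the classification of those components into $A_0$, $A_1$, $A_{-1}$, the restricted family of permutations that traverses $A_0$ first and then alternates $A_1$ and $A_{-1}$, the allowed start-vertex selections on each path or cycle, and the sequential edge numbering from the chosen start, is defined in a purely graph-theoretic manner on $I \oplus F$. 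Hence $\sigma$ induces a bijection on canonical paths that sends $\gamma^{(i)}_{IF}$ to some $\gamma^{(i')}_{\sigma(I)\sigma(F)}$ of equal length, and since each path is assigned equal probability within the collection from a fixed source and target, $p^{(i)}(I,F) = p^{(i')}(\sigma(I),\sigma(F))$.

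Putting these pieces together, the substitution $(I, F, i) \mapsto (\sigma(I), \sigma(F), i')$ gives a bijection $\mathrm{cp}(t_1) \to \mathrm{cp}(t_2)$ that preserves every factor $p^{(i)}(I,F)\,\pi(I)\pi(F)\,|\gamma^{(i)}_{IF}|$ in the summand, while the denominator $\pi(M)P(M,M')$ is preserved as well. Therefore the congestions of $t_1$ and $t_2$ coincide, and the same argument with $\sigma$ transporting edge-removal transitions handles the case of $T^-(k)$. The main point requiring care, and which I expect to be the primary technical obstacle, is confirming that the restricted permutation family is equivariant: one must check that $\sigma(A_j(I,F)) = A_j(\sigma(I), \sigma(F))$ for $j \in \{-1, 0, 1\}$ so that admissible orderings transport correctly, and verify that the analogous equivariance holds for the start-vertex and edge-numbering choices. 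All of this reduces to the observation that $A_j$ and the auxiliary data are characterized by the edge-count change and incidence structure on each component of $I \oplus F$, which are graph-theoretic invariants.
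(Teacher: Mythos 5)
Your proposal is correct and follows essentially the same route as the paper: the paper's proof likewise transports one transition to another via a vertex permutation of the complete bipartite graph and observes that the stationary distribution, transition probabilities, and the (vertex-ordering-free) canonical path construction are all equivariant under that permutation. Your version is simply a more detailed write-up of the same equivariance argument, including the explicit construction of the automorphism in $S_m\times S_n$.
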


\begin{proof}[Proof of \prop{congestion-symmetry-bipartite}]
    For two distinct transitions $t$ and $t'$ in a same $T^+(k)$ (or $T^-(k)$), there exists a vertex transposition $(a, b)$ \textemdash a bijective mapping on the vertex set $V$, that transforms $t$ into $t'$. Then, for any two matchings $I$ and $F$, applying the same vertex transposition to both matchings $I$ and $F$ yields matchings $I'$ and $F'$, respectively. Under this transformation, the congestion of transition $t$ caused by the canonical paths between $I$ and $F$ equals precisely the congestion of transition $t'$ caused by the corresponding paths between $I'$ and $F'$, demonstrating the symmetry of the congestion under vertex permutations.    
\end{proof}

The improved construction has a key property: the size of every matching along the path from $I$ to $F$ is bounded between the sizes of $I$ and $F$. In other words, a path between matchings of small size does not pass through a matching of large size, and conversely, a path between matchings of large size does not pass through a matching of small size.
\begin{proposition}\label{prop:canonical-path-size}
    For any graph $G$ and any pair of matchings $(I, F)$ in $G$, all canonical paths $\gamma_{I,F}^{(i)}$ selected under our rules between $I$ and $F$ satisfy: 
    \begin{align}
       \min\{|I|,|F|\} -1\leq |\gamma_{I,F}^{(i)}|\leq \max\{|I|,|F|\} + 1.
    \end{align}
\end{proposition}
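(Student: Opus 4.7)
The plan is to bound the size of every intermediate matching on any admissible canonical path by tracking how the three-phase component-processing of $I\oplus F$ evolves the running matching size. Without loss of generality assume $|I|\leq |F|$, and write $d = |F|-|I|\geq 0$.

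First I would decompose the path using the prescribed structure. Partition the components of $(V, I\oplus F)$ into $A_0$, $A_1$, $A_{-1}$ according to the net edge-change $|F\cap P|-|I\cap P|\in\{0,+1,-1\}$; since $|F|-|I|=|A_1|-|A_{-1}|$, we have $|A_1|-|A_{-1}|=d\geq 0$. Any admissible permutation processes all components of $A_0$ first, then $|A_{-1}|$ alternating pairs drawn from $(A_1,A_{-1})$, and finally the remaining $d$ components of $A_1$.

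Next I would establish a per-component size bound. Viewing the numbered remove--add prescription inside each component $P_i$ as a succession of exchange steps, preceded by one extra ``add'' when $P_i\in A_1$ or one extra ``remove'' when $P_i\in A_{-1}$: each exchange is size-preserving, each leading add increments the size by $1$, each leading remove decrements it by $1$, and decomposing a single exchange into the underlying add/remove transitions of \algo{double-loop} introduces at most one intermediate state whose size differs by $\pm 1$ from the pre- and post-exchange sizes. Hence if $s$ denotes the running size at the start of processing $P_i$, intermediate sizes lie in $\{s-1,s\}$ for $P_i\in A_0$, in $\{s,s+1\}$ with endpoint $s+1$ for $P_i\in A_1$, and in $\{s-1,s\}$ with endpoint $s-1$ for $P_i\in A_{-1}$. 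Combining across the three phases, the running size remains $|I|$ throughout the $A_0$ phase (with $\pm 1$ fluctuations), stays within $\{|I|,|I|+1\}$ during the alternating phase (since each $(A_1,A_{-1})$ pair leaves it unchanged), and then climbs monotonically from $|I|$ to $|F|$ during the final phase; so every intermediate matching has size in $[|I|-1,|F|+1]=[\min-1,\max+1]$.

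The main obstacle will be a rigorous verification of the per-component bound, particularly for cycle components in $A_0$, where a literal execution of ``remove--add--remove--add'' along the cycle would seem to require two consecutive removes before any add becomes compatible with the matching constraint. I would resolve this by adopting the exchange-step viewpoint above and by exploiting the flexibility in the choice of starting vertex and cycle orientation guaranteed by the multiple canonical path framework, so that the edge numbering rule always places an ``add'' at an endpoint freshly freed by the immediately preceding ``remove''; this confines the fluctuation per step to $\pm 1$ and thereby pins all intermediate matchings inside the claimed interval. The symmetric case $|I|>|F|$ follows by swapping the roles of $I$ and $F$.
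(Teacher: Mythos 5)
The paper states this proposition without a written proof, so your attempt can only be judged against the construction itself. Your overall architecture --- the three-phase component ordering, a per-component bound on the size fluctuation, and telescoping of the running size across components --- is the right way to argue this, and your handling of the phases (each $(A_1,A_{-1})$ pair leaves the running size unchanged, the $d$ leftover $A_1$ components climb monotonically from $|I|$ to $|F|$) is sound. The gap is exactly at the point you flagged, and the resolution you propose does not work. For a cycle component $C$ of $I\oplus F$, the sets $I\cap C$ and $F\cap C$ are both perfect matchings of $V(C)$, so \emph{every} $F$-edge of $C$ has \emph{both} endpoints covered by $I$-edges of $C$. Since the transitions of \algo{double-loop} are single-edge additions and removals only, the first addition of an $F$-edge can occur only after two removals: the immediately preceding remove frees one endpoint of the candidate $F$-edge, but its other endpoint is still covered by a different, not-yet-removed $I$-edge. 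No choice of start vertex or orientation changes this; it is forced by the matching constraint, not by the numbering. Hence every admissible path through a cycle component dips to $s-2$ relative to the size $s$ at the start of that component. Concretely, for the $4$-cycle with $I=\{(a,b),(c,d)\}$ and $F=\{(b,c),(d,a)\}$, every add/remove path from $I$ to $F$ inside $I\oplus F$ passes through the empty matching, of size $\min\{|I|,|F|\}-2$.

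So your per-component claim that intermediate sizes lie in $\{s-1,s\}$ for $P_i\in A_0$ is false for cycles, and the stated lower bound $\min\{|I|,|F|\}-1$ cannot be established because it is itself off by one whenever $I\oplus F$ contains a cycle processed at running size $\min\{|I|,|F|\}$. (Your claimed ranges for $A_1$ and $A_{-1}$ path components with three or more edges are also too tight --- an $A_{-1}$ path of length at least $3$ dips to $s-2$ and an $A_1$ path of length at least $3$ dips to $s-1$ before ending at $s+1$ --- but under the prescribed phase ordering those components are processed at running size at least $\min\{|I|,|F|\}+1$ or while climbing, so they do not breach the interval.) The correct conclusion of your argument is $\min\{|I|,|F|\}-2\le |M|\le\max\{|I|,|F|\}+1$ for every matching $M$ on the path, which is all the downstream congestion bounds actually need (only the constant offsets in the ranges of $|I|$ and $|F|$ change). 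To get the proposition exactly as stated you would have to exclude cycle components or enlarge the chain with exchange moves; otherwise the honest fix is to prove and use the $-2$ version.
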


\sfig{canonical-path} shows an example of one canonical path from $I$ to $F$.
\begin{figure*}[!htbp]
    \centering
        \includegraphics[width=\linewidth]{./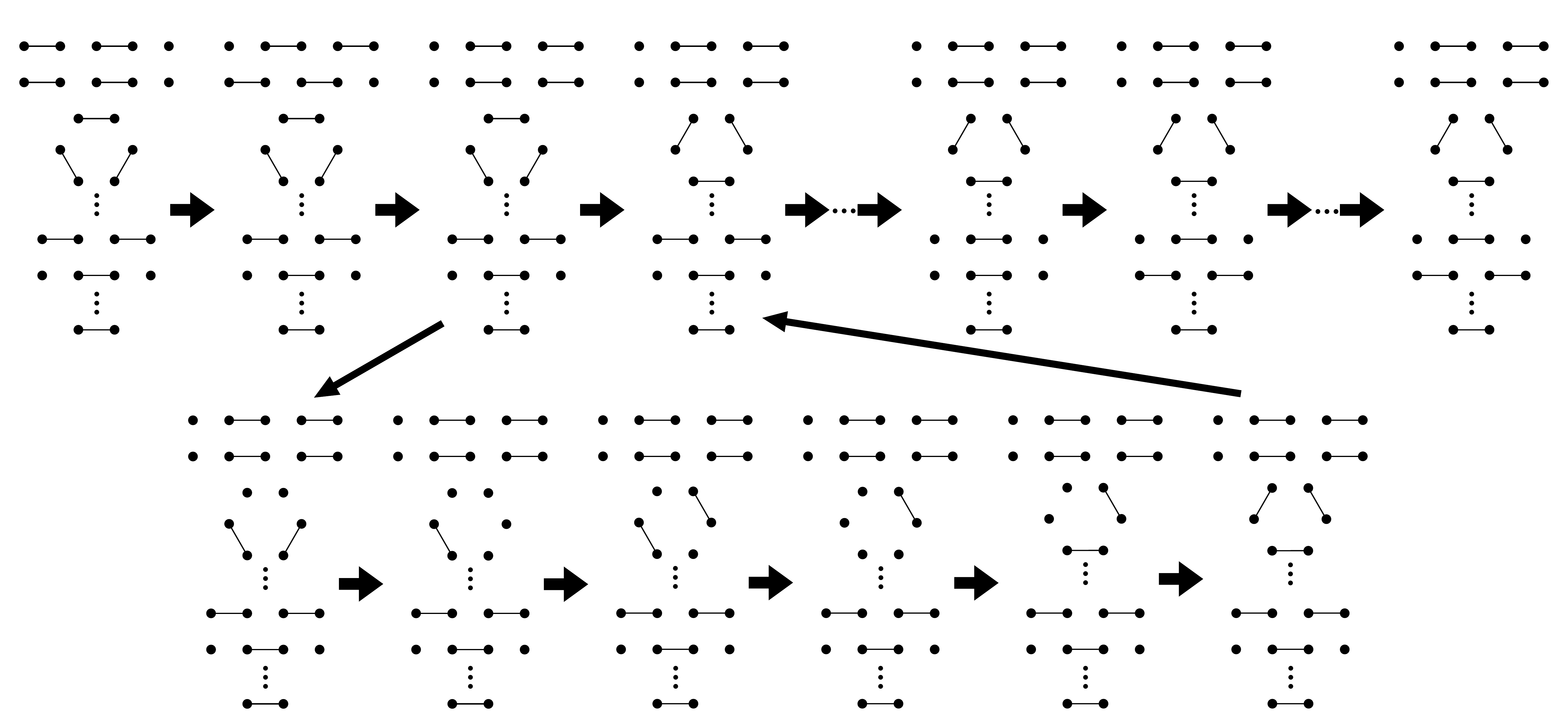}
        \caption{A possible canonical path between matchings $I$ and $F$.}
        \label{sfig:canonical-path}
\end{figure*}

\paragraph{Congestion of Multiple Canonical Path}

Now we can estimate the congestion of one transition $t$ in the complete bipartite graph $G_{m,n}$. 

The first step is to calculate $\pi(M)$ for each matching. Since any matching $M$ expands a complete bipartite subgraph of size $2|M|$, we have $\Haf(M) = |M|!$, and thus $\pi(M)\propto \lambda^{2|M|} |M|!$.
Then for edge sets of any two matchings $M$ and $M \cup\{e\}$, we have
\begin{align}
    \Pr[M\to M\cup \{e\}] &=  \frac{1}{mn}\cdot \frac{\lambda^2}{1+\lambda^2},\\
    \Pr[M \cup \{e\}\to M] &= \frac{1}{mn}\cdot \frac{\Haf(G_M)}{\Haf(G_{M\cup\{e\}})}\cdot \frac{1}{1+\lambda^2} \geq \frac{1}{mn}\cdot \frac{1}{|M|+1}\cdot \frac{1}{1+\lambda^2}.
\end{align}
Now consider the congestion of a transition $t=(M,M')$ in $T^+(k)$ (or $T^-(k)$):
\begin{align}
\varrho_t := \frac{1}{\pi(M)P(M,M')} \sum_{(I,F,i)\in \mathrm{cp}(t)} p^{(i)}(I,F) \pi(I)\pi(F) |\gamma_{IF}^{(i)}|.
\end{align}
We can compute the total congestion caused by all transitions in $T^+(k)$ (or $T^-(k)$).

Let $N_p$ denote the number of matchings of size $p$, and $N_{p,q}$ denote the number of transitions from a matching of size $p$ to a matching of size $q$ with $q=p\pm 1$. It can be easily calculated that 
\begin{align}  
N_p = \binom{n}{p}\binom{m}{p}\cdot p!,\quad N_{p,q} = \begin{cases}
    N_p\cdot (n-p)(m-p) &\text{ if } q = p + 1;\\
    N_p\cdot p &\text{ if } q = p - 1.
\end{cases}
\end{align}

We first consider $t\in T^+(k)$. Suppose $t=(M,M')$ with $|M| = k$ and $|M'|=k+1$, note that by \prop{canonical-path-size}, if the size of $I$ and the size of $F$ are either both less than $k - 1$ or both greater than $k + 1$, then any path from $I$ to $F$ cannot pass through any transition in the set $T^+(k)$. Therefore, we have 
\begin{align}
    \quad N_{k,k+1}\cdot \varrho_t 
    =&  \frac{1}{\pi(M)P(M,M')} \sum_{t'=(M',M'+e'):|M| = |M'|}  \sum_{(I,F,i)\in \mathrm{cp}(t')}p^{(i)}(I,F) \pi(I)\pi(F) |\gamma_{IF}^{(i)}|\\
    \leq&  \frac{1}{\pi(M)P(M,M')} \sum_{t'=(M',M'+e'):|M| = |M'|} \bigg(\sum_{|I|\geq k-1,|F|\leq k+1}\sum_{i:(I,F,i)\in \mathrm{cp}(t')}p^{(i)}(I,F) \pi(I)\pi(F) |\gamma_{IF}^{(i)}|\notag\\
    & +\sum_{|I|\leq k+1,|F|\geq k-1}\sum_{i:(I,F,i)\in \mathrm{cp}(t')}p^{(i)}(I,F) \pi(I)\pi(F) |\gamma_{IF}^{(i)}|\bigg)\\
    =& \frac{1}{\pi(M)P(M,M')}\bigg(\sum_{|I|\geq k-1,|F|\leq k+1}|\gamma_{IF}^{(i)}|\sum_{i=1}^{l(I,F)}p^{(i)}(I,F) \pi(I)\pi(F) |\gamma_{IF}^{(i)}|\notag\\
    &+\sum_{|I|\leq k+1,|F|\geq k-1}|\gamma_{IF}^{(i)}|\sum_{i=1}^{l(I,F)}p^{(i)}(I,F) \pi(I)\pi(F) |\gamma_{IF}^{(i)}|\bigg)\\
    \leq&  \frac{1}{\pi(M)P(M,M')} \bigg(\sum_{I:|I|\leq k+1} \sum_{F:|F|\geq k-1}\pi(I)\pi(F) |\gamma_{IF}^{(i)}|^2 + \sum_{I:|I|\geq k-1} \sum_{F:|F|\leq k+1}\pi(I)\pi(F) |\gamma_{IF}^{(i)}|^2\bigg)\\
    =& 2\cdot  \frac{1}{\pi(M)P(M,M')} \sum_{I:|I|\leq k+1} \sum_{F:|F|\geq k-1}\pi(I)\pi(F) |\gamma_{IF}^{(i)}|^2 \\
    \leq& 2\cdot \frac{mn(1+\lambda^2)}{\lambda^2} \cdot (2n)^2\cdot N_{k,k+1} \cdot \frac{1}{N_{k,k+1}} \cdot \frac{\sum_{l=1}^{k+1} N_l\cdot \lambda^{2l} l! \cdot \sum_{l=k-1}^n N_l\cdot \lambda^{2l} l!}{\lambda^{2k}k!\cdot \sum_{l=1}^n N_l\cdot \lambda^{2l} l!}.
\end{align}

Notice that $\frac{N_{l+1}\cdot \lambda^{2(l+1)}(l+1)!}{N_l\cdot \lambda^{2l}l!} = \lambda^2 (n-l)(m-l)\geq 4\lambda^2$ for $l<n-1$, when $\lambda>\frac{1}{2}$, we have
\begin{align}
     \frac{\sum_{l=1}^{k+1} N_l\cdot \lambda^{2l}l!}{N_{k,k+1} \lambda^{2k}k!}
    =&\frac{1}{(n-k)(m-k)} \cdot \frac{\sum_{l=1}^{k+1} N_l\cdot \lambda^{2l}l!}{N_{k+1}\cdot \lambda^{2(k+1)}(k+1)!}\cdot \frac{N_{k+1}\cdot \lambda^2(k+1)}{N_k}\\
    \leq& \frac{1}{(n-k)(m-k)}\cdot \left(\frac{4\lambda^2}{4\lambda^2-1}+1\right)\cdot \lambda^2 (n-k)(m-k)\\=&\lambda^2\left(\frac{4\lambda^2}{4\lambda^2-1}+1\right).
\end{align}

Then we consider $t\in T^-(k)$. Suppose $t=(M,M')$ with $|M| = k$ and $|M'|=k-1$, through similar analysis we have
\begin{align}
 N_{k,k-1}\cdot \varrho_t 
 \leq &2\cdot  \frac{1}{\pi(M)P(M,M')} \sum_{I:|I|\leq k} \sum_{F:|F|\geq k-1}\pi(I)\pi(F) |\gamma_{IF}^{(i)}|^2 \\
    \leq &2\cdot mn^2(1+\lambda^2) \cdot (2n)^2\cdot N_{k,k-1} \cdot \frac{1}{N_{k,k-1}} \cdot\frac{\sum_{l=1}^{k} N_l\cdot \lambda^{2l}l! \cdot \sum_{l=k-1}^n N_l\cdot \lambda^{2l}l!}{\lambda^{2k}k!\cdot \sum_{l=1}^n N_l\cdot \lambda^{2l}l!}.
\end{align}
Similarly, when $\lambda>\frac{1}{2}$,
\begin{align}
    &\frac{\sum_{l=1}^{k} N_l\cdot \lambda^{2l}l!}{N_{k,k-1} \lambda^{2k}k!}=\frac{1}{k} \cdot \frac{\sum_{l=1}^{k} N_l\cdot \lambda^{2l}l!}{N_{k}\cdot \lambda^{2k}k!}
    \leq \frac{1}{k}\cdot \left(\frac{4\lambda^2}{4\lambda^2-1}+1\right)\leq \frac{4\lambda^2}{4\lambda^2-1}+1.
\end{align}
Considering these two cases
we have 
\begin{align}
    \varrho_t&\leq \max\left\{2\cdot \frac{mn(1+\lambda^2)}{\lambda^2} \cdot (2n)^2\cdot\lambda^2\left(\frac{4\lambda^2}{4\lambda^2-1}+1\right),2\cdot mn^2(1+\lambda^2) \cdot (2n)^2\cdot\left(\frac{4\lambda^2}{4\lambda^2-1}+1\right)\right\}\\
    &= 8 mn^4\cdot \frac{(\lambda^2+1)(8\lambda^2-1)}{4\lambda^2 - 1}.
\end{align}

Let $\bar{\lambda}=\max\{1/\lambda,\lambda\}$, noting that $\lambda^{2|M|}\cdot |M|!\leq \bar{\lambda}^{2n}\cdot n!$ for any matching $M$, and the number of matchings in the complete bipartite graph $G$ is bounded by $\frac{m!}{(m-n)!}\cdot n!\leq (m!)^2 \leq m^{2m}$. We have
the crude bound $\ln\pi(M)^{-1} \leq \ln{\left({\bar{\lambda}}^{2n}\cdot n!\cdot m^{2m} \right)}\leq 2n \ln |{\lambda}| + 2m\ln{m}$, which holds uniformly over all matchings of $G$, by \thm{canonical-mixing-multiple}, we establish a polynomial mixing time of double-loop Glauber on complete bipartite graphs:
\begin{lemma}\label{lem:poly-mix}
     The mixing time of double-loop Glauber on the complete bipartite graph $G_{m,n}$ is less than $r_\epsilon = 16 mn^4\cdot \frac{(\lambda^2+1)(8\lambda^2-1)}{4\lambda^2 - 1}\cdot \left(\ln{\epsilon^{-1}}+ 2n \ln |{\lambda}| + 2m\ln{m}\right)$.
\end{lemma}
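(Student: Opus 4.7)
The plan is to obtain Lemma S1 as an essentially immediate corollary of the two structural ingredients already assembled: the congestion estimate produced by the symmetric multiple canonical path construction, and the multiple canonical path mixing time bound of Theorem S1. So after all the work in the preceding paragraphs of this subsection, only a bookkeeping step remains.

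First I would invoke Theorem S1, which gives
\begin{align}
t_{\mix}(\epsilon) \leq 2\varrho\bigl(2\ln\epsilon^{-1} + \ln\pi_{\min}^{-1}\bigr),
\end{align}
where $\varrho$ denotes the congestion of the multiple canonical path framework for the double-loop Glauber chain on $G_{m,n}$. Second, I would substitute the congestion bound
\begin{align}
\varrho \leq 8\,m n^4\cdot \frac{(\lambda^2+1)(8\lambda^2-1)}{4\lambda^2-1},
\end{align}
which is exactly the estimate derived just above the statement, using Proposition~\ref{prop:congestion-symmetry-bipartite} (to replace the per-transition maximum by a total-congestion calculation over $T^+(k)$ or $T^-(k)$) and Proposition~\ref{prop:canonical-path-size} (to truncate the sums of initial and final matchings to sizes near $k$). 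Third, I would insert the crude lower bound on $\pi_{\min}$ supplied in the same paragraph: since $\pi(M)\propto \lambda^{2|M|}|M|!$, any matching of $G_{m,n}$ has size at most $n$, and the total number of matchings is at most $(m!)^2\leq m^{2m}$, we get $\ln\pi_{\min}^{-1}\leq 2n\ln|\lambda|+2m\ln m$.

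Plugging these into the Theorem S1 bound and absorbing the factor of $2$ in front of $\ln\epsilon^{-1}$ into the leading coefficient produces the claimed $r_\epsilon$. There is no genuine obstacle at this stage: all of the substantive difficulty, namely the design of a symmetric multiple canonical path scheme that simultaneously enables the per-transition congestion to be extracted from a total-congestion calculation and also forces paths to stay within the size window between $|I|$ and $|F|$, has already been resolved in the construction preceding the lemma. The final proof therefore consists only of citing Theorem S1 and chaining the two previously established inequalities.
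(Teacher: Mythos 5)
Your proposal matches the paper's argument exactly: the lemma is obtained by plugging the previously derived congestion bound $\varrho \le 8mn^4\cdot\frac{(\lambda^2+1)(8\lambda^2-1)}{4\lambda^2-1}$ and the crude estimate $\ln\pi_{\min}^{-1}\le 2n\ln|\lambda|+2m\ln m$ into the multiple-canonical-path mixing bound $t_{\mix}(\epsilon)\le 2\varrho\left(2\ln\epsilon^{-1}+\ln\pi_{\min}^{-1}\right)$ (which is Theorem S2 in the paper's numbering, not S1). The only cosmetic point is that, exactly as the paper does, you silently drop the factor of $2$ on the $\ln\epsilon^{-1}$ term when writing the final constant $r_\epsilon$.
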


Finally, we note that the uniform sampling of all matchings guaranteed by \lem{bigraph-uniform} is subject to approximation errors. Consequently, when computing the total time needed by double-loop Glauber, we must incorporate these error terms into our analysis. Since the maximum matching size in $G_{m,n}$ is at most $n$, each inner layer sampling step is performed on a balanced bipartite graph with at most $2n$ vertices.

Take 
\begin{align}
\eta_{\frac{\eps}{2}} =\frac{\eps}{2}/r_{\frac{\eps}{2}} = \frac{\epsilon}{2}\left(16 mn^4\frac{(\lambda^2+1)(8\lambda^2-1)}{4\lambda^2 - 1}\left(\ln{{\epsilon}^{-1}}+ 2 n \ln |{\lambda}| + 2m\ln{m}\right)\right)^{-1}
\end{align}
in \lem{bigraph-uniform}. Then the failure probability of inner layer sampling in each iteration can be bounded within $\eta_{\frac{\eps}{2}}$.
Again by \lem{poly-mix}, after $r_{\frac{\eps}{2}}$ iterations, the total variance can be bounded by $\frac{\epsilon}{2} + \eta_{\frac{\eps}{2}} \cdot r_{\frac{\eps}{2}}=\epsilon$. Therefore, we give a polynomial-time upper bound for the mixing of double-loop Glauber on complete bipartite graph $G_{m,n}$:

\begin{theorem}\label{thm:poly-mix}
    Given a bipartite graph $G_{m,n}$, then for $\lambda>\frac{1}{2}$, given error $\eps$, we can achieve a sampling using double-loop Glauber in time $\tilde{O}(m^2 n^{15}(\log \epsilon^{-1})^2)$ such that the total variation distance between the sampling distribution and the ideal stationary distribution is at most $\epsilon$. The iterations of double-loop Glauber is $r_{\frac{\eps}{2}}= \tilde{O}(m^2n^4 \log \eps^{-1})$, and each inner layer sampling procedure needs $O(n^{11}(\log n)^2 (\log n + \log r_{\frac{\eps}{2}} + \log \eps^{-1}))$ time. 
\end{theorem}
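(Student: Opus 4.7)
The plan is to combine the ideal-chain mixing bound from \lem{poly-mix} with the approximate uniform perfect-matching sampler of \lem{bigraph-uniform} via an error-budgeting argument. The key observation is that \algo{double-loop} has already been analyzed under the idealization that \lin{inner} returns an exact uniformly random perfect matching of $G_{X_t}$; what remains is to bound the extra error introduced by using only an approximate inner sampler, and then to multiply the outer iteration count by the per-iteration cost.

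First I would split the target TV budget as $\epsilon/2 + \epsilon/2$: the first half for mixing of the idealized outer chain, the second half for cumulative inner-sampler failures. Applying \lem{poly-mix} with error parameter $\epsilon/2$ gives an outer iteration count $r_{\epsilon/2} = \tilde{O}(m^2 n^4 \log \epsilon^{-1})$ after which the idealized chain is within TV distance $\epsilon/2$ of the GBS stationary distribution. Next I would invoke \lem{bigraph-uniform} with per-call failure probability $\eta_{\epsilon/2} := (\epsilon/2)/r_{\epsilon/2}$. Since each inner call runs on a balanced bipartite subgraph $G_{X_t}$ with at most $2n$ vertices, one invocation costs $O(n^{11}(\log n)^2(\log n + \log \eta_{\epsilon/2}^{-1})) = \tilde{O}(n^{11}\log \epsilon^{-1})$, where the factor $\log\eta_{\epsilon/2}^{-1}$ folds into $\log\epsilon^{-1}$ after absorbing $\log r_{\epsilon/2}$ into $\tilde{O}(\cdot)$. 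A union bound over the at most $r_{\epsilon/2}$ inner calls bounds the probability that \emph{any} inner sampler fails by $r_{\epsilon/2}\cdot \eta_{\epsilon/2} = \epsilon/2$. On the complementary ``good event,'' each inner sample is drawn from the exact uniform distribution on perfect matchings of $G_{X_t}$, so the outer trajectory coincides in distribution with that of the idealized chain; a triangle-inequality argument then yields the overall TV bound $\epsilon/2 + \epsilon/2 = \epsilon$.

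Multiplying the outer iteration count by the per-iteration cost gives the total runtime $\tilde{O}(m^2 n^4 \log \epsilon^{-1}) \cdot \tilde{O}(n^{11} \log \epsilon^{-1}) = \tilde{O}(m^2 n^{15} (\log \epsilon^{-1})^2)$, which matches the claimed bound. The inner-sampling cost quoted in the theorem statement follows directly by substituting $\eta_{\epsilon/2}$ into the cost formula of \lem{bigraph-uniform}.

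The main obstacle will be formalising the error-composition step so that the TV contributions from \lem{poly-mix} and from \lem{bigraph-uniform} combine additively rather than multiplicatively. Concretely, one must construct a coupling between the approximate and ideal inner samplers that agrees on the good event defined above, while carefully accounting for the fact that each inner call depends on the current outer state $X_t$, and the subsequent outer transition depends in turn on the inner-sampled matching $E_t$. This is a standard nested-MCMC argument, but the layer-to-layer dependence requires one to be explicit about conditioning before invoking the union bound on inner-sampler failures.
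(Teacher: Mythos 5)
Your proposal is correct and follows essentially the same route as the paper's proof: split the budget as $\epsilon/2+\epsilon/2$, run the idealized outer chain for $r_{\epsilon/2}$ iterations via \lem{poly-mix}, set the per-call inner failure probability to $\eta_{\epsilon/2}=(\epsilon/2)/r_{\epsilon/2}$ in \lem{bigraph-uniform}, union-bound to get total error $\epsilon/2+\eta_{\epsilon/2}\cdot r_{\epsilon/2}=\epsilon$, and multiply the iteration count by the per-iteration cost. Your added remark about formalizing the coupling between the approximate and exact inner samplers is a reasonable refinement of a step the paper states only implicitly, but it does not change the argument.
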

%================================================================

\subsection{Dense Bipartite Graphs}
\label{app:dense-bipartite}
In this section, we extend the results from \append{complete-bipartite} to non-complete bipartite graphs, providing proofs of \thm{bipartite-graph} and \thm{main-results-bigraph}. A natural approach is to leverage the aforementioned congestion calculations.
We discuss dense bipartite graphs $G$: there exists a constant $\xi$, such that each vertex is non-adjacent to at most $\xi$ vertices in the opposite partition.

For any transition $t=(M,M')$, let $\tilde{M}$ denote the matching of smaller size in $\{M,M'\}$. Since we have
\begin{align}
    \pi(M)P(M,M') = \frac{1}{|E|}\cdot\frac{\lambda^2}{1+\lambda^2}\cdot \lambda^{2|\tilde{M}|} \Haf(\tilde{M}),
\end{align}
we can rewrite \eq{congestion-matching} as follows:
\begin{align}
    \varrho := & \max_{t=(M,M')}\bigg\{\frac{1}{\pi(M)P(M,M')} \sum_{(I,F,i)\in cp(t)} p^{(i)}(I,F)\cdot \pi(I)\pi(F) |\gamma_{IF}^{(i)}|\bigg\}\\
     =& \max_{t=(M,M')}\bigg\{ |E|\frac{1+\lambda^2}{{\lambda}^2}\cdot\frac{\sum_{(I,F,i)\in cp(t)} p^{(i)}(I,F)\cdot \lambda^{2|I|}\Haf(I)\lambda^{2|F|}\Haf(F) |\gamma_{IF}^{(i)}|}{\lambda^{2|\tilde{M}|}\Haf(\tilde{M})\cdot \sum_{X\text{ is a matching}} \lambda^{2|X|}\Haf(X)} \bigg\}.
\end{align}
For any bipartite graph $G$, let $G'$ be the corresponding complete bipartite graph with the same vertex set as $G$.
Let $\varrho_t'$ denote the congestion of transition $t=(M,M')$ in $G'$, $E'$ denote the edge set of $G'$, $P'(M,M')$ denote the corresponding transition probability of double-loop Glauber for $G'$, and $\Haf'(M)$ denote the corresponding Hafnian of the subgraph induced by matching $M$ in $G'$, which is equal to $|M|!$.
We have already established an estimate for $\varrho_t'$ in \append{complete-bipartite}, and now we can use this estimate to bound $\varrho_t$ in $G$ by estimate $\Haf(M)$ via $\Haf'(M)$ for any matching $M$.

Note that for any two matchings $I$ and $F$, the canonical path from $I$ to $F$ remains identical in both $G$ and $G'$. Furthermore, for any matching $M$ of size $n'\geq \xi$, let $V'$ denote the vertex set of $M$. It is straightforward that the minimum degree $d'$ of the subgraph induced by $V'$ satisfies $d'\geq n' - \xi$. As a result, by~\cite{hall1948distinct} we can bound the number of perfect matching in this subgraph:

\begin{lemma}[\cite{hall1948distinct}]
    \label{lem:hall}
    For any balanced bipartite graph $G=(V_1,V_2,E)$ with $|V_1| = |V_2| = n$, if the minimum degree of $G$ is at least $d$, and there exists a perfect matching, then the number of perfect matchings in $G$ is at least $d!$.
\end{lemma}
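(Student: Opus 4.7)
The plan is to reduce this to the classical theorem of Hall~\cite{hall1948distinct} on the number of systems of distinct representatives (SDRs). To each vertex $u_i\in V_1$, associate the set $A_i:=N(u_i)\subseteq V_2$; a perfect matching of $G$ is then precisely an SDR of the family $(A_1,\ldots,A_n)$, i.e., an injection $f\colon\{1,\ldots,n\}\to V_2$ with $f(i)\in A_i$. The hypothesis that $G$ has minimum degree $\geq d$ gives $|A_i|\geq d$ for every $i$, and the existence of a perfect matching is exactly Hall's marriage condition. The lemma thus reduces to the counting bound: \emph{if each $|A_i|\geq d$ and the family has an SDR, then it has at least $d!$ SDRs} (which automatically satisfies $d\leq n$ since each $|A_i|\leq n$).

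I would prove this by induction on $n$. The base $n=1$ is immediate as $|A_1|\geq d\geq d!$ when $d\leq 1$. For the inductive step, split on whether, for every candidate $a\in A_1$, the reduced family $(A_2\setminus\{a\},\ldots,A_n\setminus\{a\})$ admits an SDR. In the \emph{generic} case (it does for every $a$), each reduced family has $n-1$ sets of size $\geq d-1$ that still satisfy Hall's condition, so the inductive hypothesis gives $\geq(d-1)!$ extensions per $a$; summing over $|A_1|\geq d$ choices yields $\geq d\cdot(d-1)!=d!$ SDRs. In the \emph{degenerate} case, failure for some $a$ forces a tight subset $J\subseteq\{2,\ldots,n\}$ with $T:=\bigcup_{j\in J}A_j$ satisfying $|T|=|J|$. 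Since $A_j\subseteq T$ for $j\in J$ and $|T|=|J|$, in every SDR of the original family the representatives of $\{A_j\}_{j\in J}$ fill $T$ exactly, while the remaining sets draw representatives from $V_2\setminus T$. Hence the count factors as a product over the two sub-families, and the tight sub-family (with $|J|<n$ sets of size $\geq d$ inside a ground set of size $|J|\geq d$) contributes $\geq d!$ by induction, while the complementary sub-family contributes at least one (inherited from the existing SDR).

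The main obstacle is the degenerate case: verifying that the complementary sub-family $(A_i\setminus T)_{i\notin J}$ still satisfies Hall's condition, so that induction is applicable and the product decomposition is meaningful. This is the standard ``splitting'' consequence of tightness: for any $I\subseteq\{1,\ldots,n\}\setminus J$, Hall applied to $I\cup J$ in the original family gives $|\bigcup_{k\in I\cup J}A_k|\geq |I|+|J|$, and since $\bigcup_{j\in J}A_j=T$ with $|T|=|J|$ we deduce $|\bigcup_{i\in I}(A_i\setminus T)|\geq |I|$, which is precisely Hall's condition for the reduced family. With this step secured, the induction closes cleanly; the resulting bound $d!$ is tight, as witnessed by $K_{d,d}$ disjointly augmented with an arbitrary perfect matching on the remaining $n-d$ vertex pairs on each side.
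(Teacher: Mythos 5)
Your argument is correct and is precisely the classical inductive proof of M.~Hall Jr.'s theorem on counting systems of distinct representatives (split on whether deleting a candidate representative of $A_1$ preserves Hall's condition, and factor through a tight set in the degenerate case), which is exactly the result the paper invokes by citing \cite{hall1948distinct} without supplying a proof of its own. One small quibble with your closing aside: $K_{d,d}$ augmented by a disjoint perfect matching on the remaining vertices has minimum degree $1$, not $d$, so it does not witness tightness of the lemma as stated (for $d=2$ the cycle $C_{2n}$ does).
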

Thus the number of perfect matching in this subgraph is not less than $(n'-\xi)!$, which means
\begin{align}
    1\geq \frac{\Haf(M)}{\Haf'(M)}\geq \frac{(n'-\xi)!}{n'!}\geq \frac{1}{n'^\xi}\geq \frac{1}{n^\xi}.\label{eq:ratio-1}
\end{align}

Therefore, we have
\begin{align}
    \varrho  =& \max_{t=(M,M')}\bigg\{\frac{\sum_{(I,F,i)\in cp(t)} p^{(i)}(I,F)\cdot \lambda^{2|I|}\Haf(I)\lambda^{2|F|}\Haf(F) |\gamma_{IF}^{(i)}|}{\lambda^{2|\tilde{M}|}\Haf(\tilde{M})\cdot \sum_{X\text{ is a matching}}\lambda^{2|X|} \Haf(X)} \cdot |E|\frac{1+\lambda^2}{{\lambda}^2}\bigg\}\\
    \leq &\max_{t=(M,M')}\bigg\{\frac{ \sum_{(I,F,i)\in cp(t)} p^{(i)}(I,F)\cdot \lambda^{2|I|}\Haf'(I)\lambda^{2|F|}\Haf'(F) |\gamma_{IF}^{(i)}|}{\lambda^{2|\tilde{M}|}\Haf'(\tilde{M})n^{-\xi}\cdot \sum_{X\text{ is a matching}} \lambda^{2|X|}\Haf'(X)n^{-\xi}}\cdot |E'|\frac{1+\lambda^2}{{\lambda}^2}\bigg\}\\
    \leq & n^{2\xi} \varrho'.
\end{align}

\begin{proof}[Proof of \thm{bipartite-graph}]
    By the analysis of \lem{poly-mix}, we have the congestion $\varrho'$ of double-loop Glauber on complete bipartite graph $G_{m,n}$ is bounded by $\varrho' = 8 mn^4\cdot \frac{(\lambda^2+1)(8\lambda^2-1)}{4\lambda^2 - 1}$. 
    Then we have the congestion $\varrho$ of double-loop Glauber on bipartite graph $G$ is bounded by $\varrho \leq n^{2\xi} \cdot \varrho' = 8 mn^{4+2\xi}\cdot \frac{(\lambda^2+1)(8\lambda^2-1)}{4\lambda^2 - 1}$.
    Thus again by \thm{canonical-mixing-multiple}, the mixing time of double-loop Glauber on complete bipartite graph $G_{m,n}$ is less than $ 16 mn^{4+2\xi}\cdot \frac{(\lambda^2+1)(8\lambda^2-1)}{4\lambda^2 - 1}\cdot \left(\ln{\epsilon^{-1}}+ 2n \ln |{\lambda}| + 2m\ln{m}\right)$.
\end{proof}

Furthermore, we complete the proof of \thm{main-results-bigraph} by extending the analysis of \thm{poly-mix} to the case of dense bipartite graphs.
\begin{proof}[Proof of \thm{main-results-bigraph}]
Take 
\begin{align}
    \eta_{\frac{\eps}{2}} =\frac{\epsilon}{2}\left(16 mn^{4+2\xi}\frac{(\lambda^2+1)(8\lambda^2-1)}{4\lambda^2 - 1}\left(\ln{{\epsilon}^{-1}}+ 2 n \ln |{\lambda}| + 2m\ln{m}\right)\right)^{-1}
\end{align}
in \lem{bigraph-uniform}. Then the failure probability of each inner layer sampling step can be bounded within $\eta_{\frac{\eps}{2}}$.
Again by \lem{poly-mix}, after
\begin{align}
16 mn^{4+2\xi}\frac{(\lambda^2+1)(8\lambda^2-1)}{4\lambda^2 - 1}\left(\ln{{\epsilon}^{-1}}+ 2 n \ln |{\lambda}| + 2m\ln{m}\right)
\end{align}
iterations, the total variance can be bounded by $\epsilon$. Therefore, we give a polynomial-time upper bound $\tilde{O}{(m^2n^{(15+2\xi)}(\log\eps^{-1})^2)}$ for the mixing time of double-loop Glauber on the given dense bipartite graph.
\end{proof}

%================================================================

\subsection{Complete Graph $G_{2n}$}
\label{app:complete-non-bipartite}
In this section we turn our analysis to complete graph with $2n$ vertices, where the fundamental approach remains consistent with our complete bipartite graph methodology. 

We can again employ the same multiple canonical path scheme used for complete bipartite graphs. This approach still retains two key properties for complete graph:
\begin{itemize}
    \item All paths from $I$ to $F$ in this scheme only traverse matchings whose sizes lie within $\min\{|I|,|F|\} -1,\leq \max\{|I|,|F|\} + 1$ exactly as in the complete bipartite case; and the length of each path is bounded by $2n$.
    \item For complete graph $G_{2n}$, symmetry condition stated in \prop{congestion-symmetry-bipartite} still holds. We restate it here for clarity:
\end{itemize}
\begin{proposition}\label{prop:congestion-symmetry-non-bipartite}
    For complete graphs, under the aforementioned multiple canonical paths construction in \append{complete-bipartite}, the congestion of each transition in $T^+(k)$ (or $T^-(k)$) is identical.
\end{proposition}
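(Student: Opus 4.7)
The plan is to mirror the argument of \prop{congestion-symmetry-bipartite}, exploiting the fact that in the complete graph $G_{2n}$ every permutation of the vertex set is an automorphism of the graph. Given two transitions $t=(M,M+e)$ and $t'=(M',M'+e')$ that lie in the same set $T^+(k)$ (the argument for $T^-(k)$ is identical), I will construct a vertex permutation $\sigma$ on $V$ that sends $M\mapsto M'$ and $e\mapsto e'$. Such $\sigma$ exists because both $M$ and $M'$ are $k$-matchings in $K_{2n}$, and any bijection mapping the $k+1$ edges of $M\cup\{e\}$ onto the $k+1$ edges of $M'\cup\{e'\}$ can be extended to a permutation of $V$; the complete graph places no adjacency constraints on $\sigma$.

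Next I would verify that $\sigma$ induces a weight-preserving bijection between the pairs $(I,F)$ whose canonical paths use $t$ and those whose paths use $t'$. This rests on three invariance properties. First, $\pi$ is invariant under $\sigma$, since $\Haf(G_X)$ depends only on the isomorphism type of the induced subgraph and $\lambda^{2|X|}$ depends only on $|X|$. Second, the transition probability $P(\cdot,\cdot)$ of the double-loop chain is also $\sigma$-invariant, because both the uniform edge proposal and the uniform perfect-matching sampler on the induced subgraph are defined purely in terms of graph structure. Third, and most importantly, the multiple canonical path framework defined in \append{complete-bipartite} is equivariant under $\sigma$: the classification of components of $I\oplus F$ into $A_0,A_1,A_{-1}$ depends only on how many edges each component gains or loses, which is preserved by $\sigma$; and within each component all cyclic start vertices (for cycles) and both endpoints (for paths) are considered, again a set closed under $\sigma$. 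Thus $\sigma$ bijects the canonical paths from $I$ to $F$ onto the canonical paths from $\sigma(I)$ to $\sigma(F)$ with equal selection probabilities and equal lengths.

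Putting these together, each term $p^{(i)}(I,F)\,\pi(I)\pi(F)\,|\gamma^{(i)}_{IF}|$ in the numerator of the congestion sum for $t$ corresponds to an equal term in the sum for $t'$, while the denominator $\pi(M)P(M,M')$ transforms to $\pi(M')P(M',M'+e')$ with the same value. Hence $\varrho_t=\varrho_{t'}$.

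The only point to check carefully is that the restriction to ``$A_0$ first, then alternating $A_1$ and $A_{-1}$'' permutations does not break the symmetry \textemdash{} and indeed it does not, since this restriction is phrased intrinsically in terms of the multiset of component types of $I\oplus F$, which $\sigma$ preserves. No real obstacle arises here because the complete graph is vertex-transitive on every level of matching structure; the bipartite argument went through with only the extra care that $\sigma$ had to respect the bipartition, and that constraint is absent in $K_{2n}$.
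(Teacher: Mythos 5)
Your proof is correct and takes essentially the same approach as the paper: the paper likewise proves the claim by exhibiting a vertex permutation carrying one transition onto the other and observing that the stationary distribution, the transition probabilities, and the (restricted) multiple canonical path construction are all equivariant under such a permutation. Your write-up is if anything more careful than the paper's one-line argument (which speaks only of a ``vertex transposition''), but the underlying idea is identical.
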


Now we estimate the congestion of one transition $t$ in complete graph $G_{2n}$. First, we calculate $\pi(M)$ for each matching. Since any matching $M$ expands a complete subgraph of size $2|M|$, we have $\Haf(M) = (2|M|-1)!!$, and thus $\pi(M)\propto \lambda^{2|M|} (2|M|-1)!!$.
Then for edge sets of any two matchings $M$ and $M \cup\{e\}$, we have
\begin{align}
    \Pr[M\to M\cup \{e\}] &=  \frac{1}{n(2n-1)}\cdot \frac{\lambda^2}{1+\lambda^2},\\
    \Pr[M \cup \{e\}\to M] &= \frac{1}{n(2n-1)}\cdot \frac{\Haf(G_M)}{\Haf(G_{M\cup\{e\}})}\cdot \frac{1}{1+\lambda^2} \geq \frac{1}{n(2n-1)}\cdot \frac{1}{2|M|+1}\cdot \frac{1}{1+\lambda^2}.
\end{align}

Then we recalculate parameter $N_p$ and $N_{p,q}$ of complete graph $G_{2n}$:
\begin{align}
    N_p =& \binom{2n}{2p}\cdot (2p-1)!! =\binom{2n}{2p} \cdot \frac{(2p)!}{2^p p!}=\frac{(2n)!}{p!(2n-2p)!\cdot2^p}
    \\ N_{p,q} =& \begin{cases}
    N_p\cdot \binom{2n-2p}{2} &\text{ if } q = p + 1;\\
    N_p\cdot p &\text{ if } q = p - 1.
\end{cases}
\end{align}
Now consider $t\in T^+(k)$. Suppose $t=(M,M')$ with $|M| = k$ and $|M'|=k+1$, we have 
\begin{align}
    \quad N_{k,k+1}\cdot \varrho_t 
   \leq& 2\cdot  \frac{1}{\pi(M)P(M,M')} \sum_{I:|I|\leq k+1} \sum_{F:|F|\geq k-1}\pi(I)\pi(F) |\gamma_{IF}^{(i)}|^2 \\
    \leq& 2\cdot \frac{n(2n-1)(1+\lambda^2)}{\lambda^2} \cdot (2n)^2\cdot N_{k,k+1} \cdot \frac{1}{N_{k,k+1}} \cdot \frac{\sum_{l=1}^{k+1} N_l\cdot \lambda^{2l} l! \cdot \sum_{l=k-1}^n N_l\cdot \lambda^{2l} l!}{\lambda^{2k}k!\cdot \sum_{l=1}^n N_l\cdot \lambda^{2l} l!}.
\end{align}
Similarly,
notice that $\frac{N_{l+1}\cdot \lambda^{2(l+1)}(l+1)!}{N_l\cdot \lambda^{2l}l!} = \frac{(2n-2l)(2n-2l-1)}{2(l+1)}\cdot \lambda^2\cdot (l+1)\geq 6\lambda^2$ for $l<n-1$, when $\lambda^2>\frac{1}{6}$, we have
\begin{align}
    \frac{\sum_{l=1}^{k+1} N_l\cdot \lambda^{2l}l!}{N_{k,k+1} \lambda^{2k}k!} 
    =&\frac{2}{(2n-2k)(2n-2k-1)} \cdot \frac{\sum_{l=1}^{k+1} N_l \lambda^{2l}l!}{N_{k+1}\cdot \lambda^{2(k+1)}(k+1)!}\cdot \frac{N_{k+1}\lambda^2(k+1)}{N_k}\\
    \leq& \frac{2}{(2n-2k)(2n-2k-1)} \cdot \left(\frac{6\lambda^2}{6\lambda^2-1}+1\right)\cdot 
    \frac{1}{2}\lambda^2 (2n-2k)(2n-2k-1) \\=&\lambda^2\left(\frac{6\lambda^2}{6\lambda^2-1}+1\right).
\end{align}
Then consider $t\in T^-(k)$. Suppose $t=(M,M')$ with $|M| = k$ and $|M'|=k-1$, through similar analysis we have
\begin{align}
 N_{k,k-1}\cdot \varrho_t 
 \leq &2\cdot  \frac{1}{\pi(M)P(M,M')} \sum_{I:|I|\leq k} \sum_{F:|F|\geq k-1}\pi(I)\pi(F) |\gamma_{IF}^{(i)}|^2 \\
    \leq &2\cdot n(2n-1)(2n+1)(1+\lambda^2) \cdot (2n)^2\cdot N_{k,k-1} \cdot \frac{1}{N_{k,k-1}} \cdot\frac{\sum_{l=1}^{k} N_l\cdot \lambda^{2l}l! \cdot \sum_{l=k-1}^n N_l\cdot \lambda^{2l}l!}{\lambda^{2k}k!\cdot \sum_{l=1}^n N_l\cdot \lambda^{2l}l!}.
\end{align}
Similarly, when $\lambda^2>\frac{1}{6}$,
\begin{align}
    \frac{\sum_{l=1}^{k} N_l\cdot \lambda^{2l}l!}{N_{k,k-1} \lambda^{2k}k!}
    =&\frac{1}{k} \cdot \frac{\sum_{l=1}^{k} N_l\cdot \lambda^{2l}l!}{N_{k}\cdot \lambda^{2k}k!}
    \leq \frac{1}{k}\cdot \left(\frac{6\lambda62}{6\lambda^2-1}+1\right)\leq \frac{6\lambda^2}{6\lambda^2-1}+1.
\end{align}

Considering these two cases we have 
\begin{align}
    \varrho_t&\leq \max\left\{2 \frac{n(2n-1)(1+\lambda^2)}{\lambda^2}  (2n)^2\lambda^2\left(\frac{6\lambda^2}{6\lambda^2-1}+1\right),2n(2n-1)(2n+1)(1+\lambda^2)  (2n)^2\left(\frac{6\lambda^2}{6\lambda^2-1}+1\right)\right\}\\
    &\leq 32 n^5 \frac{(\lambda^2+1)(12\lambda^2-1)}{6\lambda^2 - 1}.
\end{align}

Let $\bar{\lambda}=\max\{1/\lambda,\lambda\}$, then we have the crude bound $\ln\pi(M)^{-1} \leq \ln{\left({\bar{\lambda}}^n\cdot (2n-1)!!\cdot(2n+1)^{2n} \right)}\leq n \ln |{\lambda}| + 3n\ln{(2n+1)}$, which holds uniformly over all matchings of $G$.

By \thm{canonical-mixing-multiple}, we establish a polynomial mixing time of double-loop Glauber on complete graphs:
\begin{lemma}\label{lem:poly-mix-non-bipartite}
     The mixing time of double-loop Glauber on complete graph $G_{2n}$ is less than $r_\epsilon = 64 n^5\cdot \frac{(\lambda^2+1)(12\lambda^2-1)}{6\lambda^2 - 1}\cdot \left(\ln{\epsilon^{-1}}+ n \ln |{\lambda}| + 3n\ln{(2n+1)}\right)$.
\end{lemma}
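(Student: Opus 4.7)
The statement is the non-bipartite analog of \lem{poly-mix} for the complete graph $G_{2n}$, and nearly all of the work has already been assembled in the paragraphs immediately preceding it. My plan is therefore to package those pieces into a single invocation of \thm{canonical-mixing-multiple}.

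First, I would verify that the multiple canonical path construction from \append{complete-bipartite} transfers to $G_{2n}$. Two properties are needed: (i) every path $\gamma_{IF}^{(i)}$ lies entirely in matchings of size between $\min\{|I|,|F|\}-1$ and $\max\{|I|,|F|\}+1$, and (ii) every transition inside a fixed $T^{\pm}(k)$ has identical congestion. Property (i) is just \prop{canonical-path-size}, which depends only on the construction (processing components of $I\oplus F$ with the $A_0,A_1,A_{-1}$ ordering) and not on bipartiteness. Property (ii) is the content of \prop{congestion-symmetry-non-bipartite}, which follows from the full symmetry of $G_{2n}$ under vertex transpositions exactly as in the bipartite proof.

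Next I would assemble the congestion estimate. Recomputing $\pi(M)\propto\lambda^{2|M|}(2|M|-1)!!$ and the transition probabilities from \algo{double-loop} on $G_{2n}$ has been carried out. Using the symmetry, I replace the individual congestion $\varrho_t$ by the aggregate $N_{k,k\pm 1}\,\varrho_t$, bound the length of canonical paths by $2n$, split the double sum over $(I,F)$ by whose size is above/below $k$, and then apply the geometric ratio $N_{l+1}\lambda^{2(l+1)}(l+1)!/(N_l\lambda^{2l}l!)\ge 6\lambda^2$ (valid since $6\lambda^2>1$ when $\lambda>1/2$, in particular when $\lambda>1/4$ as in the hypothesis) to control the tails $\sum_{l\le k+1}N_l\lambda^{2l}l!$ by a geometric series. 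This produces the explicit bound $\varrho_t\le 32n^5\cdot(\lambda^2+1)(12\lambda^2-1)/(6\lambda^2-1)$ stated in the text.

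Finally, I would bound the minimum stationary probability. Since the total number of matchings in $G_{2n}$ is at most $(2n+1)^{2n}$ and the unnormalized weight of any matching is at most $\bar\lambda^{n}(2n-1)!!$ with $\bar\lambda=\max\{\lambda,1/\lambda\}$, I obtain $\ln\pi(M)^{-1}\le n\ln|\lambda|+3n\ln(2n+1)$ uniformly in $M$. Substituting this bound and the congestion estimate into \thm{canonical-mixing-multiple} gives
\[
t_{\mix}(\epsilon)\le 2\varrho\bigl(2\ln\epsilon^{-1}+\ln\pi_{\min}^{-1}\bigr)\le 64n^5\cdot\frac{(\lambda^2+1)(12\lambda^2-1)}{6\lambda^2-1}\bigl(\ln\epsilon^{-1}+n\ln|\lambda|+3n\ln(2n+1)\bigr),
\]
which is precisely $r_\epsilon$. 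The main obstacle is really the symmetry verification in (ii): in the bipartite case the construction works because swapping vertices preserves the bipartition and the edge set of $K_{m,n}$, whereas for $G_{2n}$ one should be mildly careful that the $A_0,A_1,A_{-1}$ ordering, the choice of start vertex on cycles, and the two-endpoint choice on paths all respect arbitrary vertex transpositions; once this is checked, the remaining calculations are essentially parallel to the bipartite case with the replacements $mn\to n(2n-1)$, $|M|!\to (2|M|-1)!!$, and the adjusted geometric ratio threshold $6\lambda^2>1$.
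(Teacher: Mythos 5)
Your proposal follows the paper's own proof essentially step for step: the same canonical-path symmetry propositions carried over from the bipartite case, the same aggregation of congestion over $T^{\pm}(k)$ with $\Haf(M)=(2|M|-1)!!$ and the geometric ratio bounded below by $6\lambda^2$, the same crude bound $\ln\pi_{\min}^{-1}\le n\ln|\lambda|+3n\ln(2n+1)$, and the same final application of the multiple-canonical-path mixing theorem. The only slip is the parenthetical claim that $\lambda>\frac{1}{4}$ already guarantees $6\lambda^2>1$ — it does not; the geometric-series step requires $\lambda^2>\frac{1}{6}$, which is exactly the condition the paper imposes in this part of the appendix.
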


Finally, we incorporate the error terms from the uniform sampling of all matchings guaranteed by \lem{non-bigraph-uniform} into our analysis. Since the maximum matching size in $G_{2n}$ is at most $n$, each inner layer sampling step is performed on a complete graph with at most $2n$ vertices.

Take 
\begin{align}
\eta_{\frac{\eps}{2}} =\frac{\eps}{2}/r_{\frac{\eps}{2}} = \frac{\epsilon}{2}\left(64 n^5\cdot \frac{(\lambda^2+1)(12\lambda^2-1)}{6\lambda^2 - 1}\cdot \left(\ln{\epsilon^{-1}}+ n \ln |{\lambda}| + 3n\ln{(2n+1)}\right)\right)^{-1}
\end{align}
in \lem{non-bigraph-uniform}. Then the failure probability of the inner layer sampling in each iteration can be bounded within $\eta_{\frac{\eps}{2}}$.
Again by \lem{poly-mix-non-bipartite}, after $r_{\frac{\eps}{2}}$ iterations, the total variance can be bounded by $\frac{\epsilon}{2} + \eta_{\frac{\eps}{2}} \cdot r_{\frac{\eps}{2}}=\epsilon$. Therefore, we give a polynomial-time upper bound for the mixing time of the double-loop Glauber dynamics on the complete graph $G_{2n}$:

\begin{theorem}\label{thm:poly-mix-non-bipartite}
    Given a graph $G_{2n}$, then for $\lambda>\frac{1}{\sqrt{6}}$, given error $\eps$, we can achieve a sampling using double-loop Glauber in time $\tilde{O}( n^{20}(\log \epsilon^{-1})^3)$ such that the total variation distance between the sampling distribution and the ideal stationary distribution is at most $\epsilon$. The iterations of double-loop Glauber is $r_{\frac{\eps}{2}}= \tilde{O}(n^6 \log \eps^{-1})$, and each inner layer sampling procedure needs $\tilde{O}(n^{14}(\log{\eps^{-1}})^2)$ time.
\end{theorem}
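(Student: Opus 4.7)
The plan is to combine the idealized-chain mixing bound of \lem{poly-mix-non-bipartite} with the approximate inner sampler of \lem{non-bigraph-uniform}, budgeting the two sources of error so each contributes at most $\epsilon/2$ to the total variation distance. First I would fix the outer iteration count to $T := r_{\epsilon/2}$ dictated by \lem{poly-mix-non-bipartite}; on the complete graph $G_{2n}$ the bound $\ln\pi_{\min}^{-1}=\tilde{O}(n)$ gives $T=\tilde{O}(n^6\log\epsilon^{-1})$. If \lin{inner} used \emph{exact} uniform samples of perfect matchings in $G_{X_t}$, then \lem{poly-mix-non-bipartite} would already guarantee that after $T$ steps the distribution of $X_T$ is within TV distance $\epsilon/2$ of the desired stationary distribution $\pi$.

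Next I would account for the approximation error introduced because the inner sampler is only approximately uniform. Since every induced subgraph $G_{X_t}$ of a complete graph is itself complete, the minimum-degree hypothesis of \lem{non-bigraph-uniform} is satisfied trivially, so setting the per-call parameter to $\eta := \eta_{\epsilon/2} = (\epsilon/2)/T$ yields, at each inner call, a sample whose distribution lies within TV distance $\eta$ of uniform (with the failure probability absorbed into the same $\eta$ budget) in time $\tilde{O}(n^{14}(\log\eta^{-1})^2)$. I would then couple the approximate outer chain with the ideal outer chain step by step: the two Markov kernels $K$ and $\widetilde{K}$ satisfy $\sup_{X}\|K(X,\cdot)-\widetilde{K}(X,\cdot)\|_{\TV}\le\eta$ (this is exactly what the inner-sample approximation buys), so the standard iteration lemma gives $\|K^{T}(X_0,\cdot)-\widetilde{K}^{T}(X_0,\cdot)\|_{\TV}\le T\eta=\epsilon/2$. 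Combining with the $\epsilon/2$ mixing error via the triangle inequality yields the claimed total TV bound of $\epsilon$.

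Finally, the runtime evaluation is mechanical: $T$ outer iterations each costing $\tilde{O}(n^{14}(\log\eta^{-1})^2)$. Because $\log\eta^{-1}=\log(2T/\epsilon)=\tilde{O}(\log\epsilon^{-1})$, this telescopes to $\tilde{O}(n^6\log\epsilon^{-1})\cdot\tilde{O}(n^{14}(\log\epsilon^{-1})^2)=\tilde{O}(n^{20}(\log\epsilon^{-1})^3)$, matching the stated bound. I expect the main subtlety to be the rigorous formulation of the step-by-step coupling, specifically ensuring that the inner sampler's failure events and its $\eta$-closeness to uniform both slot into the same kernel-difference bound so that errors accumulate additively rather than multiplicatively; once this is set up, the rest of the argument is just arithmetic applied to the two supplied lemmas.
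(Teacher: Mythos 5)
Your proposal is correct and follows essentially the same route as the paper: take the outer iteration count $T=r_{\epsilon/2}=\tilde{O}(n^6\log\epsilon^{-1})$ from \lem{poly-mix-non-bipartite}, budget the inner sampler's per-call error as $\eta_{\epsilon/2}=(\epsilon/2)/r_{\epsilon/2}$ via \lem{non-bigraph-uniform}, add the two $\epsilon/2$ contributions, and multiply the costs to get $\tilde{O}(n^{20}(\log\epsilon^{-1})^3)$. Your explicit kernel-coupling formulation of the additive error accumulation is a slightly more careful rendering of the step the paper states tersely as $\frac{\epsilon}{2}+\eta_{\epsilon/2}\cdot r_{\epsilon/2}=\epsilon$, but the argument is the same.
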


\subsection{Dense Non-bipartite Graphs}
\label{app:dense-non-bipartite}
In this subsection, we explore extending the results from \append{complete-non-bipartite} to non-complete graphs, providing proofs of \thm{non-bipartite-graph} and \thm{main-results-non-bigraph}.
Similar as \append{dense-bipartite}, we discuss some dense graphs $G$: there exists a constant $\xi$, such that each vertex is non-adjacent to at most $\xi$ vertices.

For any graph $G$, let $G'$ be the corresponding complete bipartite graph with the same vertex set as $G$.
Let $\varrho_t'$ denote the congestion of transition $t=(M,M')$ in $G'$, $P(M,M')$ denote the corresponding transition probability of double-loop Glauber for $G'$, and $\Haf'(M)$ denote the corresponding Hafnian of the subgraph induced by matching $M$ in $G'$, which is equal to $(2|M|-1)!!$.
Then we use the estimate of $\varrho_t'$ given in \append{complete-non-bipartite} to bound $\varrho_t$ in the original graph $G$.

Note that for any two matchings $I$ and $F$, the canonical path from $I$ to $F$ remains identical in both $G$ and $G'$. 

In what follows, we employ graph-theoretic techniques to bound the number of perfect matchings in subgraphs induced by matchings. 
\begin{proposition}
    Let $G_E$ denote the induced graph of a matching $M$ of size $n'\geq \xi$, then the number of perfect matchings of is at least 
    \begin{align}
        \frac{(2n'-1-\xi)!!}{(\xi+1)!!}.
     \end{align} 
\end{proposition}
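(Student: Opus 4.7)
The plan is to prove the statement by induction on $n'$, using the standard partition of perfect matchings by the edge incident to a fixed vertex: for any $v \in V(G_E)$,
\[
p(G_E) = \sum_{u \in N_{G_E}(v)} p(G_E - \{v,u\}),
\]
where $p(\cdot)$ denotes the number of perfect matchings and $N_{G_E}(v)$ is the neighborhood of $v$ in $G_E$. A preliminary observation is that the non-adjacency property is inherited by induced subgraphs: every vertex of $G_E$, as well as of any deletion $G_E - \{v,u\}$, has at most $\xi$ non-neighbors inside that subgraph, so the minimum degree of $G_E$ is at least $2n' - 1 - \xi$.

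For the base cases $n' \in \{\xi, \xi+1\}$ the claimed bound is at most $1$: one checks directly that $(\xi-1)!!/(\xi+1)!! = 1/(\xi+1)$ when $n' = \xi$ and $(\xi+1)!!/(\xi+1)!! = 1$ when $n' = \xi+1$. Since $M$ itself is a perfect matching of $G_E$, the trivial bound $p(G_E) \geq 1$ suffices. For the inductive step with $n' \geq \xi + 2$, fix any $v \in V(G_E)$ and any neighbor $u \in N_{G_E}(v)$. The graph $G_E - \{v,u\}$ has $2(n'-1)$ vertices, inherits the non-adjacency property, and has minimum degree at least $2n' - 3 - \xi$; the hypothesis $n' \geq \xi + 2$ is precisely what forces this lower bound to reach $n' - 1 = |V(G_E - \{v,u\})|/2$, so Dirac's theorem produces a Hamilton cycle and hence a perfect matching in $G_E - \{v,u\}$. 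Consequently $G_E - \{v,u\}$ is itself the induced graph of some matching of size $n' - 1 \geq \xi$, and the inductive hypothesis yields $p(G_E - \{v,u\}) \geq (2n'-3-\xi)!!/(\xi+1)!!$. Summing over the at least $2n' - 1 - \xi$ neighbors of $v$ gives
\[
p(G_E) \geq (2n' - 1 - \xi) \cdot \frac{(2n'-3-\xi)!!}{(\xi+1)!!} = \frac{(2n'-1-\xi)!!}{(\xi+1)!!},
\]
completing the induction.

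The main obstacle is ensuring that the inductive hypothesis can actually be invoked on each truncated graph $G_E - \{v,u\}$, since the hypothesis tacitly requires that the smaller graph be itself the induced graph of some matching, i.e.\ that it admit at least one perfect matching. A purely algebraic recursion would collapse if some deletion produced a subgraph with no perfect matching, which is exactly what forces the base case to be split off at $n' \in \{\xi, \xi+1\}$: below the Dirac threshold, existence of a perfect matching after deletion is not automatic. Once $n' \geq \xi + 2$ the minimum-degree guarantee carries uniformly across all deletions, and the recursion telescopes cleanly into the claimed double factorial.
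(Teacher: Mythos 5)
Your proof is correct and is essentially the paper's argument in inductive packaging: both decompose the perfect matchings of $G_E$ by the partner of a fixed vertex, lower-bound the number of choices at each stage by the minimum degree $2n'-1-\xi$, and invoke Ore's/Dirac's theorem to guarantee that the residual graph still admits a perfect matching, so the product telescopes to $(2n'-1-\xi)!!/(\xi+1)!!$. The paper unrolls this as a greedy forward construction stopping at a $(2\xi+2)$-vertex remainder, whereas your recursion via $p(G_E)=\sum_{u\in N(v)}p(G_E-\{v,u\})$ reaches the same bound while making the multiplicity accounting and the base cases slightly more explicit.
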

\begin{proof}
    Our analysis relies primarily on the Ore's Theorem~\cite{ore1960note}: 
\begin{lemma}
    Let $G=(V,E)$ be a simple graph. If for every pair of distinct non-adjacent vertices $u$ and $v$, the sum of their degrees satisfies $\deg(u) + \deg(v) \geq |V|$, then $G$ contains a Hamilton cycle.
\end{lemma}
    Ore's theorem shows that if the minimum degree of a graph with even vertices is at least $|V|/2$, then the graph contains a Hamilton cycle, which is also a perfect matching.

    Note that the minimum degree $d'$ of the subgraph $G_E$ satisfies $d'\geq 2n' - 1 - \xi$. We start from a vertex $u_1$, which has $2n' - 1 - \xi$ possible pairing choices with other vertices. For the choice of pair $(u_1,v_1)$, we delete the edge $(u_1,v_1)$ and all edges incident to $u_1$ and $v_1$. The remaining graph has $2n' - 2 - \xi$ vertices, and the minimum degree of this graph is at least $2n' - 3 - \xi$. We can repeat this process and find another pair $(u_2,v_2)$, and delete the edge $(u_2,v_2)$ and all edges incident to $u_2$ and $v_2$. This procedure can be repeated until we find $n'-\xi - 1$ pairs of vertices, and then the remaining graph has $2\xi+1$ vertices, with the minimum degree of the current graph being at least $2n' - 1 - \xi - 2(n'-\xi - 1) = \xi + 1$. Thus again by Ore's theorem, the remaining graph contains a Hamilton cycle. Finally, we choose this Hamilton cycle and find a perfect matching of $G_E$. The number of choices is
    \begin{align}
        (2n'-\xi-1)\cdot(2n'-\xi-3)\cdots (\xi+3) = \frac{(2n'-1-\xi)!!}{(\xi+1)!!}.
    \end{align}
\end{proof}

Now, we can bound $\Haf(M)$ by $\Haf'(M)$ for any matching $M$:
\begin{align}\label{eq:ratio-2}
1\geq \frac{\Haf(M)}{\Haf'(M)}\geq \frac{{(2n'-1-\xi)!!}/{(\xi+1)!!}}{(2n'-1)!!}\geq \left((2n')^{\frac{\xi+1}{2}}\cdot (\xi+1)!!\right)^{-1}
\geq \frac{1}{(2n)^\xi}.
\end{align}

Therefore, we have
\begin{align}
    \varrho  =& \max_{t=(M,M')}\bigg\{\frac{\sum_{(I,F,i)\in cp(t)} p^{(i)}(I,F)\cdot \lambda^{2|I|}\Haf(I)\lambda^{2|F|}\Haf(F) |\gamma_{IF}^{(i)}|}{\lambda^{2|\tilde{M}|}\Haf(\tilde{M})\cdot \sum_{X\text{ is a matching}}\lambda^{2|X|} \Haf(X)} \cdot |E|\frac{1+\lambda^2}{{\lambda}^2}\bigg\}\\
    \leq &\max_{t=(M,M')}\bigg\{\frac{\sum_{(I,F,i)\in cp(t)} p^{(i)}(I,F)\cdot \lambda^{2|I|}\Haf'(I)\lambda^{2|F|}\Haf'(F) |\gamma_{IF}^{(i)}|}{\lambda^{2|\tilde{M}|}\Haf'(\tilde{M})(2n)^{-\xi}\cdot \sum_{X\text{ is a matching}}\lambda^{2|X|} \Haf'(X)(2n)^{-\xi}} \cdot |E'|\frac{1+\lambda^2}{{\lambda}^2}\bigg\}\\
    \leq & {(2n)}^{2\xi} \varrho'.
\end{align}

\begin{proof}[Proof of \thm{non-bipartite-graph}]
    By the analysis of \lem{poly-mix-non-bipartite}, we have the congestion $\varrho'$ of double-loop Glauber on complete graph $G_{2n}$ is bounded by $\varrho' = 32 n^5\cdot \frac{(\lambda^2+1)(12\lambda^2-1)}{6\lambda^2 - 1}$. 
    Then we have the congestion $\varrho$ of double-loop Glauber on bipartite graph $G$ is bounded by $\varrho \leq {2n}^{2\xi} \cdot \varrho' = 32\cdot 2^\xi \cdot n^{5+2\xi}\cdot \frac{(\lambda^2+1)(12\lambda^2-1)}{6\lambda^2 - 1}$.
    Thus by \thm{canonical-mixing-multiple}, the mixing time of double-loop Glauber on complete graph $G_{2n}$ is less than $ 64 \cdot 2^\xi\cdot n^{5+2\xi}\cdot \frac{(\lambda^2+1)(12\lambda^2-1)}{6\lambda^2 - 1}\cdot \left(\ln{\epsilon^{-1}}+ n \ln |{\lambda}| + 3n\ln{(2n+1)}\right)$.
\end{proof}

Finally, we complete the proof of \thm{main-results-non-bigraph} by extending the analysis of \thm{poly-mix-non-bipartite} to the case of dense graphs.
\begin{proof}[Proof of \thm{main-results-non-bigraph}]
Take
\begin{align}
\eta_{\frac{\eps}{2}} = \frac{\epsilon}{2}\left(64 \cdot 2^\xi\cdot n^{5+2\xi}\cdot \frac{(\lambda^2+1)(12\lambda^2-1)}{6\lambda^2 - 1}\cdot \left(\ln{\epsilon^{-1}}+ n \ln |{\lambda}| + 3n\ln{(2n+1)}\right)\right)^{-1}
\end{align}
in \lem{bigraph-uniform}. Then the failure probability of each inner layer sampling step can be bounded within $\eta_{\frac{\eps}{2}}$, and need $\tilde{O}(n^{14}(\log{\eps^{-1}})^2)$ time.
Again by \lem{poly-mix-non-bipartite}, after
\begin{align}
64 \cdot 2^\xi\cdot n^{5+2\xi}\cdot \frac{(\lambda^2+1)(12\lambda^2-1)}{6\lambda^2 - 1}\cdot \left(\ln{\epsilon^{-1}}+ n \ln |{\lambda}| + 3n\ln{(2n+1)}\right)
\end{align}
iterations, the total variance can be bounded by $\epsilon$. Therefore, we give a polynomial-time upper bound $\tilde{O}{(n^{(20+2\xi)}(\log\eps^{-1})^3)}$ for the mixing time of the double-loop Glauber dynamics on the given dense graph.
\end{proof}

\subsection{Technical Barrier for Non-dense Graphs} \label{app:non-dense-graph}

In this subsection, we highlight several technical challenges that arise when extending our approach to non-dense graphs.

For the proof for dense graphs in \append{dense-bipartite} and \append{dense-non-bipartite}, the key insight of our proof relies on bounding the Hafnian of matchings in the given dense graph via the Hafnian of matchings in corresponding complete graphs, as demonstrated in \eq{ratio-1} and \eq{ratio-2}. For non-dense graphs, unfortunately the ratio between the Hafnian in the current graph and that in complete graphs can become exponentially large, particularly for sparse graphs. Consequently, directly applying results from complete graphs fails to yield polynomial mixing time.

This observation exposes a fundamental limitation of the canonical path method in our setting. The Hafnian intrinsically corresponds to the weight of matchings in the stationary distribution. While for dense graphs we effectively bound these weights using  Hafnian in corresponding complete graphs, for non-dense graphs, even matchings with a large number of edges can have very small Hafnian. Consequently, such low-weight matchings are likely to become the bottleneck in the canonical path construction, inducing large congestion.

Here we consider a potential hard instance shown in \sfig{hard-instance}\blue{a}. The graph consists of several squares arranged in a cycle. Label the four vertices of each square clockwise, and connect the first vertex of each square to its third vertex with an edge, and connect the second vertex of each square to the fourth vertex of the previous square with an edge.

\begin{figure*}[!htbp]
    \centering
    \includegraphics[width=.9\linewidth]{./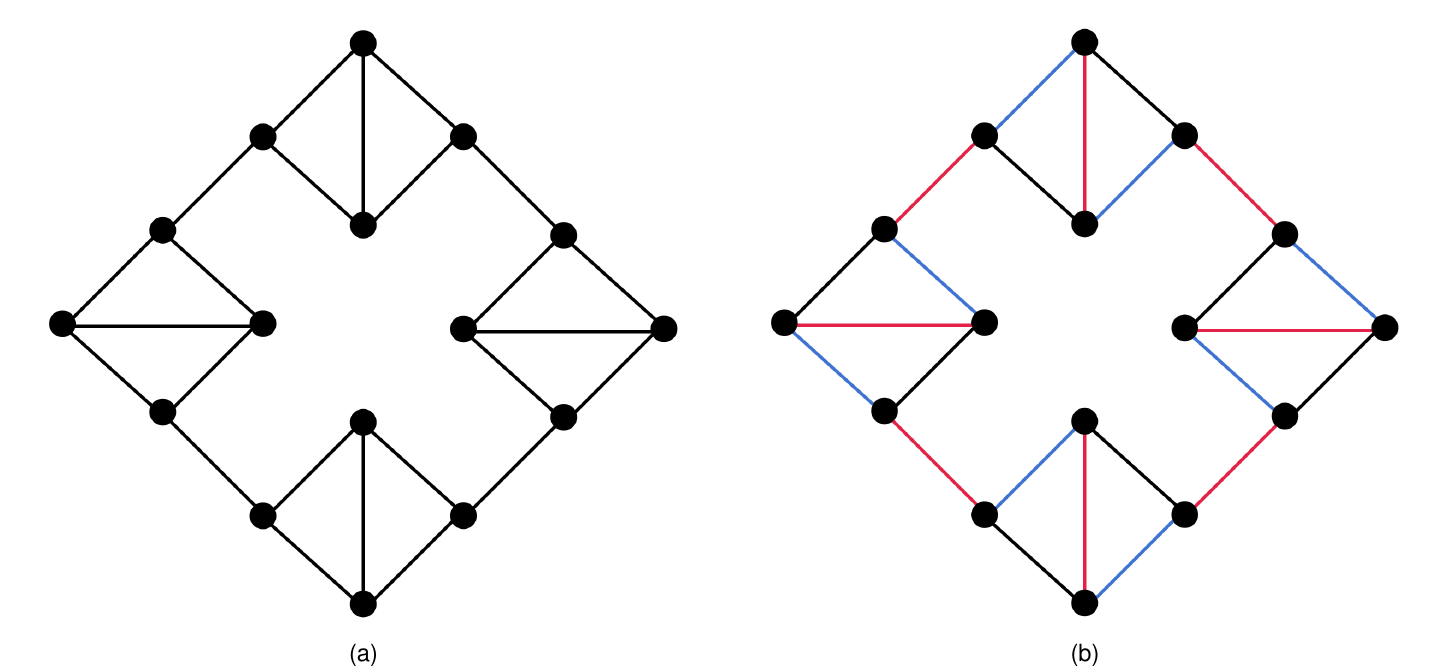}
    \caption{A hard instance for non-dense graphs and its two different types of perfect matchings. The graph in the left figure consists of several squares arranged in a cycle, where the first vertex of each square is connected to its third vertex with an edge, and the second vertex of each square is connected to the fourth vertex of the previous square by an edge. The red edges in the right figure form a perfect matching $M_0$ in set $A$ (which is also the unique one); the blue edges form a perfect matching $M_1$ in set $B$. Notice that deleting any edge in $M_0$ will result in a near perfect matching with Hafnian $1$.}
    \label{sfig:hard-instance}
\end{figure*}

Notice that the perfect matchings of this graph can be divided into two categories: those that include the edge connecting different squares (denoted as set $A$), and those that only include edges within each square (denoted as set $B$). The first category of perfect matchings has only one perfect matching $M_0$, as is shown in \sfig{hard-instance}\blue{b}. The second category of perfect matchings has $2^{n}$ perfect matchings, where $n$ is the number of squares in the graph. This is because each square has two choices for selecting edges, and there are $n$ squares in total. Therefore the Hafnian of any perfect matching is $1+2^n$. Now consider the canonical paths from $B$ to $A$. The final step of these canonical paths must be adding one edge from a near perfect matching and get $M_0$, which have $2n$ choices. A significant insight is that these $2n$ near perfect matchings are all very low-weight matchings, with Hafnian being $1$. Then we can consider the sum of congestion of these transitions:

\begin{align}
    \sum_{e\in M_0} \rho_{t=(M_0\backslash \{e\}, M_0)} \geq &\sum_{e\in M_0} \frac{1}{\pi(M_0\backslash\{e\})P(M_0\backslash \{e\},M_0)} \cdot \sum_{\substack{(I,F,i)\in cp(t) \\ 
    I\in B, F\in A}} p^{(i)}(I,F)\cdot \pi(I)\pi(F) |\gamma_{IF}^{(i)}|\\
    \geq & \frac{1}{\pi(M_0\backslash{\{e_0\}})\cdot P(M_0\backslash \{e_0\},M_0)} \cdot \sum_{I\in B}\pi(I)\pi(M_0)\cdot n \\
    \geq & \frac{\pi(M_0)}{\pi(M_0\backslash{\{e_0\}})}\cdot \frac{1}{\frac{1}{6n}\cdot \frac{\lambda^2}{1+\lambda^2}}\cdot n\cdot \sum_{I\in B} \pi(I),
\end{align}

where $e_0$ is one edge in $M_0$. By taking $\lambda = n$ we ensure that the stationary distribution's weight is concentrated on perfect matchings, which means that the last term in the above expression is at the polynomial level. Since $\frac{\pi(M_0)}{\pi(M_0\backslash{\{e_0\}})} = {2^n}$, the maximal congestion must be exponentially large, and the canonical path method fails.

Furthermore, we can directly give a lower bound of the mixing time of this hard instance by conductance.

\begin{lemma}[Theorem 7.4,~\cite{levin2017markov}] For a given ergodic, aperiodic, and irreducible Markov chain, let $\Phi$ denote the conductance defined in \eq{conductance}, we have

\begin{align}
    t_{\mix} = t_{\mix} \left(\frac{1}{4}\right) \geq \frac{1}{4\Phi}.
\end{align}
\end{lemma}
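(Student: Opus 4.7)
The plan is the standard bottleneck argument. First, I would fix a set $S$ achieving the conductance minimum in~\eq{conductance}, so that $\pi(S) \leq 1/2$ and $Q(S,S^c)/\pi(S) = \Phi$, writing $Q(x,y) := \pi(x)P(x,y)$. I then start the chain from the restricted stationary distribution $\nu_0$ defined by $\nu_0(x) := \pi(x)/\pi(S)$ for $x\in S$ and $\nu_0(x) := 0$ otherwise; this puts all mass inside $S$ while $\pi$ places mass at least $1/2$ on $S^c$. The goal is to show that the mass $\nu_t$ assigns to $S^c$ grows by at most $\Phi$ per step, so that for $t < 1/(4\Phi)$ the distance $\|\nu_t-\pi\|_{\TV}$ still exceeds $1/4$.

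The key per-step estimate is obtained by writing
\[
\nu_{t+1}(S^c) - \nu_t(S^c) = \sum_{x\in S}\nu_t(x)P(x,S^c) - \sum_{x\in S^c}\nu_t(x)P(x,S) \leq \sum_{x\in S}\nu_t(x)P(x,S^c),
\]
and then using the pointwise invariant $\nu_t(x)/\pi(x) \leq 1/\pi(S)$ to replace $\nu_t(x)$ by $\pi(x)/\pi(S)$; the remaining sum then equals $Q(S,S^c)/\pi(S) = \Phi$. To establish the invariant, I would examine the density ratio $g_t(x) := \nu_t(x)/\pi(x)$: reversibility (detailed balance, which holds for all MCs considered in the paper) yields $\nu_{t+1}(x) = \sum_y \nu_t(y)P(y,x) = \pi(x)(Pg_t)(x)$, so $g_{t+1} = Pg_t$. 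Since $P$ is a stochastic averaging operator, $\max_x g_t$ is non-increasing in $t$ and hence bounded by its initial value $1/\pi(S)$.

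Iterating the per-step estimate gives $\nu_t(S^c) \leq t\Phi$, so $\|\nu_t - \pi\|_{\TV} \geq \pi(S^c) - \nu_t(S^c) \geq 1/2 - t\Phi$, which strictly exceeds $1/4$ whenever $t < 1/(4\Phi)$. This forces $t_{\mix} \geq 1/(4\Phi)$. The only subtle point is the maximum-principle step used to control the density ratio, which genuinely relies on reversibility of $P$; a non-reversible proof would need a different device (e.g., coupling or spectral-profile arguments), but this is not needed here since every chain analyzed in the paper is reversible by construction.
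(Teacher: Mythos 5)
Your proposal is correct and is precisely the standard bottleneck-ratio argument of Levin--Peres (Theorem 7.4), which the paper cites without reproducing a proof: restrict $\pi$ to the optimal set $S$, use the maximum principle on the density $g_t=\nu_t/\pi$ to keep $\nu_t(x)\leq\pi(x)/\pi(S)$, and conclude that mass leaks into $S^c$ at rate at most $\Phi$ per step, so $\|\nu_t-\pi\|_{\TV}\geq \tfrac12 - t\Phi$. One minor remark: the maximum-principle step does not genuinely require reversibility, since $g_{t+1}(y)=\sum_x \hat P(y,x)g_t(x)$ with the time-reversal kernel $\hat P(y,x)=\pi(x)P(x,y)/\pi(y)$, which is stochastic for any chain with stationary distribution $\pi$; but as you note, all chains in the paper are reversible, so your version suffices.
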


The conductance of the set $S$ is defined to be
\begin{align}
    \Phi(S) :=  \frac{\sum_{i\in S, j\not\in S}p_{ij}\pi_i}{\sum_{i\in S}\pi_i},
\end{align}
and we have
\begin{align}
    \Phi = \min_{S:0< \sum_{i\in S} \pi_i\leq \frac{1}{2}} \Phi(S).
\end{align}
Therefore, to establish a lower bound on the mixing time, we only need to find a set $S$ such that its conductance is quite small. Just take $S=A$.
\begin{align}
    \Phi(A) =& \frac{1}{\pi(M_0)}\cdot \sum_{e'\in M_0} \pi(M_0)P(M_0,M_0\backslash\{e'\})\\
    =& \sum_{e'\in M_0} P(M_0,M_0\backslash\{e'\})\\
    =&\sum_{e'\in M_0} \Pr[\text{the chosen edge is $e'$ in \lin{double-loop-4} in \algo{double-loop}}]\cdot \Pr[\text{$e\in E_t$ in \lin{double-loop-8} in \algo{double-loop}}]\cdot \frac{1}{1+\lambda^2}\\
    =& 2n \cdot \frac{1}{6n} \cdot \frac{1}{1+2^n}\cdot \frac{1}{1+\lambda^2}.
\end{align}

This means that for the given hard instance, the double-loop Markov chain satisfies
\begin{align}
    t_{\mix} \geq \frac{1}{4\Phi} \geq \frac{1}{4\Phi(A)} = 3(1+\lambda^2)\cdot (1+2^n),
\end{align}
which is exponentially large.

For non-dense non-bipartite graphs, there is an another additional technical challenge: the task of uniformly sampling perfect matchings from subgraphs in the inner loop is quite challenging for general non-bipartite graphs.
  
%================================================================
%%%%%%%%%%%%%%%%%%%%%%%%%%%%%%%%%%%%%%%%%%%%%%%%%%%%%%%%%%%%%%%%%%
\section{Double-loop Glauber Dynamics on Weighted Graphs}\label{app:double-loop-weighted}
The original double-loop Glauber dynamics is designed for unweighted graphs, and this section extends the double-loop Glauber dynamics to weighted graphs.

First, we provide the formal definition of Hafnian on weighted graphs. 
\begin{definition}
    For a graph $G=(V,E)$, with $V=\{u_1, u_2,\ldots, u_{2n}\}$ and a weight $w_e$ assigned to each edge $e$, let matrix $A$ denote the weighted adjacency matrix of $G$ with
\begin{align}
    A_{i,j} = \begin{cases} 
    w_{(i,j)} & \text{ if } (u_i,u_j)\in E;\\
    0 & \text{ otherwise.}
    \end{cases}
\end{align}
Then the Hafnian of $G$ is defined as the Hafnian of $A$, as shown in \eq{Haf}:
    \begin{align}
    \Haf(G) :=\operatorname{Haf}(A)=\frac{1}{2^n n!} \sum_{\sigma \in \mathcal{S}_{2 n}} \prod_{i=1}^n A_{\sigma(2 i-1), \sigma(2 i)}.
    \end{align}
The above definition of Hafnian on weighted graphs can be rewritten as
\begin{align}
    \Haf(G) = \sum_{M\text{ is a perfect matching of } G} \prod_{e\in M} w_e.
\end{align}
\end{definition}

For convenience, let $w(M) = \prod_{e\in M} w_e$ denote the weight of matching $M$.
Same as unweighted graphs, the Hafnian of a vertex set in $G$ is defined as the Hafnian of the subgraph induced by the vertex set. Our goal is to sample a vertex set $S$ from $G$ such that
\begin{align}
    \Pr[S] \propto \lambda^{|S|}\Haf^2(S).
\end{align}

Without loss of generality, we can assume that $w_e\geq 1$ for $\forall e \in E$. Otherwise, suppose $w_* = \min_e\{w_e\}\leq 1$, then we can define a new weight $w'_e = w_e/w_*$ for all edges $e\in E$, and then take $\lambda_*=\lambda\cdot{w_*}$. The Hafnian of a vertex set with new weights is $\Haf_*(S) = w_*^{-\frac{|S|}{2}} \cdot \Haf(G')$, which means $\lambda^{|S|}\Haf^2(S) = \lambda_*^{|S|}\Haf_*^2(S)$.

Our double-loop Glauber dynamics on weighted graphs is similar to the unweighted case, but with a few modifications. In particular, the weight of a matching is defined as the product of the weights of its edges, and the probability of sampling $M$ is proportional to the weight of $M$, denoted as $w(M)$. The algorithm is formally presented in \algo{double-loop-weighted}. 

\begin{algorithm}[h]
    \SetAlgoLined 
	\KwIn{A graph $G=(V,E)$, with a weight $w_e\geq 1$ assigned to each edge $e$.}
	\KwOut{A sample of vertex set $S$ of $G$ such that
        $\Pr[S] \propto \lambda^{|S|}\Haf^2(S)$. }

        \vspace{1em}
        Initialize $X_0$ as an arbitrary matching in $G$;\
        
        Initialize $t\leftarrow 0$, set the number of iterations $T$;\
	
	\lWhile{$t < T$}\
    {
        \Indp Choose a uniformly random edge $e$ from $E$;\

        \lIf{$e$ and $X_t$ form a new matching}\
        {\Indp Set $X_{t+1} = X_t\cup\{e\}$ with probability $\frac{\lambda^2}{1+\lambda^2}$ and otherwise set $X_{t+1} = X_t$;\;}
        
        \lElse 
        {\lIf{$e$ is in $X_t$}\
         {\Indp Sample a perfect matching $E_i$ in the subgraph induced by $X_t$ with probability $\Pr[E_i]\propto w(E_i)$. If $e\not\in E_i$, set $X_{t+1} = X_t$. If $e\in E_i$, set $X_{t+1} = X_t \backslash \{e\}$ with probability $\frac{1}{1+\lambda^2}\cdot \frac{1}{w_e^2}$ and otherwise set $X_{t+1} = X_t$;\;\label{lin:weighted-8}
         \Indm\lElse\
          {\quad \quad Set $X_{t+1} =X_t$;}}}
        $t \leftarrow t+1$;\;
        \caption{Double-loop Glauber Dynamics for Weighted Graphs}
    \label{algo:double-loop-weighted}
 }
 Output the vertex set in matching $X_T$;\;
\end{algorithm}

The main difference between \algo{double-loop-weighted} and \algo{double-loop} is that we need to sample a perfect matching in the subgraph induced by $X_t$ with probability $\Pr[E_i]\propto w(E_i)$, and the transition probability of sampling $X_{t+1}$ from $X_t$ is proportional to $\frac{1}{w_e^2}$. 

\begin{lemma}
    The stationary distribution of the double-loop Glauber dynamics on weighted graphs is given by
    \begin{align}
        \Pr[S] \propto \lambda^{|S|}\Haf^2(S).
    \end{align}
\end{lemma}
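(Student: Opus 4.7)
The plan is to mirror the detailed-balance argument used for the unweighted double-loop Glauber dynamics in the main text: first determine the stationary distribution over matchings, then marginalize over matchings sharing a vertex set. The chain's set of neighboring state pairs is identical to that of \algo{double-loop}, so it remains ergodic, aperiodic, and irreducible, and detailed balance will pin down a unique stationary distribution.

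The core step is to compute the transition ratio between two matchings differing by a single edge $e$. The forward rate is transparent, $\Pr[X\to X\cup\{e\}]=\frac{1}{|E|}\cdot\frac{\lambda^2}{1+\lambda^2}$. For the reverse, I would use that perfect matchings of $G_{X\cup\{e\}}$ containing $e$ correspond, up to multiplication by $w_e$, to perfect matchings of $G_X$, so the weighted inner sampler picks $e$ with probability
\begin{align}
\Pr[e\in E_t] \;=\; \frac{w_e\,\Haf(G_X)}{\Haf(G_{X\cup\{e\}})}.
\end{align}
Combining this with the $\frac{1}{(1+\lambda^2)w_e^2}$ acceptance factor in \lin{weighted-8} and substituting into detailed balance yields
\begin{align}
\frac{\pi(X\cup\{e\})}{\pi(X)} \;=\; \lambda^2\,w_e\,\frac{\Haf(G_{X\cup\{e\}})}{\Haf(G_X)},
\end{align}
which telescopes consistently to $\pi(X)\propto\lambda^{2|X|}\,w(X)\,\Haf(G_X)$.

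The last step is to marginalize. The matchings with $V(X)=S$ are exactly the perfect matchings of $G_S$, all sharing $G_X=G_S$ and $|X|=|S|/2$, so summing their stationary weights collapses the second factor into $\sum_{X}w(X)=\Haf(S)$, giving the claimed $\Pr[S]\propto\lambda^{|S|}\Haf^2(S)$. The only real subtlety I anticipate is the weight bookkeeping: the weighted inner sampler injects one factor of $w_e$ on each edge flip and the $1/w_e^2$ factor in the acceptance removes two, so exactly one $w_e$ survives. This is precisely what shifts the stationary matching weight from $\Haf(G_X)$ alone to $w(X)\,\Haf(G_X)$, which in turn is what makes marginalization land on $\Haf^2$ rather than on $\Haf$ or on $w(X)\cdot\Haf(G_X)$, and it is the reason the factor $1/(1+\lambda^2)$ in the unweighted case must be replaced by $1/((1+\lambda^2)w_e^2)$ here.
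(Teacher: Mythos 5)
Your proposal is correct and follows essentially the same route as the paper: the same forward rate, the same computation $\Pr[e\in E_t]=w_e\,\Haf(G_X)/\Haf(G_{X\cup\{e\}})$ for the weighted inner sampler, the same detailed-balance identification $\pi(X)\propto\lambda^{2|X|}w(X)\Haf(G_X)$, and the same marginalization over perfect matchings of $G_S$ to obtain $\Haf^2(S)$. Your closing remark on the weight bookkeeping (one $w_e$ injected by the sampler, two removed by the $1/w_e^2$ acceptance) is exactly the mechanism the paper's proof relies on.
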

\begin{proof}
    For two matchings $X$ and $X\cup\{e\}$, we can calculate the transition probability as follows:
\begin{align}
    \Pr[X\to X\cup\{e\}] = \frac{\lambda^2}{1+\lambda^2} \cdot \frac{1}{|E|}.
\end{align}
If $X_t = X\cup\{e\}$, the probability $\Pr[e\in E_i]$ in \lin{weighted-8} in \algo{double-loop-weighted} can be rewritten as
\begin{align}
    \Pr[e\in E_i] = \frac{\sum_{E'\text{ is a perfect matching of }G_{X} }w(E')\cdot w_e}{\sum_{E'\text{ is a perfect matching of }G_{X\cup\{e\}}} w(E')} = w_e\cdot \frac{\Haf(X)}{\Haf(X\cup\{e\})}.
\end{align}
Thus we have
\begin{align}
    \Pr[X\cup\{e\} \to X] = \frac{1}{|E|} \cdot \frac{1}{w^2_e}\frac{1}{1+\lambda^2}\cdot w_e\frac{\Haf(X)}{\Haf(X\cup\{e\})} ,
\end{align}
where the third term comes from the probability of edge $e$ being in $E_i$. As a result, we can directly verify that the stationary distribution of matchings on weighted graphs is given by
\begin{align}
    \pi(X)\propto \lambda^{2|E|}\cdot \Haf(G_X) \cdot w(X),
\end{align}
and the distribution on vertex sets is 
\begin{align}
    \pi(S)\propto \lambda^{|S|}\cdot \Haf(S)\cdot \sum_{\text{$X$ is a perfect matching of $G_S$}} w(X) = \lambda^{|S|}\cdot \Haf^2(S).
\end{align}
\end{proof}

Finally, we give a Markov chain to satisfy the desired sampling of perfect matchings in \lin{weighted-8} in \algo{double-loop-weighted}. The method is based on the Markov chain for uniform sampling of perfect matchings by~\cite{jerrum1989approximating}, which is discussed in \append{uniform-sampling-pm}. The modified Markov chain is given as follows:

Let $\mathcal{N}$ denote the set of all perfect matchings and near-perfect matchings of $G$.
In any state $M\in\mathcal{N}$, choose an edge $e=(u,v)\in E$ uniformly at random and then
\begin{itemize}
    \item If $M\in M_n(G)$ and $e\in M$, then move to state $M' = M \backslash \{e\}$ with probability $\frac{1}{w_e}$;
    \item If $M\in M_{n-1}(G)$ and $u,v$ are unmatched in $M$, then move to state $M' = M\cup \{e\}$ with probability $1$;
    \item If $M\in M_{n-1}(G)$, $u$ is matched to $z$ in $M$ and $v$ is unmatched in $M$, then move to state $M' = M \cup \{e\} \backslash \{(z,u)\}$ with probability $\min\{1,\frac{w_e}{w_{(z,u)}}\}$; symmetrically, if $v$ is matched to $z$ and $u$ is unmatched, then move to state $M' = M \cup \{e\} \backslash \{(z,v)\}$ with probability $\min\{1,\frac{w_e}{w_{(z,v)}}\}$;
    \item In all other cases, do nothing.
\end{itemize}
\begin{lemma}
    The stationary distribution of the above Markov chain is given by
    \begin{align}
        \Pr[M]\propto w(M).
    \end{align}
\end{lemma}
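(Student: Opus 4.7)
The plan is to verify the detailed balance condition $\pi(M) P(M,M') = \pi(M') P(M',M)$ for every pair of states $M, M' \in \mathcal{N}$ connected by a single non-trivial transition, where $\pi(M) \propto w(M)$. Together with irreducibility and aperiodicity of the chain on $\mathcal{N}$, this identifies $\pi$ as the unique stationary distribution.

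First, I would enumerate the three non-trivial transition types and compute both $P(M,M')$ and the reverse $P(M',M)$, carefully accounting for the $1/|E|$ probability of selecting the relevant edge and the Metropolis-style acceptance factor. In the add/remove case, where $M$ is a perfect matching and $M' = M \setminus \{e\}$, the forward transition has probability $\tfrac{1}{|E|}\cdot\tfrac{1}{w_e}$ while the reverse picks the same edge $e$ (whose endpoints are both exposed in $M'$) and is accepted with probability $1$; using $w(M) = w(M')\cdot w_e$ yields detailed balance immediately. In the exchange case, where $M' = M \cup \{(u,v)\} \setminus \{(z,u)\}$ with $u$ matched to $z$ in $M$ and $v$ exposed, I would identify the reverse transition as selecting the edge $(z,u)$ from $M'$: in $M'$ the vertex $u$ is matched to $v$ and $z$ is exposed, so the same exchange rule fires back to $M$ with acceptance probability $\min\{1, w_{(z,u)}/w_e\}$. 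Since $w(M')/w(M) = w_e/w_{(z,u)}$, the Metropolis ratio of $\min\{1, w_e/w_{(z,u)}\}$ to $\min\{1, w_{(z,u)}/w_e\}$ matches this weight ratio in both branches ($w_e \geq w_{(z,u)}$ and $w_e < w_{(z,u)}$), so detailed balance holds.

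Second, I would argue irreducibility and aperiodicity. Irreducibility follows by the same connectivity argument as in the unweighted chain of \cite{jerrum1989approximating}: any two states in $\mathcal{N}$ are connected by a sequence of add, remove, and exchange moves, and the weights $w_e \geq 1$ only affect acceptance probabilities, not which transitions have positive probability. Aperiodicity is immediate because many edge choices produce no valid move (for example, selecting an edge whose endpoints are both matched to other vertices in a near-perfect matching), so the chain has self-loops at every state.

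The argument is essentially bookkeeping; there is no substantive obstacle. The one place that requires a small amount of care is the exchange transition, where I must correctly pair the forward move with its reverse so that the acceptance probabilities are compared on the same weight ratio. Once the correspondence is set up, the Metropolis filter $\min\{1, w_e/w_{(z,u)}\}$ is designed precisely so that detailed balance holds regardless of which of $w_e$ or $w_{(z,u)}$ is larger, and the claim $\pi(M) \propto w(M)$ follows.
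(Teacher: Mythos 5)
Your proof is correct and follows the same route as the paper: verifying detailed balance for the removal/addition and exchange transitions with respect to $\pi(M)\propto w(M)$, together with ergodicity of the chain on $\mathcal{N}$. In fact your bookkeeping for the perfect-to-near-perfect case ($P(M,M\setminus\{e\})=\tfrac{1}{|E|}\cdot\tfrac{1}{w_e}$ with the reverse addition accepted with probability $1$) is the one consistent with the algorithm as stated and is what makes detailed balance hold, whereas the paper's own displayed probabilities for that transition appear to have the forward and reverse directions inadvertently swapped.
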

\begin{proof}
    For two near-perfect matchings $M$ and $M'$ such that $M'= M\cup \{(u,v)\} \backslash \{(z,u)\}$, the transition probability is given by
\begin{align}
    \Pr[M\to M'] = \frac{1}{|E|}\cdot \min\left\{1,\frac{w_{(u,v)}}{w_{(z,u)}}\right\};\quad \Pr[M'\to M] = \frac{1}{|E|} \cdot \min\left\{1,\frac{w_{(z,u)}}{w_{(u,v)}}\right\}.
\end{align}
For a perfect matching $M$ and a near-perfect matching $M'$ such that $M'= M\backslash \{(u,v)\}$, the transition probability is given by
\begin{align}
    \Pr[M\to M'] = \frac{1}{|E|};\quad \Pr[M'\to M] = \frac{1}{|E|}\cdot \frac{1}{w_{(u,v)}}.
\end{align}
Therefore, the stationary distribution of the Markov chain is given by $\pi(M) \propto \prod_{e\in M}{w_e} = w(M).$
\end{proof}

%%%%%%%%%%%%%%%%%%%%%%%%%%%%%%%%%%%%%%%%%%%%%%%%%%%%%%%%%%%%%%%%%%%

\section{Algorithms with Pseudocodes}\label{app:algorithm}
This section introduces the classical algorithms and their variants enhanced by Glauber dynamics we used for numerical experiments in \sec{experiments}. In the actual code execution, we use the method in \append{uniform-sampling-dense} to uniformly sample perfect matching. During the iterations of the Glauber dynamics, we create a post-selection list. As long as the size of current edge set equals $k/2$, we append the edge set to the list. Each time we generate a vertex set of size $k$ according to the Glauber dynamics, we choose the last element in the list as the selected matching. 

Similarly, each time we generate a vertex set of size $k$ according to the quantum inspired classical algorithm by~\citet{Oh_2024}, we sample $k/2$ columns of $V$ with replacement and repeat until the row indexes of the nonzero elements form a collision-free set of size $k$.

We first present the pseudocode for the original random search (RS) in \algo{random-search}. In each iteration, we randomly pick $k$ vertices from all $n$ vertices of the graph. If the objective function (Hafnian or density) corresponding to the subgraph is larger than the former recorded best value, we substitute the recorded best with the new submatrix.

\vspace{3mm}
\begin{algorithm}[H]
	\KwIn{
    
    $\boldsymbol{A}$ : Input matrix of dimension $n\times n$;
    
$k$ : Dimension of submatrix;

$f$ : Objective function;

$N:$ Number of iterations;}
	\KwOut{The submatrix $\boldsymbol{A}_B$ with the largest $f(\boldsymbol{A}_B)$; the function value $f(\boldsymbol{A}_B)$}
\vspace{1em}
        Initialize $i=0$, $Best=0$;
        
        Initialize $\boldsymbol{A}_B=None$;

	\lWhile{$i < N$}\
    {
        \Indp Choose $\boldsymbol{A}_i$, a $k$-dimensional submatrix of $\boldsymbol{A}$ uniformly at random; 

        Calculate $f(\boldsymbol{A}_i)$;
        
        \lIf{$f(\boldsymbol{A}_i)>Best$}\
        {\quad \quad $Best=f(\boldsymbol{A}_i)$;}
        
        {\Indm \quad \quad \quad ~$\boldsymbol{A}_B=\boldsymbol{A}_i$;}
 }
 
 Output $\boldsymbol{A}_B$ and $f(\boldsymbol{A}_B)$
\caption{Random Search}
 \label{algo:random-search}
\end{algorithm}

Next, we substitute the uniform random update in each iteration with \algo{glauber}, \algo{double-loop}, \algo{jerrum-glauber}, and the quantum-inspired classical algorithm by~\citet{Oh_2024} to enhance the above original random search. The pseudocode is presented in \algo{random-search-enhanced}. 

\begin{algorithm}[h]
	\KwIn{
    
    $\boldsymbol{A}$ : Input matrix of dimension $n\times n$;
    
$k$ : Dimension of submatrix ;

$f$ : Objective function ;

$N:$ Number of iterations ;}
	\KwOut{The submatrix $\boldsymbol{A}_B$ with the largest $f(\boldsymbol{A}_B)$; the function value $f(\boldsymbol{A}_B)$}
\vspace{1em}
        Initialize $i=0$, $Best=0$;
        
        Initialize $\boldsymbol{A}_B=None$;

	\lWhile{$i < N$}\
    {
        \Indp Choose $\boldsymbol{A}_i$, a $k$-dimensional submatrix of $\boldsymbol{A}$ according to \algo{glauber} or \algo{double-loop} or \algo{jerrum-glauber}; 

        Calculate $f(\boldsymbol{A}_i)$;
        
        \lIf{$f(\boldsymbol{A}_i)>Best$}\
        {\quad \quad $Best=f(\boldsymbol{A}_i)$;}
        
        {\Indm \quad \quad \quad ~$\boldsymbol{A}_B=\boldsymbol{A}_i$;}
 }
 
 Output $\boldsymbol{A}_B$ and $f(\boldsymbol{A}_B)$
	\caption{Random Search Enhanced by Glauber Dynamics}
    \label{algo:random-search-enhanced}
\end{algorithm}

Instead of applying i.i.d.~samples in random search, we adopt a neighbor update strategy in simulated annealing (SA). In each iteration, we randomly generate an integer $m\in[0,k-1]$ and randomly choose $m$ of the former vertices to keep unchanged. Then we randomly sample $k-m$ vertices from the remaining $n-m$ vertices to form a new vertex set of size $k$, and update the set with probability assigned by a Metropolis filter. With the annealing parameter $0<\gamma<1$, the temperature approaches exponentially small as iteration increases and the update probability when $f(\boldsymbol{A}_R)<(\boldsymbol{A}_S)$ approaches $0$. For complex search space with large numbers of local optima metaheuristic SA is capable of approximating the global optimum, due to its probabilistic acceptance of worse solutions which enables exploration beyond local minima~\cite{doi:10.1126/science.220.4598.671}. The pseudocode is presented in \algo{simulated-annealing}.

\begin{algorithm}[h]

	\KwIn{
    
    $\boldsymbol{A}$ : Input matrix of dimension $n\times n$;
    
$k$ : Dimension of submatrix;

$f$ : Objective function;

$N:$ Number of iterations;

$t:$ Initial temperature;

$\gamma$: Annealing parameter;}
	\KwOut{The submatrix $\boldsymbol{A}_B$ with the largest $f(\boldsymbol{A}_B)$; the function value $f(\boldsymbol{A}_B)$}
\vspace{1em}
        Generate a uniformly random binary vector $S$ of length $n$ with $k$ entries being one;

        Get the $k$-dimensional submatrix $\boldsymbol{A}_S$ according to $S$;

        Initialize $\boldsymbol{A}_B=\boldsymbol{A}_S$ and $Best=f(\boldsymbol{A}_S)$;
        
        Initialize $i=0$;
        
	\lWhile{$i < N$}\
    {
        \Indp Randomly generate an integer $m \in[0, k-1]$;

        Randomly choose $m$ nonzero entries of $S$ and keep them unchanged, then resample $k-m$ nonzero entries from the remaining $n-m$ vertices to get $R$; 

        Get $\boldsymbol{A}_{R}$ according to $R$;
        
        Calculate $f(\boldsymbol{A}_R)$;
        
        \lIf {$f(\boldsymbol{A}_R)>Best$}\
        \Indp $Best=f(\boldsymbol{A}_R)$;
        
        $\boldsymbol{A}_B=\boldsymbol{A}_R$;
        
        \Indm Set $S=R$ with probability $\min\{1,\exp\left[\frac{f\left(\boldsymbol{A}_R\right)-f\left(\boldsymbol{A}_S\right)}{t}\right]\}$
        
        \Indm \quad ~ $t~*= \gamma$
       
	\caption{Simulated Annealing}
     \label{algo:simulated-annealing}
 }
 
 Output $\boldsymbol{A}_B$ and $f(\boldsymbol{A}_B)$
\end{algorithm}

Finally, we substitute the uniformly random update in each iteration with \algo{glauber}, \algo{double-loop}, \algo{jerrum-glauber} and the quantum-inspired classical algorithm by~\citet{Oh_2024} again to enhance the above original simulated annealing. Specifically, we pick the initial $k$ vertices by Glauber dynamics, and in each iteration we remove the intersection of the Glauber dynamics outcomes with the $m$ unchanged vertices. Additionally, we remove a proper number of vertices randomly from the rest of the Glauber dynamics outcomes such that the number of the remaining vertices plus that of unchanged vertices equals $k$. The pseudocode is presented in \algo{simulated-annealing-enhanced}.

\begin{algorithm}[h]

	\KwIn{
    
    $\boldsymbol{A}$ : Input matrix of dimension $n\times n$;
    
$k$ : Dimension of submatrix;

$f$ : Objective function;

$N:$ Number of iterations;

$t:$ Initial temperature;

$\gamma:$ Annealing parameter;}
	\KwOut{The submatrix $\boldsymbol{A}_B$ with the largest $f(\boldsymbol{A}_B)$; the function value $f(\boldsymbol{A}_B)$}
\vspace{1em}
        Generate a binary vector $S$ of length $n$ with $k$ entries being one according to \algo{glauber} or \algo{double-loop} or \algo{jerrum-glauber};

        Get a $k$-dimensional submatrix $\boldsymbol{A}_S$ according to $S$;

        Initialize $\boldsymbol{A}_B=\boldsymbol{A}_S$ and $Best=f(\boldsymbol{A}_S)$;
        
        Initialize $i=0$;

	\lWhile{$i < N$}\
    {
        \Indp Randomly generate an integer $m \in[0, k-1]$;

        Generate $s$ by randomly choosing $m$ nonzero entries of $S$ and set the rest entries as zeroes;

        Get $\boldsymbol{A}_{{R}}$ according to $R$;

        Generate $r$ with $k$ entries are nonzero according to \algo{glauber} or \algo{double-loop} or \algo{jerrum-glauber} and set $m$ entries as zeroes randomly under the condition that there are no overlap between nonzero elements of $s$ and $r$;

        $R=s|r$;
        
        Calculate $f(\boldsymbol{A}_R)$;
        
        \lIf {$f(\boldsymbol{A}_R)>Best$}\
        \Indp $Best=f(\boldsymbol{A}_R)$;
        
        $\boldsymbol{A}_B=\boldsymbol{A}_R$;
        
        \Indm Set $S=R$ with probability $\min\{1,\exp\left[\frac{f\left(\boldsymbol{A}_R\right)-f\left(\boldsymbol{A}_S\right)}{t}\right]\}$
        
        \Indm \quad ~ $t~*= \gamma$

 }
 
 Output $\boldsymbol{A}_B$ and $f(\boldsymbol{A}_B)$

 \caption{Simulated Annealing Enhanced by Glauber Dynamics}
    \label{algo:simulated-annealing-enhanced}
\end{algorithm}

\end{document}